\theoremstyle{plain}
\newtheorem*{theorem*}{Theorem}
\newtheorem{theorem}{Theorem}[section]
\newtheorem{proposition}[theorem]{Proposition}
\newtheorem{lemma}[theorem]{Lemma}
\theoremstyle{definition}
\theoremstyle{remark}
\declaretheorem[name=Lemma, sibling = theorem]{lem}
\definecolor{DarkGreen}{rgb}{0.1,0.5,0.1} 
\newcommand{\N}{\mathcal{N}}
\newcommand{\J}{\mathcal{J}}
\newcommand{\predJ}{\hat{\mathcal{J}}}
\newcommand{\predT}{\texttt{OPT}(\predJ)}
\newcommand{\trueT}{\texttt{OPT}(\J)}
\newcommand{\A}{\mathcal{A}}
\newcommand{\OPT}{\texttt{OPT}}
\newcommand{\tpe}{\textsc{TPE}}
\newcommand{\tpes}{\textsc{TPE-S}}
\newcommand{\tl}{t_{\lambda}}
\newcommand{\cst}{\text{cost}}
\newcommand{\K}{\mathcal{K}}
\newcommand{\Js}{\J^{\text{shift}}_{\geq \tl}}
\newcommand{\Ss}{S^{\text{shift}}}
\newcommand{\sft}{\frac{\eta^{\text{shift}}}{\beta(\predJ)} \cdot\frac{\texttt{OPT}(\predJ)}{|\predJ|}}
\begin{document}

\title{Energy-Efficient Scheduling with Predictions}

\author[1]{Eric Balkanski\thanks{ {\tt eb3224@columbia.edu}.}}
\author[1]{Noemie Perivier\thanks{ {\tt np2708@columbia.edu}.}}
\author[1]{Clifford Stein\thanks{ {\tt cliff@ieor.columbia.edu}.}}
\author[1]{Hao-Ting Wei\thanks{ {\tt hw2738@columbia.edu}.}}
\affil[1]{Department of Industrial Engineering and Operations Research, Columbia University}

\date{}
\maketitle

\begin{abstract}
An important goal of modern scheduling systems is to efficiently manage power usage. In energy-efficient scheduling,   the operating system controls the speed at which a machine is processing jobs with the dual objective of minimizing energy consumption and optimizing the quality of service cost of the resulting schedule. Since machine-learned predictions about future requests can often be learned from historical data, a recent line of work on learning-augmented algorithms aims to achieve improved performance guarantees by leveraging predictions.   In particular, for energy-efficient scheduling, \citet{BamasMRS20} and \citet{antoniadis2021novel} designed algorithms with predictions for the energy minimization with deadlines problem and achieved an improved competitive ratio when the prediction error is small while also maintaining worst-case bounds even when the prediction error is arbitrarily large.

In this paper, we consider a general setting for energy-efficient scheduling and provide a flexible learning-augmented algorithmic framework that takes as input an offline and an online algorithm for the desired energy-efficient scheduling problem. We show that, when the prediction error is small, this framework gives improved competitive ratios for many different energy-efficient scheduling problems, including energy minimization with deadlines, while also maintaining a bounded competitive ratio regardless of the prediction error. Finally, we empirically demonstrate that this framework achieves an improved performance on real and synthetic datasets.
\end{abstract}

\newpage

\section{Introduction}
\vspace{-.05cm}
Large data centers and machine learning models are important contributors to the growing impact that computing systems have on climate change. An important goal is thus to efficiently manage power usage in order to not only complete computing tasks in a timely manner but to also minimize energy consumption. In many operating systems, this tradeoff can be controlled by carefully scaling the speed at which jobs run. An extensive area of scheduling has studied such online (and offline) speed scaling problems (see, e.g., \cite{albers2010energy}). Since the speed of many processors is approximately the cube root of their power~\cite{mudge2001power,brooks2000power}, these works assume that the power of a processor is equal to speed to some power $\alpha \geq 1$, where $\alpha$ is thought of as being approximately $3$~\cite{yao1995scheduling,bansal2007speed} and the total energy consumption is power integrated over time.

Online energy-efficient scheduling algorithms have mostly been evaluated using competitive analysis, which provides robust guarantees that hold for any instance. However, since competitive analysis evaluates algorithms over worst-case instances, it can often be pessimistic. In particular, it ignores the fact that, in the context of scheduling,  future computation requests to computing systems can often be estimated from historical data. A recent line of work on algorithms with predictions aims to address this limitation by assuming that the algorithm designer is given access to machine-learned predictions about the input. In the context of online algorithms, where this line of work has been particularly active, the predictions are about future requests and the goal is to achieve improved competitive ratios when the predictions are accurate (consistency), while also maintaining the benefits of worst-case analysis with guarantees that hold even when the predictions are arbitrarily wrong (robustness).   

In this framework with predictions, \citet{BamasMRS20}  and \citet{antoniadis2021novel}  recently studied the energy minimization with deadlines problem, which is a classical setting for energy-efficient scheduling  (see, e.g., \cite{yao1995scheduling,bansal2007speed}). However, there are many scenarios where the jobs do not have strict deadlines and the goal is instead to minimize the job response time. In energy plus flow time minimization problems, which are another family of energy-efficient scheduling problems that have been extensively studied in the setting without predictions, the objective is to minimize a combination of the energy consumption and the flow time of the jobs, which is the difference between their release date and completion time (see, e.g., \cite{albers2007energy,bansal2010speed,bansal2013speed,andrew2009optimal}).

In this paper, we study a general energy-efficient scheduling problem that we augment with predictions.  This general problem includes both energy minimization with deadlines, which has been previously studied with predictions, and energy plus flow time minimization, which has not been previously studied with predictions, as well as many other variants and generalizations. In particular, the flow time problem with predictions introduces challenges that require novel learning-augmented scheduling algorithms (see Section~\ref{sec:framework} for additional discussion).

\vspace{-.1cm}
\subsection{Our results} 

\vspace{-.05cm}

An instance of the General Energy-efficient Scheduling (GES) problem is described by a collection $\J$ of $n$ jobs and an arbitrary quality of service cost function $F$. Each job  $(j, r_j, p_j) \in \J$ consists of a release time $r_j$, a processing time $p_j$, and an identifier $j$ (and potentially other parameters such as weights $v_j$ or deadlines $d_j$). A schedule $S$ is specified by the speeds $s_j(t)$ at which job $j$ is processed by the machine at time $t$. The goal is to find a schedule of minimum cost $E(S) + F(S, \J)$, where the energy consumption of a schedule is  $E(S) = \int_{t \geq 0} (\sum_j s_j(t))^{\alpha} \text{dt}$, for some constant $\alpha > 0$.  
In the general energy-efficient scheduling with predictions (GESP) problem, the algorithm is given at time $t=0$  a collection $\hat{\J}$ of $\hat{n}$ predicted jobs $(j, \hat{r}_j, \hat{p}_j)$, which is a similar prediction model as in \citep{BamasMRS20}. For all our results, we assume that the quality cost function $F$  is monotone and subadditive, which are two mild conditions that are satisfied for the problems with flow times and with deadlines.

\paragraph{Near-optimal consistency and bounded robustness.}
Our first goal is to design an algorithm for the GESP problem that achieves a good tradeoff between its consistency  (competitive ratio when the predictions are exactly correct) and robustness (competitive ratio when the predictions are arbitrarily wrong). 
Our first main result is that for any instance of the GES problem for which there exists a constant competitive algorithm and an optimal offline algorithm, there is an algorithm with predictions that is  $1 + \epsilon$ consistent and $O(1)$ robust for any constant $\epsilon \in (0,1]$ (Corollary~\ref{cor:consistency}). Since problems with the flow time and the problem with deadlines admit constant-competitive algorithms,  we achieve a consistency that is arbitrarily close to optimal while also maintaining constant robustness for these problems (see Table~\ref{tab:results} for a summary of problem-specific upper bounds). We complement this result by showing that there is a necessary trade-off between
consistency and robustness for the flow time problem: for any $\lambda > 0$, there is no $1+ \lambda$-consistent algorithm that is $o(\sqrt{1 + 1/\lambda})$-robust (Appendix~\ref{app:lowerbound-tradeoff}).

\begin{table*}[t]
\centering
\setlength{\belowcaptionskip}{-8pt}
\resizebox{\textwidth}{!}{
\begin{tabular}{|c|c|c|c|c|}
\hline
\multirow{2}{*}{Problem} & \multicolumn{2}{c|}{Previous results}  & \multicolumn{2}{c|}{Our results with predictions}\\ \cline{2-5} 
 &  \makecell{without \\ predictions} & \makecell{with \\ predictions}   & Consistency & Robustness\\ \hline \hline
\makecell{Flow time}  & 2 \cite{andrew2009optimal} & \multirow{3}{*}{None} &  \multirow{2}{*}{($1 + (2\lambda)^{\frac{1}{\alpha}})^{\alpha}$} & \multirow{2}{*}{$\frac{1 + 2^{2\alpha + 1}\lambda^{\frac{1}{\alpha}}}{\lambda}$ }  \\ \cline{1-2} 
\makecell{Fractional weighted \\flow time}  & 2 \cite{bansal2013speed}  & &  &\\ \cline{1-2} \cline{4-5}
\makecell{Integral weighted \\flow time}  &  $O((\frac{\alpha}{\log \alpha})^2)$~\cite{bansal2010speed} & & ($1 + ((\frac{\alpha}{\log \alpha})^2\lambda)^{\frac{1}{\alpha}})^{\alpha}$  & $\frac{1+(\frac{\alpha}{\log \alpha})^2 2^{2\alpha}\lambda^{\frac{1}{\alpha}}}{\lambda}$ \\ \hline
\hline
\makecell{Deadlines} & $e^{\alpha}$~\cite{bansal2007speed} & \cite{BamasMRS20, antoniadis2021novel} & $1+\lambda$ & $O(\frac{4^{\alpha^{2}}}{\lambda^{\alpha-1}})$ \\ \hline 

\end{tabular}}
\caption{\footnotesize{The best-known competitive ratios for $4$ energy-efficient scheduling problems, previous work studying these problems in the algorithms with predictions framework, and our consistency and robustness results, for any $\lambda \in (0,1]$. Note that when $\lambda$ is sufficiently small, the consistency improves over the best-known competitive ratios, while also maintaining bounded robustness. A detailed comparison with the results of \cite{BamasMRS20, antoniadis2021novel} for  deadlines is provided in Section~\ref{sec:related-work}.}}
\label{tab:results}
\end{table*}

\paragraph{The competitive ratio as a function of the prediction error.} The second main result is that our algorithm achieves a competitive ratio that smoothly interpolates from the $1+\epsilon$ consistency to the constant robustness as a function of the prediction error (Theorem~\ref{thm:competitive_ratio}).
To define the prediction error, we denote by $\J^+ = \J \cap \hat{\J}$  the jobs that are correctly predicted. 
We define the  prediction error $\eta = \frac{1}{\texttt{OPT}(\predJ)}\max\{\texttt{OPT}(\J\setminus \J^+), \texttt{OPT}(\predJ \setminus \J^+)\},$
which is the maximum between the optimal cost of scheduling the jobs $\J\setminus \J^+$ that arrived but were not predicted to arrive and the cost of the jobs $\predJ \setminus \J^+$ that were predicted to arrive but did not arrive.
This prediction error is upper bounded by the prediction error in \cite{BamasMRS20} for the problem with uniform deadlines.

\paragraph{Extension to jobs that are approximately predicted correctly.} We generalize our algorithm and the previous result to allow the correctly predicted jobs $\J^+$ to include jobs that are approximately predicted correctly, where the tolerable approximation is parameterized by a parameter chosen by the algorithm designer. 
The result for this extension requires an additional smoothness condition on the quality cost $F(S, \J)$ of a schedule. This condition is satisfied for the flow time problem, but not by the one with deadlines.\footnote{We note that \citet{BamasMRS20} give an alternate approach to transform an arbitrary  algorithm with predictions for the problem with uniform deadlines to an algorithm that allows small deviations in the release time of the jobs. This approach can be applied to our algorithm for the problem with uniform deadlines.} 



\paragraph{Experiments.} In Section~\ref{sec:experiments}, we show that when the prediction error is small, our algorithm empirically outperforms on both real and synthetic datasets the online algorithm that achieves the optimal competitive ratio for energy plus flow time minimization without predictions.

\subsection{Related work}
\label{sec:related-work}

\paragraph{Energy-efficient scheduling.}
Energy-efficient scheduling was initiated by~\citet{yao1995scheduling}, who studied the energy minimization with deadlines problem in both  offline and online settings. 
These offline and online algorithms were later improved in  \cite{bansal2011average, bansal2007speed}. Over the last two decades, energy-efficient scheduling has been extended to several other objective functions. In particular, \citet{albers2007energy}  proposed the problem of energy plus flow time minimization, which has been studied extensively (see, e.g.,  \cite{andrew2009optimal,bansal2009weighted,bansal2013speed,bansal2010speed, lam2008speed, bansal2008scheduling, BansalP14}). 

\paragraph{Learning-augmented algorithms.} Algorithms with predictions is a recent and extremely active area, especially in online algorithms, where it was initiated in \cite{DBLP:journals/jacm/LykourisV21, KPZ18}. Many different scheduling problems have been studied with predictions (see, e.g., ~\cite{LLMV20,MM20,im2021non, lindermayr2022permutation,balkanski2022scheduling, balkanski2022strategyproof,cho2022scheduling,lindermayr2023speed}).

\paragraph{Learning-augmented energy-efficient scheduling.}
Energy-efficient scheduling with predictions has been studied by \citet{antoniadis2021novel} and \citet{BamasMRS20}, who focus on the problem with deadlines, which is a special case of our setting. The prediction model in~\citet{BamasMRS20} is the closest to ours. For the problem with deadlines, the algorithm in
\cite{BamasMRS20} achieves a better consistency-robustness tradeoff than our algorithm, but their algorithm and prediction model do not extend to more general energy-efficient scheduling problems such as the flow time problem. In addition, 
the competitive ratio as a function of the prediction error is only obtained in \cite{BamasMRS20} in the case of uniform deadlines where the difference between the deadline and release date of a job is equal for all jobs (the authors mention that defining algorithms for general deadlines becomes complex and notationally heavy when aiming for bounds as a function of the prediction error).
Thanks to our algorithmic framework and definition of prediction error, our bound generalizes to the non-uniform deadlines without complicating our algorithm.
 \citet{antoniadis2021novel} propose a significantly different prediction model that requires an equal number of jobs in both the prediction $\predJ$ and true set of jobs $\J$. 
 Consequently, their results are incomparable to ours and those by~\citet{BamasMRS20}.


Finally, we note that \citet{lee2021online} also study energy scheduling with predictions, but with the different challenge of deciding if demand should be covered by local generators or the external grid.

\vspace{-.05cm}
\section{Preliminaries}
\label{sec:preliminaries}
\vspace{-.05cm}

In the \textit{General Energy-Efficient Scheduling (GES)}  problem, an instance is described by a collection $\J$ of $n$ jobs and a real-valued cost function $F(S, \J)$ that takes as input the instance $\J$ and a schedule $S$ for $\J$, and returns some quality evaluation of the schedule. Each job  $(j, r_j, p_j) \in \J$ consists of a release time $r_j$, a processing time $p_j$, and an identifier $j$ (and potentially other parameters such as weights $v_j$ and deadlines $d_j$). We often abuse notation and write $j\in \J$ instead of $(j, r_j, p_j) \in \J$.   For any time interval $I$, we let $\J_{I} = \{j\in \J: r_j \in I\}$ be the subset of jobs of $\J$ with release time in  $I$. For intervals $I = [0,t]$ or $I = [t,\infty]$, we write $\J_{\leq t}$ and $\J_{\geq t}$.

 A \textit{feasible schedule} for a set of jobs $\J$ is specified by $S = \{s_j(t)\}_{t\geq 0, j\in \J_{\leq t}}$, where $s(t) := \sum_{j\in \J_{\leq t}} s_j(t)$ is the speed at which the  machine runs at time $t$. Thus, $s_j(t)/s(t)$ is the fraction of the processing power of the machine allocated to job $j$ at time $t$.\footnote{For ease of notation, we allow the machine to split its processing power at every time step $t$ over multiple jobs. In practice, this is equivalent to partitioning time into arbitrarily small time periods and splitting each time period into smaller subperiods such that the machine is processing one job during each subperiod.} 
During a time interval $I$, there are $\int_{I} s_j(t) \text{dt}$ units of work for job $j$ that are completed and we let $S_{I}$ be the sub-schedule $\{s_j(t)\}_{t\in I, j\in \J_{I}}$. The cost function we consider is a combination of energy consumption and quality cost for the output schedule. The energy consumption incurred by a schedule is $E(S) = \int_{t\geq 0} s(t)^{\alpha}\text{dt}$, where $\alpha>1$ is a problem-dependent constant,
chosen so that the power at time $t$ is $s(t)^{\alpha}$.
To define the quality of a schedule, we introduce the \textit{work profile}  $W^S_j := \{w^S_j(t)\}_{t\geq r_j}$ of schedule $S$ for job $j$, where  $w^S_j(t) :=p_j - \int_{r_j}^t s_j(u) \text{du}$ is the amount of work for $j$ remaining at time $t$. 

We consider general objective functions of the form 
$\text{cost}(S, \J) =  E(S) + F(S, \J)$ and the goal is to compute a feasible schedule of minimum cost. $F(S, \J) = f((W^S_1, j_1), \ldots, (W^S_n, j_n))$ is an arbitrary 
 \textit{quality cost} function that is a function of the work profiles and the jobs' parameters. In the energy minimization with deadlines problem, $F(S, \J) = \infty$ if there is a job $j$ with completion time $c_j^S$  such that $c_j^S > d_j$, and $F(S, \J) = 0$ otherwise. In the energy plus flow time minimization problem, we have $F(S, \J) = \sum_{j \in \J} c_j^S - r_j$ (see Section~\ref{sec:various_objectives} for additional functions $F$).
A function $F(S, \J)$ is subadditive  if for all sets of jobs $\J_1$ and $\J_2$, we have $F(S, \J_1 \cup \J_2) \leq F(S, \J_1) + F(S, \J_2)$.  $F$ is monotone if for all sets of jobs $\J$ and schedules $S$ and $S'$ such that $w^S_j(t)\leq w^{S'}_j(t)$ for all $j \in \J$ and $t \geq r_j$, we have that $F(S,\J)\leq F(S',\J)$. We assume throughout the paper that $F$ is monotone subadditive, which holds for the deadlines and flow time problems. We let $S^*(\J)$ and $\OPT(\J) := \text{cost}(S^*(\J), \J)$  be an optimal offline schedule and the optimal objective value. 

\vspace{-.05cm}
\paragraph{The general energy-efficient scheduling with predictions problem.} We augment the GES problem with predictions
regarding future job arrivals and call this problem the General Energy-Efficient Scheduling with Predictions problem
(GESP). In this problem,  the algorithm is  given at time $t= 0$ a prediction $\predJ = \{(j, \hat{r}_j, \hat{p}_j)\}$ regarding the jobs  $\J = \{(j, r_j, p_j)\}$ that arrive online. An important feature of our prediction model is that the number of predicted jobs $|\predJ |$ can differ from the number of true jobs $|\J|$.

Next, we define a measure for the prediction error which generalizes the prediction error in~\cite{BamasMRS20} for the problem with uniform deadlines to any GES problem.
With $\J^+ = \J \cap \hat{\J}$ being the correctly predicted jobs, we define the prediction error as 
\vspace{-.05cm}
\[\eta(\J, \predJ) = \frac{1}{\texttt{OPT}(\predJ)}\max\{\texttt{OPT}(\J\setminus \J^+), \texttt{OPT}(\predJ \setminus \J^+)\},\]
\vspace{-.05cm}
where $\texttt{OPT}(\J\setminus \J^+)$ is the optimal cost of scheduling the true jobs $(j, r_j, p_j)$ such that either the prediction for $j$ was wrong or there was no prediction for $j$ and that $\texttt{OPT}(\predJ \setminus \J^+)$ is the optimal cost of scheduling the predicted jobs $(j, \hat{r}_j, \hat{p}_j)$ such that either the prediction for $j$ was wrong or  $j$ never arrived. The prediction error $\eta(\J, \predJ)$ is then the maximum of these costs, normalized by the optimal cost $\texttt{OPT}(\predJ)$ of scheduling the predicted jobs. 
We assume that $\predJ \neq \emptyset$ to ensure that $\eta(\J, \predJ)$ is well-defined. 
This prediction error is upper bounded by the prediction error $||w^{\text{true}} - w^{\text{pred}}||_{\alpha}^{\alpha}$ considered in \cite{BamasMRS20} for the problem with uniform deadlines, which we prove in Appendix~\ref{appe:discussion_model}. Here $w^{\text{true}}$ and $w^{\text{pred}}$ are the true and predicted workload at each time step $t$, i.e., the sum of the processing times of the jobs that arrive at $t$.


We note that in the above error model, a job $j$ is in the set of correctly predicted jobs $\J^+$ only if all the parameters of $j$ have been predicted exactly correctly. To overcome this limitation, we introduce in Section~\ref{sec:extension} a more general error model where some small deviations between the true and predicted parameters of a job $j$ are allowed for the correctly predicted jobs $\J^+$. In Appendix~\ref{appe:discussion_model}, we provide further discussion of this prediction model in comparison with \cite{BamasMRS20, antoniadis2021novel}.


\paragraph{Performance metrics.} The standard evaluation metrics for an online algorithm with predictions are its consistency, robustness, and competitive ratio as a function of the prediction error~\cite{mahdian2007allocating, DBLP:journals/jacm/LykourisV21}. The competitive ratio of an algorithm \textsc{ALG} as a function of a prediction error $\eta$ is  $$c(\eta) = \max_{\J, \predJ:\; \eta(\J, \predJ)\leq \eta} \frac{\cst_{\textsc{ALG}}(\J,\predJ)}{\OPT(\J)}.$$ $\textsc{ALG}$ is $\rho$-robust if for all $\eta \geq 0$, $c(\eta)\leq \rho$ (competitive ratio when the error is arbitrarily large) and  $\mu$-consistent if $c(0) \leq \mu$ (competitive ratio when the prediction is exactly correct). The competitive ratio of \textsc{ALG} is called smooth if it smoothly degrades from $\mu$ to $\rho$ as the prediction error $\eta$ grows.

\section{The Algorithm}
\label{sec:framework}

In this section, we develop a simple and general algorithmic framework for GESP and analyze the resulting consistency, robustness, and competitive ratio as a function of the prediction error. We first note that the algorithm with predictions from \cite{BamasMRS20} for the problem with deadlines does not easily generalize to some of the other problems that we consider, including the flow time problem (see Appendix~\ref{app:techniques_discussion} for additional discussion). A major difference is that our algorithm consists of two distinct phases.

\paragraph{Predictions cannot be completely trusted.} We also note that a first natural approach is to assume that the predictions are exactly correct and aim for
a $1$-consistent algorithm.  For the problem with deadlines, \citet{BamasMRS20} showed that there is no $1$-consistent algorithm with bounded robustness. In Appendix~\ref{app:no-1-cons}, we show that this approach would also fail for the flow time problem because the algorithm might start by processing jobs too fast and consume too much energy when trusting the predictions. More generally, in
Appendix~\ref{app:lowerbound-tradeoff}, we show that there is a necessary trade-off between
consistency and robustness for the flow time problem by proving that any $1+ \lambda$-consistent algorithm must be $O(\sqrt{1 + 1/\lambda})$-robust.

\subsection{Description of the algorithm} 

The algorithm, called \tpe,  takes as input an arbitrary quality of service cost function $F$, predictions $\predJ$, a confidence level $\lambda \in (0,1]$ in the predictions,  an offline algorithm \textsc{OfflineAlg} for  $F$,  and an online algorithm \textsc{OnlineAlg}  for  $F$ (without predictions).  We denote by $\texttt{OFF}(\J) := \text{cost}(\textsc{OfflineAlg}(\J), \J)$  the  objective value  achieved by $\textsc{OfflineAlg}$ over $\J$.

The algorithm proceeds in two phases. In the first phase (Lines 1-5), \tpe \ ignores the predictions and runs the auxiliary online algorithm \textsc{OnlineAlg} over the true jobs $\J_{\leq t}$ that have been released by time $t$. More precisely, during the first phase of the algorithm, $s_j(t)$ is the speed according to the online algorithm \textsc{OnlineAlg} for all jobs. The first phase ends at the time $\tl$ when the cost of the offline schedule computed by running \textsc{OfflineAlg} on jobs $\J_{\leq t}$ reaches the threshold value  $\lambda  \cdot \texttt{OFF}(\predJ)$. As we will detail in the analysis section, this first phase guarantees a bounded robustness since we ensure that the offline cost for the true jobs reaches some value before starting to trust the predictions (hence, \tpe \ does not initially `burn' too much energy  compared to the  optimal offline cost, unlike the example described in Appendix~\ref{app:no-1-cons}).

\begin{algorithm}[H]
\caption{Two-Phase Energy Efficient Scheduling (\textsc{TPE})}
\label{alg-no-ass}
{\bf Input:} predicted and true  sets of jobs $\predJ$ and $\J$, quality of cost function $F$, offline and online algorithms (without predictions) \textsc{OfflineAlg} and  \textsc{OnlineAlg} for problem $F$, confidence level $\lambda \in (0, 1]$.
\begin{algorithmic}[1]
\STATE \textbf{for} {$t \geq 0$} \textbf{do} 
\STATE \quad \textbf{if}  $\texttt{OFF}(\J_{\leq t}) >\lambda  \cdot \texttt{OFF}(\predJ)$ \textbf{then}
\STATE \quad \quad $\tl \leftarrow t$
\STATE \quad \quad \textbf{break}
\STATE \quad  $\{s_j(t)\}_{j\in \J_{\leq t}} \leftarrow \textsc{OnlineAlg}(\J_{\leq t})(t)$
\STATE $\{\hat{s}_j(t)\}_{t\geq \tl, j\in \predJ_{\geq\tl}}\leftarrow \textsc{OfflineAlg}(\predJ_{\geq\tl})$
\STATE \textbf{for} {$t \geq \tl$} \textbf{do}
\STATE \quad $\{s_j(t)\}_{j\in \J_{\leq t}\setminus \predJ_{\geq \tl}} \hspace{-.12cm} \leftarrow \hspace{-.03cm} \textsc{OnlineAlg}(\J_{\leq t}\setminus \predJ_{\geq \tl})(t)$
\STATE \quad $\{s_j(t)\}_{j\in \J_{[\tl, t]} \cap \predJ_{\geq\tl} }  \leftarrow \hspace{-.03cm} \{\hat{s}_j(t)\}_{j\in \J_{[\tl, t]} \cap \predJ_{\geq\tl} }$
\STATE \textbf{return} $\{s_j(t)\}_{t\geq 0, j\in \J}$
\end{algorithmic}
\end{algorithm}



In the second phase (Lines 6-9), \tpe \ starts leveraging the predictions. More precisely, \tpe \ needs to set the speeds for three different types of jobs: (1) the remaining jobs that were correctly predicted (i.e., $\J_{\geq \tl}\cap \predJ_{\geq \tl}$) (2) the remaining jobs that were not predicted (i.e., $ \J_{\geq \tl}\setminus \predJ_{\geq \tl}$) (3) the jobs that were not correctly scheduled in the first phase and still have work remaining at the switch point $\tl$ (which are a subset of $\J_{< \tl}$). To schedule these jobs, \tpe \ combines two different schedules. The first one is the offline schedule $\hat{S}:= \textsc{OfflineAlg}(\predJ_{\geq\tl})$ for the jobs $\predJ_{\geq \tl}$ that are predicted to arrive in the second phase. Each future job in the true set that was correctly predicted (i.e., $j\in \J_{[\tl, t]} \cap \predJ_{\geq\tl} $ on Line 9) will then be scheduled by following $\hat{S}$. The second schedule is an online schedule for the set of jobs $ \J\setminus\predJ_{\geq \tl} =\J_{<\tl} \cup  \J_{\geq \tl} \setminus \predJ_{\geq\tl}$, which includes all jobs that have not been completed during the first phase ($\subseteq \J_{<\tl}$) and the incorrectly predicted jobs that are released during the second phase ($\J_{[\tl, t]} \setminus \predJ_{\geq\tl}$). This online schedule is constructed by running $\textsc{OnlineAlg}$ on the set $\J\setminus\predJ_{\geq \tl}$ (Line 8). Note that the total speed of the machine at each time step is the sum of the speeds of these two online and offline schedules.

\subsection{Analysis of the algorithm}

 We analyze the competitive ratio of \tpe \ as a function of the prediction error $\eta$, from which the consistency and robustness bounds follow. Missing proofs are provided in Appendix~\ref{app:algorithm}. We separately bound the cost of the algorithm due to jobs in $\J_{< \tl}$, $\J_{\geq \tl}\setminus \predJ_{\geq \tl}$ and $\J_{\geq \tl} \cap \predJ_{\geq \tl}$. We do this by analyzing  the costs of  schedules $S^{on}:= \textsc{OnlineAlg}(\J\setminus\predJ_{\geq \tl} )$ and $\hat{S}:=\textsc{OfflineAlg}(\predJ_{\geq\tl})$.  In the next lemma, we first analyze the cost of combining, i.e., summing,  two arbitrary schedules.

\begin{restatable}{lem}{lemunionoftwoschedule}
\label{lem:union-of-two-schedule}
Let $\J_1$ be a set of jobs and $S_1$ be a feasible schedule for $\J_1$,
let $\J_2$ be a set of jobs and $S_2$ be a feasible schedule for $\J_2$. Consider the schedule $S:= S_1+S_2$ for $ \J_1 \cup \J_2$ which, at each time $t$, runs the machine at total speed $s(t) = {s_1}(t)+ {s_2}(t)$ and processes each job $j\in \J_1$ at speed $s_{1,j}(t)$ and  each job $j\in \J_2$ at speed $s_{2,j}(t)$. 
Then, $\cst(S, \J_1\cup\J_2) \leq \left(\cst(S_1, \J_1)^{\frac{1}{\alpha}} + \cst(S_2, \J_2)^{\frac{1}{\alpha}}\right)^{\alpha}.$
\end{restatable}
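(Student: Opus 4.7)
The plan is to split $\cst(S,\J_1\cup\J_2) = E(S) + F(S,\J_1\cup\J_2)$ and bound the two parts separately, so that a single Minkowski-style combination is only invoked at the end. For the quality term, the key observation is that under the combined schedule $S$ every job $j\in \J_1$ is processed at exactly the same speed $s_{1,j}(t)$ as under $S_1$, so its work profile $W^S_j$ coincides with $W^{S_1}_j$; analogously for jobs in $\J_2$. Since $F(\,\cdot\,,\J_i)$ is defined as a function of the work profiles of the jobs in $\J_i$ only, this gives $F(S,\J_1)=F(S_1,\J_1)$ and $F(S,\J_2)=F(S_2,\J_2)$. Subadditivity of $F$ then yields $F(S,\J_1\cup\J_2)\le F(S_1,\J_1)+F(S_2,\J_2)$.

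For the energy term, writing $s(t)=s_1(t)+s_2(t)$ and applying Minkowski's inequality in $L^{\alpha}$ (valid since $\alpha\ge 1$) gives $E(S)^{1/\alpha} = \|s_1+s_2\|_{\alpha} \le \|s_1\|_{\alpha}+\|s_2\|_{\alpha} = E(S_1)^{1/\alpha}+E(S_2)^{1/\alpha}$. Combined with the quality bound, this produces the intermediate estimate
\[
\cst(S,\J_1\cup\J_2) \;\le\; \bigl(E(S_1)^{1/\alpha}+E(S_2)^{1/\alpha}\bigr)^{\alpha} + F(S_1,\J_1) + F(S_2,\J_2).
\]

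The step I expect to be the main obstacle is the last one: folding the two residual $F$-terms into the $\alpha$-th power to match the target $\bigl(\cst(S_1,\J_1)^{1/\alpha}+\cst(S_2,\J_2)^{1/\alpha}\bigr)^{\alpha}$. My plan is to set $u:=\cst(S_1,\J_1)^{1/\alpha}$, $v:=\cst(S_2,\J_2)^{1/\alpha}$, $x:=E(S_1)^{1/\alpha}$, $y:=E(S_2)^{1/\alpha}$, so that $u^{\alpha}=x^{\alpha}+F(S_1,\J_1)$, $v^{\alpha}=y^{\alpha}+F(S_2,\J_2)$, and in particular $u\ge x$ and $v\ge y$. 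Substituting $F(S_i,\J_i)=u^{\alpha}-x^{\alpha}$ (resp.\ $v^{\alpha}-y^{\alpha}$), the required inequality rearranges to
\[
(x+y)^{\alpha}-x^{\alpha}-y^{\alpha} \;\le\; (u+v)^{\alpha}-u^{\alpha}-v^{\alpha}.
\]
I will establish this by observing that the function $g(a,b):=(a+b)^{\alpha}-a^{\alpha}-b^{\alpha}$ is nondecreasing in each coordinate on $[0,\infty)^2$ whenever $\alpha\ge 1$, since $\partial_a g=\alpha\bigl((a+b)^{\alpha-1}-a^{\alpha-1}\bigr)\ge 0$ and symmetrically in $b$; applying this monotonicity to $(x,y)\le(u,v)$ then closes the argument.
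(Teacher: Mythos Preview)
Your proof is correct and follows the same high-level structure as the paper: split into the quality part and the energy part, use subadditivity of $F$ together with the preservation of work profiles for the first, and a Minkowski-type bound for the second. The difference is in the execution of the last two steps. The paper expands $(s_1(t)+s_2(t))^{\alpha}$ with the binomial theorem (so, implicitly, integer $\alpha$), applies H\"older's inequality term by term to obtain $E(S)\le \sum_{i=0}^{\alpha}\binom{\alpha}{i}E(S_1)^{i/\alpha}E(S_2)^{(\alpha-i)/\alpha}$, and then upgrades each cross factor $E(S_k)^{i/\alpha}$ to $\cst(S_k)^{i/\alpha}$ so that the sum collapses to the binomial expansion of the target. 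You instead invoke Minkowski's inequality directly for $E(S)^{1/\alpha}$ and then absorb the residual $F$-terms via the monotonicity of $g(a,b)=(a+b)^{\alpha}-a^{\alpha}-b^{\alpha}$ on $[0,\infty)^2$. Both routes implicitly use $F\ge 0$ (needed for $u\ge x$, $v\ge y$ in your notation, and for the upgrade $E(S_k)\le \cst(S_k)$ in the paper). Your argument is a bit cleaner and, as written, applies to all real $\alpha\ge 1$ without modification, whereas the paper's binomial expansion is tailored to integer $\alpha$.
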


We next upper bound the cost of the schedule output by \tpe \ as a function of the prediction error $\eta$, which we decompose into  $\eta_1 = \frac{\texttt{OPT}(\J\setminus \predJ)}{\predT}$ and $ \eta_2 = \frac{\texttt{OPT}(\predJ \setminus \J)}{\predT}$. The proof uses the previous lemma repeatedly, first to analyze the cost of the schedule $S^{on}:= \textsc{OnlineAlg}(\J\setminus\predJ_{\geq \tl} )$ for the set of jobs $\J\setminus\predJ_{\geq \tl}  = (\J_{< \tl})\cup (\J_{\geq \tl}\setminus \predJ_{\geq \tl}$), then to analyze the cost of the final schedule, which combines $S^{on}$ and $\hat{S}:=\textsc{OfflineAlg}(\predJ_{\geq\tl})$.

\begin{restatable}{lem}{lemupperboundschedule}
\label{lem:upper_bound_schedule}
Assume that \textsc{OfflineAlg} is $\gamma_{\text{off}}$-competitive and  that \textsc{OnlineAlg} is $\gamma_{\text{on}}$-competitive. Then, for all $\lambda \in (0,1]$, the schedule $S$ output by \tpe \ run with confidence parameter $\lambda$ satisfies $\cst(S,\J)\leq \predT\left(\gamma_{\text{off}}^{\frac{1}{\alpha}} +
\gamma_{\text{on}}^{\frac{1}{\alpha}}((\lambda\gamma_{\text{off}})^{\frac{1}{\alpha}} + \eta_1^{\frac{1}{\alpha}})\right)^{\alpha}.$
\end{restatable}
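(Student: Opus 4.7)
The plan is to write $S$ as the pointwise sum of two feasible schedules on disjoint job sets whose union is $\J$, then apply Lemma~\ref{lem:union-of-two-schedule}. Define $\J_1 := \J \setminus \predJ_{\geq \tl}$ and $\J_2 := \J \cap \predJ_{\geq \tl}$, so that $\J_1 \cup \J_2 = \J$. Let $S_1 := S^{on}$ be the output of $\textsc{OnlineAlg}(\J \setminus \predJ_{\geq \tl})$, which is a feasible schedule for $\J_1$, and let $S_2$ be the restriction of $\hat{S} := \textsc{OfflineAlg}(\predJ_{\geq \tl})$ to the jobs of $\J_2$, which is a feasible schedule for $\J_2$ since $\J_2 \subseteq \predJ_{\geq \tl}$ at the triple level.

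The first step is to verify that the schedule $S$ actually produced by \tpe\ coincides pointwise with $S_1 + S_2$. In Phase 2, Lines 8--9 literally assign the speeds of $S_1$ and $S_2$. In Phase 1, \tpe\ uses $\textsc{OnlineAlg}(\J_{\leq t})$ rather than $\textsc{OnlineAlg}(\J_{\leq t} \setminus \predJ_{\geq \tl})$; the key observation is that any triple in $\J_{\leq t}$ has true release time $\leq t < \tl$ while every triple in $\predJ_{\geq \tl}$ has predicted release time $\geq \tl$, so $\J_{\leq t} \setminus \predJ_{\geq \tl} = \J_{\leq t}$ for all $t < \tl$ and the online algorithm sees the same instance under both. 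Hence $S = S_1 + S_2$, and Lemma~\ref{lem:union-of-two-schedule} gives
\[
\cst(S, \J) \leq \bigl(\cst(S_1, \J_1)^{1/\alpha} + \cst(S_2, \J_2)^{1/\alpha}\bigr)^{\alpha}.
\]

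Next I would bound each term in terms of $\predT$. For $S_2$, shrinking the job set from $\predJ_{\geq \tl}$ to $\J_2$ only removes speeds and, for the monotone quality costs of the flow time and deadline problems, does not increase $F$; combined with $\gamma_{\text{off}}$-competitiveness and set-monotonicity of $\OPT$, this gives $\cst(S_2, \J_2) \leq \cst(\hat{S}, \predJ_{\geq \tl}) \leq \gamma_{\text{off}}\,\OPT(\predJ_{\geq \tl}) \leq \gamma_{\text{off}}\,\predT$. For $S_1$, $\gamma_{\text{on}}$-competitiveness yields $\cst(S_1, \J_1) \leq \gamma_{\text{on}}\,\OPT(\J \setminus \predJ_{\geq \tl})$, so it remains to control $\OPT(\J \setminus \predJ_{\geq \tl})$. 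Decomposing $\J \setminus \predJ_{\geq \tl} = \J_{<\tl} \cup (\J_{\geq \tl} \setminus \predJ_{\geq \tl})$ and applying Lemma~\ref{lem:union-of-two-schedule} to the two corresponding optimal schedules,
\[
\OPT(\J\setminus\predJ_{\geq \tl}) \leq \bigl(\OPT(\J_{<\tl})^{1/\alpha} + \OPT(\J_{\geq \tl}\setminus\predJ_{\geq \tl})^{1/\alpha}\bigr)^{\alpha}.
\]
The Phase 1 termination rule yields $\OPT(\J_{<\tl}) \leq \texttt{OFF}(\J_{<\tl}) \leq \lambda\,\texttt{OFF}(\predJ) \leq \lambda\gamma_{\text{off}}\,\predT$, and since any true job with $r_j \geq \tl$ that is not in $\predJ_{\geq \tl}$ must be incorrectly predicted (else its predicted release time would equal $r_j \geq \tl$ and it would lie in $\predJ_{\geq \tl}$), we have $\J_{\geq \tl} \setminus \predJ_{\geq \tl} \subseteq \J \setminus \J^+$ and hence $\OPT(\J_{\geq \tl} \setminus \predJ_{\geq \tl}) \leq \OPT(\J \setminus \J^+) = \eta_1\,\predT$. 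Substituting these back into the first displayed inequality produces exactly the lemma's bound.

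The main obstacle is the first step: showing that the analytically convenient $S_1 + S_2$ equals the \tpe\ schedule, since Phase 1 nominally runs the online algorithm on $\J_{\leq t}$ rather than on $\J \setminus \predJ_{\geq \tl}$. Once this triple-level identity is established, the rest is a two-level application of Lemma~\ref{lem:union-of-two-schedule} combined with competitiveness of the two auxiliary algorithms, the definition of $\tl$, and the definition of $\eta_1$. A subtle point to watch is the use of set-monotonicity of $F$ and $\OPT$ when passing from $\J_2$ to $\predJ_{\geq \tl}$ and from $\predJ_{\geq \tl}$ to $\predJ$: this is not implied by subadditivity alone but does hold for all concrete cost functions considered.
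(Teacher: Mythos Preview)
Your proposal is correct and follows essentially the same route as the paper: decompose $\J$ into $\J\setminus\predJ_{\geq\tl}$ and $\J\cap\predJ_{\geq\tl}$, bound the online part via Lemma~\ref{lem:union-of-two-schedule} applied to $\J_{<\tl}$ and $\J_{\geq\tl}\setminus\predJ_{\geq\tl}$ together with the stopping rule and the definition of $\eta_1$, bound the offline part by $\gamma_{\text{off}}\predT$, and combine with a second application of Lemma~\ref{lem:union-of-two-schedule}. You are in fact more careful than the paper on two points it leaves implicit: you explicitly verify that the Phase~1 schedule coincides with $S^{on}$ restricted to $[0,\tl)$ (via the triple-level identity $\J_{\leq t}\cap\predJ_{\geq\tl}=\emptyset$ for $t<\tl$), and you correctly flag that passing from $\cst(S_2,\J_2)$ to $\cst(\hat S,\predJ_{\geq\tl})$ and from $\OPT(\predJ_{\geq\tl})$ to $\predT$ uses set-monotonicity of $F$ and $\OPT$, which is not a consequence of subadditivity alone but holds for all the concrete cost functions considered.
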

\begin{proof}
We start by upper bounding $\cst(S^{on}, \J\setminus\predJ_{\geq \tl})$. First, by the algorithm, we have that $\texttt{OFF}(\J_{< \tl})\leq \lambda \cdot \texttt{OFF}(\predJ)$. Since \textsc{OfflineAlg} is $\gamma_{\text{off}}$-competitive, we get
\begin{equation}
    \texttt{OPT}(\J_{< \tl})\leq \texttt{OFF}(\J_{< \tl})\nonumber \leq \lambda \cdot \texttt{OFF}(\predJ) \leq \lambda \gamma_{\text{off}} \cdot \predT.
\label{eq:OPT_OFF}
\end{equation}

We also have that $\texttt{OPT}(\J_{\geq \tl}\setminus \predJ_{\geq \tl})\leq \texttt{OPT}(\J\setminus \predJ) \leq \eta_1\predT$ where the first inequality is since $\J_{\geq \tl}\setminus \predJ_{\geq \tl}\subseteq \J\setminus \predJ$ and the second is by definition of $\eta_1$. Recall that $S^*(.)$ denotes the optimal offline schedule for the problem and consider the schedule $S' = S^{\star}(\J_{< \tl}) + S^{\star}(\J_{\geq \tl}\setminus \predJ_{\geq \tl})$ for $\J\setminus\predJ_{\geq \tl} = \J_{< \tl}\cup (\J_{\geq \tl}\setminus \predJ_{\geq \tl})$. We obtain that
\begin{align*}
   \texttt{OPT}( \J\setminus\predJ_{\geq \tl})  & \leq  
\cst(S', \J\setminus\predJ_{\geq \tl})  \leq  \left(\texttt{OPT}(\J_{< \tl})^{\frac{1}{\alpha}} + \texttt{OPT}(\J_{\geq \tl}\setminus \predJ_{\geq \tl})^{\frac{1}{\alpha}}\right)^{\alpha}  \\
  & \leq  \left((\lambda \gamma_{\text{off}}\predT)^{\frac{1}{\alpha}} + (\eta_1 \predT)^{\frac{1}{\alpha}})\right)^{\alpha}\\
 &  = \predT\left((\lambda \gamma_{\text{off}})^{\frac{1}{\alpha}} + \eta_1^{\frac{1}{\alpha}}\right)^{\alpha},
\end{align*}
where the second inequality is by Lemma~\ref{lem:union-of-two-schedule}. Since we assumed that \textsc{OnlineAlg} is $\gamma_{\text{on}}$-competitive, 
\begin{align*}
   \cst(S^{on}, \J\setminus\predJ_{\geq \tl}) 
   \leq  \gamma_{\text{on}} \cdot \texttt{OPT}( \J\setminus\predJ_{\geq \tl}) 
   \leq  \gamma_{\text{on}}\cdot \predT\left((\lambda \gamma_{\text{off}})^{\frac{1}{\alpha}} + \eta_1^{\frac{1}{\alpha}}\right)^{\alpha}. 
\end{align*}

We now bound the cost of schedule $S$. First, note that $
    \cst(\hat{S}, \predJ_{\geq \tl}) = \textsc{OfflineAlg}(\predJ_{\geq\tl}) 
    \leq 
\gamma_{\text{off}}\cdot\texttt{OPT}(\predJ_{\geq \tl}) \leq  
    \gamma_{\text{off}}\cdot\predT,$
where the first inequality is since \textsc{OfflineAlg} is $\gamma_{\text{off}}$-competitive
and the last one since $\predJ_{\geq \tl}\subseteq \predJ$.
Therefore, by applying again Lemma~\ref{lem:union-of-two-schedule}, we get:
\begin{align*}
   \cst(S, \J) &\leq \left(\cst(\hat{S}, \predJ_{\geq \tl})^{\frac{1}{\alpha}} + \cst(S^{on}, \J\setminus\predJ_{\geq \tl})^{\frac{1}{\alpha}}\right)^{\alpha}\\
    &\leq \Big((
    \gamma_{\text{off}}\cdot\predT)^{\frac{1}{\alpha}}  + \left(\gamma_{\text{on}}\cdot \predT\left((\lambda \gamma_{\text{off}})^{\frac{1}{\alpha}} + \eta_1^{\frac{1}{\alpha}}\right)^{\alpha}\right)^{\frac{1}{\alpha}}\Big)^{\alpha}\\
    &= \predT\left( \gamma_{\text{off}}^{\frac{1}{\alpha}} +
    \gamma_{\text{on}}^{\frac{1}{\alpha}}((\lambda \gamma_{\text{off}})^{\frac{1}{\alpha}} + \eta_1^{\frac{1}{\alpha}})\right)^{\alpha}.\qedhere
\end{align*}

\end{proof}

We next state a simple corollary of Lemma~\ref{lem:union-of-two-schedule}. 

\begin{restatable}{cor}{corJinterHat}
\label{cor:JinterHatJ-LB}
$\texttt{OPT}(\mathcal{J}\cap \predJ)\geq \left(1-\eta_2^{\frac{1}{\alpha}}\right)^{\alpha}\predT$, and, assuming that \textsc{OfflineAlg} is $\gamma_{\text{off}}$-competitive, we have: if $\texttt{OFF}(\J)\leq \lambda \texttt{OFF}(\predJ)$, then $\eta_2 \geq \left(1-(\lambda \gamma_{\text{off}})^{\frac{1}{\alpha}}\right)^{\alpha}$.
\end{restatable}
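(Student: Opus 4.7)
The plan is to derive both inequalities from a single application of Lemma~\ref{lem:union-of-two-schedule} to the partition $\predJ = (\J\cap\predJ)\cup(\predJ\setminus\J)$, and then, for the second inequality, combine the resulting lower bound on $\texttt{OPT}(\J\cap\predJ)$ with the hypothesis on $\texttt{OFF}(\J)$ and the $\gamma_{\text{off}}$-competitiveness of \textsc{OfflineAlg}.

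For the first inequality, I would first note that $\J^+ = \J\cap\predJ \subseteq \J$, so $\predJ\setminus \J = \predJ\setminus \J^+$, and therefore $\texttt{OPT}(\predJ\setminus \J) = \eta_2\,\predT$ by definition of $\eta_2$. Since $\J\cap\predJ$ and $\predJ\setminus\J$ are disjoint, concatenating the optimal schedules $S_1 := S^{*}(\J\cap\predJ)$ and $S_2 := S^{*}(\predJ\setminus\J)$ yields a feasible schedule $S_1+S_2$ for $\predJ$. Lemma~\ref{lem:union-of-two-schedule} then gives
\[
\predT \;\leq\; \cst(S_1+S_2, \predJ) \;\leq\; \left(\texttt{OPT}(\J\cap\predJ)^{1/\alpha} + (\eta_2\,\predT)^{1/\alpha}\right)^{\alpha}.
\]
Taking $\alpha$-th roots and rearranging produces $\texttt{OPT}(\J\cap\predJ)^{1/\alpha} \geq (1-\eta_2^{1/\alpha})\,\predT^{1/\alpha}$, and raising back to the $\alpha$-th power yields the claimed bound. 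If $\eta_2 > 1$ the claim is vacuous since $\texttt{OPT}(\J\cap\predJ)\ge 0$.

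For the second inequality, I would chain the previous lower bound with the upper bound
\[
\texttt{OPT}(\J\cap\predJ) \;\leq\; \texttt{OPT}(\J) \;\leq\; \texttt{OFF}(\J) \;\leq\; \lambda\,\texttt{OFF}(\predJ) \;\leq\; \lambda\,\gamma_{\text{off}}\,\predT,
\]
whose steps are respectively: subset-monotonicity of $\texttt{OPT}$ applied to $\J\cap\predJ \subseteq \J$ (restricting an optimal schedule for $\J$ to the jobs in $\J\cap\predJ$ by zeroing the speeds of the removed jobs weakly decreases the energy while leaving the work profiles of the retained jobs unchanged); optimality of $\texttt{OPT}(\J)$; the standing hypothesis of the corollary; and the $\gamma_{\text{off}}$-competitiveness of \textsc{OfflineAlg}. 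Combining with the first part gives $(1-\eta_2^{1/\alpha})^{\alpha}\,\predT \leq \lambda\,\gamma_{\text{off}}\,\predT$, and a final division by $\predT$, $\alpha$-th root, and rearrangement delivers $\eta_2 \geq (1-(\lambda\,\gamma_{\text{off}})^{1/\alpha})^{\alpha}$ (vacuous when $\lambda\,\gamma_{\text{off}} \geq 1$).

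I do not expect a substantive obstacle: the argument is essentially algebraic manipulation after the decomposition. The only points to watch are the edge cases where the $(1-x)^\alpha$ expressions become formally negative (handled by noting that the resulting claims are then vacuously true), and the brief justification of subset-monotonicity of $\texttt{OPT}$, which is in fact already used implicitly in the proof of Lemma~\ref{lem:upper_bound_schedule}.
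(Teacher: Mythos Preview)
Your proposal is correct and follows essentially the same route as the paper: the same decomposition $\predJ = (\J\cap\predJ)\cup(\predJ\setminus\J)$, the same invocation of Lemma~\ref{lem:union-of-two-schedule}, and the same chain $\texttt{OPT}(\J\cap\predJ)\leq\texttt{OPT}(\J)\leq\texttt{OFF}(\J)\leq\lambda\,\texttt{OFF}(\predJ)\leq\lambda\gamma_{\text{off}}\predT$. The only cosmetic differences are that the paper phrases both parts as proofs by contradiction whereas you argue directly, and the paper re-runs the Lemma~\ref{lem:union-of-two-schedule} argument for the second part rather than simply combining the first part with the upper bound as you do; your handling of the $\eta_2>1$ and $\lambda\gamma_{\text{off}}\geq 1$ edge cases is a nice addition that the paper leaves implicit.
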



We are ready to state the main result of this section, which is our upper bound on the competitive ratio of \tpe. 


\begin{restatable}{theorem}{maintheorem}
\label{thm:competitive_ratio}
For any $\lambda\in (0,1]$, \tpe \  with  a $\gamma_{\text{on}}$-competitive algorithm \textsc{OnlineAlg} and a $\gamma_{\text{off}}$-competitive offline
algorithm \textsc{OfflineAlg} achieves a competitive ratio of  
$$
\begin{cases}
    \gamma_{\text{on}} &  \text{if } \texttt{OFF}(\mathcal{J})\leq \lambda \texttt{OFF}(\predJ) \vspace{0.2cm} \\
     \frac{\left(\gamma_{\text{off}}^{\frac{1}{\alpha}}
     + \gamma_{\text{on}}^{\frac{1}{\alpha}}((\lambda \gamma_{\text{off}})^{\frac{1}{\alpha}} + \eta_1^{\frac{1}{\alpha}})\right)^{\alpha}}{\max\left\{\frac{\lambda}{\gamma_{\text{off}}}, \eta_1+ \left(1-\eta_2^{\frac{1}{\alpha}}\right)^{\alpha}\right\}} & \text{otherwise}.
     \end{cases}
$$
\end{restatable}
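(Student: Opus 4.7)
The argument splits on whether the guard $\texttt{OFF}(\J) \leq \lambda\,\texttt{OFF}(\predJ)$ holds. In the first case, monotonicity of $\texttt{OFF}$ in the job set forces $\texttt{OFF}(\J_{\leq t}) \leq \texttt{OFF}(\J) \leq \lambda\,\texttt{OFF}(\predJ)$ for every $t$, so the break on Line~2 of \tpe{} is never triggered, Phase~2 is never entered, and the output schedule coincides with $\textsc{OnlineAlg}(\J)$. The $\gamma_{\text{on}}$-competitiveness of $\textsc{OnlineAlg}$ then immediately yields $\cst(S,\J) \leq \gamma_{\text{on}}\,\OPT(\J)$, which matches the first branch of the theorem.

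In the other case, Lemma~\ref{lem:upper_bound_schedule} already gives $\cst(S,\J) \leq \predT\bigl(\gamma_{\text{off}}^{1/\alpha} + \gamma_{\text{on}}^{1/\alpha}((\lambda\gamma_{\text{off}})^{1/\alpha} + \eta_1^{1/\alpha})\bigr)^{\alpha}$, so the remaining task is to lower-bound $\OPT(\J)$ by $\predT$ times the $\max$ appearing in the theorem. For the $\lambda/\gamma_{\text{off}}$ term, the guard is violated, so $\texttt{OFF}(\J) > \lambda\,\texttt{OFF}(\predJ) \geq \lambda\,\predT$ (using $\texttt{OFF}(\predJ) \geq \OPT(\predJ)$), and $\gamma_{\text{off}}$-competitiveness of $\textsc{OfflineAlg}$ yields $\OPT(\J) \geq \texttt{OFF}(\J)/\gamma_{\text{off}} > \lambda\,\predT/\gamma_{\text{off}}$. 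For the other term, the plan is to partition $\J = (\J \cap \predJ) \sqcup (\J \setminus \predJ)$ and establish the superadditivity-style bound $\OPT(\J) \geq \OPT(\J \cap \predJ) + \OPT(\J \setminus \predJ)$; plugging in the lower bound $\OPT(\J \cap \predJ) \geq (1-\eta_2^{1/\alpha})^{\alpha}\predT$ from Corollary~\ref{cor:JinterHatJ-LB} together with the definitional identity $\OPT(\J \setminus \predJ) = \eta_1\,\predT$ then gives $\OPT(\J) \geq \predT\bigl(\eta_1 + (1-\eta_2^{1/\alpha})^\alpha\bigr)$. Dividing the Lemma~\ref{lem:upper_bound_schedule} upper bound by the maximum of these two lower-bound contributions produces the second branch.

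The main obstacle is establishing the superadditivity claim $\OPT(\J_1 \cup \J_2) \geq \OPT(\J_1) + \OPT(\J_2)$ for disjoint $\J_1, \J_2$, which is not immediate from monotone-subadditivity of $F$ (indeed, subadditivity of $F$ points the wrong way). My plan is to take an optimal schedule $S^{\star}$ for $\J_1 \cup \J_2$, split its per-job speeds into feasible schedules $S_1, S_2$ for $\J_1, \J_2$, and compare costs. Convexity of $x \mapsto x^{\alpha}$ for $\alpha \geq 1$ immediately yields $E(S_1) + E(S_2) \leq E(S^{\star})$, and for the quality cost one exploits that in the concrete problems of interest $F$ decomposes additively over disjoint job sets: flow time is literally a sum over jobs, and for deadlines any finite-cost schedule has $F = 0$. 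Both cases therefore give $F(S_1,\J_1) + F(S_2,\J_2) \leq F(S^{\star}, \J_1 \cup \J_2)$, so that $\OPT(\J_1) + \OPT(\J_2) \leq \cst(S_1,\J_1) + \cst(S_2,\J_2) \leq \cst(S^{\star}, \J_1 \cup \J_2) = \OPT(\J_1 \cup \J_2)$. Verifying this additive decomposition for each concrete quality cost of interest is the step that will require the most care.
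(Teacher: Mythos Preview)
Your proposal follows essentially the same route as the paper: split on whether the Phase~1 guard ever fails, invoke $\gamma_{\text{on}}$-competitiveness directly in the first case, and in the second case combine the Lemma~\ref{lem:upper_bound_schedule} upper bound with the two lower bounds on $\OPT(\J)$ (one from the guard failing, one from Corollary~\ref{cor:JinterHatJ-LB} plus the definition of $\eta_1$).

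The one place you are more careful than the paper is the step $\OPT(\J) \geq \OPT(\J\cap\predJ) + \OPT(\J\setminus\predJ)$, which the paper simply asserts. Your splitting argument (restrict the optimal schedule to each part, use $(a+b)^{\alpha}\geq a^{\alpha}+b^{\alpha}$ for the energy, and use additive decomposability of $F$ over disjoint job sets for the quality cost) is correct, and you are right that this last step uses more than monotone subadditivity of $F$: you need $F(S,\J_1)+F(S,\J_2)\leq F(S,\J_1\cup\J_2)$ for disjoint $\J_1,\J_2$, which holds with equality for the flow-time variants and trivially for deadlines, but is not implied by subadditivity alone. So your caveat is well placed; the paper implicitly relies on the same property.
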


The consistency and robustness immediately follow (for simplicity, we present the results in the case where \textsc{OfflineAlg} is optimal). Additional discussion on this competitive ratio is provided in Appendix~\ref{app:cr}.

\begin{restatable}{cor}{corconsistency}
\label{cor:consistency}
For any $\lambda\in (0,1)$, \tpe \  with  a $\gamma_{\text{on}}$-competitive algorithm \textsc{OnlineAlg} and an optimal offline
algorithm \textsc{OfflineAlg}\ is $ 1+ 
\gamma_{\text{on}} 2^{\alpha}\lambda^{\frac{1}{\alpha}}$ competitive if $\eta_1 = \eta_2 = 0$ (consistency) and 
 $\max\{\gamma_{\text{on}},\frac{1+\gamma_{\text{on}}2^{2\alpha}\lambda^{\frac{1}{\alpha}}}{\lambda}\}$-competitive for all  $\eta_1, \eta_2$ (robustness). In particular, for any constant $\epsilon > 0$, with  $\lambda = (\frac{\epsilon}{\gamma_{\text{on}} 2^{\alpha}})^{\alpha}$, \tpe \ is $1+\epsilon$-consistent and $O(1)$-robust.
\end{restatable}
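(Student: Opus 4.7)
The plan is to specialize Theorem~\ref{thm:competitive_ratio} to $\gamma_{\text{off}}=1$ and then to push an elementary case analysis through the resulting two-branch bound. With $\gamma_{\text{off}}=1$, the ``if'' branch gives ratio $\gamma_{\text{on}}$ directly, while the ``otherwise'' branch gives
\[
\frac{\bigl(1+\gamma_{\text{on}}^{1/\alpha}(\lambda^{1/\alpha}+\eta_1^{1/\alpha})\bigr)^{\alpha}}{\max\{\lambda,\;\eta_1+(1-\eta_2^{1/\alpha})^{\alpha}\}}.
\]
All of the work is therefore to bound this fraction in the two regimes relevant to the corollary: (i) the point $\eta_1=\eta_2=0$ for consistency, and (ii) uniformly over $(\eta_1,\eta_2)$ for robustness.

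For consistency, I first observe that when $\eta_1=\eta_2=0$ the true and predicted instances agree, so $\texttt{OFF}(\J)=\texttt{OFF}(\predJ)>\lambda\,\texttt{OFF}(\predJ)$ (using $\lambda<1$), placing us in the ``otherwise'' branch; the denominator collapses to $\max\{\lambda,1\}=1$ and the fraction becomes $(1+\gamma_{\text{on}}^{1/\alpha}\lambda^{1/\alpha})^{\alpha}$. Writing $v:=\gamma_{\text{on}}^{1/\alpha}\lambda^{1/\alpha}$, I bound $(1+v)^{\alpha}\leq 1+2^{\alpha}\gamma_{\text{on}}\lambda^{1/\alpha}$ by splitting on whether $v\leq 1$ (convexity of $x\mapsto(1+x)^{\alpha}$ gives $(1+v)^{\alpha}\leq 1+(2^{\alpha}-1)v$, and $\gamma_{\text{on}}^{1/\alpha}\leq\gamma_{\text{on}}$ since $\gamma_{\text{on}}\geq 1$) or $v>1$ (using $(1+v)^{\alpha}\leq(2v)^{\alpha}=2^{\alpha}\gamma_{\text{on}}\lambda$ and then $\lambda\leq\lambda^{1/\alpha}$ since $\lambda\leq 1$).

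For robustness, the ``if'' branch accounts for the $\gamma_{\text{on}}$ term of the max. For the ``otherwise'' branch I first drop $(1-\eta_2^{1/\alpha})^{\alpha}$ (it is nonnegative, and I interpret it as $0$ when $\eta_2>1$) to lower-bound the denominator by $\max\{\lambda,\eta_1\}$. The key step is to show that $\eta_1\mapsto N(\eta_1)/\max\{\lambda,\eta_1\}$ with $N(\eta_1):=(1+\gamma_{\text{on}}^{1/\alpha}(\lambda^{1/\alpha}+\eta_1^{1/\alpha}))^{\alpha}$ is maximized at $\eta_1=\lambda$: on $[0,\lambda]$ the denominator is the constant $\lambda$ and $N$ is increasing, while on $[\lambda,\infty)$ the substitution $t=\eta_1^{1/\alpha}$ rewrites the ratio as $\bigl((1+\gamma_{\text{on}}^{1/\alpha}\lambda^{1/\alpha})/t+\gamma_{\text{on}}^{1/\alpha}\bigr)^{\alpha}$, which is decreasing in $t$. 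Evaluating at $\eta_1=\lambda$ gives $(1+2\gamma_{\text{on}}^{1/\alpha}\lambda^{1/\alpha})^{\alpha}/\lambda$, and the same two-case split (now with $v=2\gamma_{\text{on}}^{1/\alpha}\lambda^{1/\alpha}$) bounds the numerator by $1+2^{2\alpha}\gamma_{\text{on}}\lambda^{1/\alpha}$, which is the stated expression.

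The closing ``in particular'' claim is then plug-and-check: substituting $\lambda=(\epsilon/(\gamma_{\text{on}} 2^{\alpha}))^{\alpha}$ yields $\gamma_{\text{on}} 2^{\alpha}\lambda^{1/\alpha}=\epsilon$, so consistency becomes $1+\epsilon$ and robustness becomes $(1+2^{\alpha}\epsilon)(\gamma_{\text{on}} 2^{\alpha}/\epsilon)^{\alpha}$, a constant in $\alpha$, $\gamma_{\text{on}}$, and $\epsilon$. I expect the main obstacle to be the robustness step: both the monotonicity argument identifying $\eta_1=\lambda$ as the worst case and the $v\leq 1$ versus $v>1$ split that produces the $2^{2\alpha}$ constant are elementary, but they have to be sequenced carefully so that the final constants line up exactly as in the statement.
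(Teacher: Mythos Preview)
Your proposal is correct and follows essentially the same route as the paper: specialize Theorem~\ref{thm:competitive_ratio} to $\gamma_{\text{off}}=1$, argue you land in the ``otherwise'' branch when $\eta_2=0$, and for robustness drop the $(1-\eta_2^{1/\alpha})^{\alpha}$ term and split on $\eta_1\le\lambda$ versus $\eta_1\ge\lambda$. The only cosmetic differences are that the paper invokes Corollary~\ref{cor:JinterHatJ-LB} (rather than your direct $\J=\predJ$ observation) to force the ``otherwise'' branch, and it bounds the $\eta_1\ge\lambda$ case first and then maximizes, whereas your substitution $t=\eta_1^{1/\alpha}$ shows the ratio is decreasing before bounding; both reach the identical evaluation at $\eta_1=\lambda$.
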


\subsection{Results for well-studied GES problems}
\label{sec:various_objectives}
We apply the general framework detailed in Section~\ref{sec:framework} to derive smooth, consistent and robust algorithms for a few classically studied objective functions.

\paragraph{Energy plus flow time minimization.} Recall that $c_S^j$ denote the completion time of job $j$. The quality cost function is defined as: $F(S,\J) = \sum_{j\in \J} (c_S^j - r_j)$, with total objective $\cst(S, \J) = F(S,\J) + E(S)$. By a direct application of Corollary~\ref{cor:consistency}, we get that for all $\lambda\in (0,1]$, Algorithm~\ref{alg-no-ass} run with the 2-competitive online algorithm from \cite{andrew2009optimal} and confidence parameter $\lambda$ is $(1
+2^{\frac{1}{\alpha}}\lambda^{\frac{1}{\alpha}})^{\alpha}$-consistent and $\frac{1
+2\cdot 2^{2\alpha}\lambda^{\frac{1}{\alpha}}}{\lambda}$-robust.

\paragraph{Energy plus fractional weighted flow time minimization.} In this setting, each job has a weight $v_j$.  The quality cost is $F(S,\J) = \sum_{j\in \J} v_j\int_{t\geq r_j}w_j^S(t)\text{dt}$. We can use as \textsc{OnlineAlg} the 2-competitive algorithm from \cite{bansal2013speed}.

\paragraph{Energy plus integral weighted flow time minimization.} In this setting, each job has a weight $v_j$. The quality cost function is defined as: $F(S,\J) = \sum_{j\in \J} v_j (c_S^j - r_j)$. We can use as \textsc{OnlineAlg} the $O((\alpha/\log \alpha)^2)$-competitive algorithm from \cite{bansal2010speed}.

\paragraph{Energy minimization with deadlines.} In this setting, there is also a deadline $d_j$ for the completion of each job. By writing the quality cost as $F(S, \J) = \sum_{j\in \J} \mathbf{\delta}_{c_S^j> d_j}$, where $\delta_{c_S^j> d_j} = +\infty$ if $c_S^j> d_j$ and $0$ otherwise,  the total objective can be written as $\cst(S,\J) = E(S) + F(S,\J)$. We can use as \textsc{OnlineAlg} the \textsc{Average Rate} heuristic \cite{yao1995scheduling} (which is $2^{\alpha}$-competitive for uniform deadlines \cite{BamasMRS20}). In particular, for uniform deadlines, and for all $\epsilon\in (0,1]$, by setting $\lambda = (\frac{\epsilon}{C2^{\alpha}})^{\alpha}$, we obtain a consistency of $(1+\epsilon)$ for a robustness factor of $O(4^{\alpha^2} / \epsilon^{\alpha-1})$.

\subsection{Discussion on the competitive ratio} 
\label{app:cr}

We assume in this section that \textsc{OfflineAlg} is optimal. Note that for small $\eta_1$ and $\eta_2$, the competitive ratio is upper bounded as $\frac{\left( 1
+ \gamma_{\text{on}}^{\frac{1}{\alpha}}(\lambda^{\frac{1}{\alpha}} + \eta_1^{\frac{1}{\alpha}})\right)^{\alpha}}{\eta_1+ \left(1-\eta_2^{\frac{1}{\alpha}}\right)^{\alpha}}$, which smoothly goes to $(1
+\gamma_{\text{on}}^{\frac{1}{\alpha}}\lambda^{\frac{1}{\alpha}})^{\alpha}$ (consistency case) when $\eta_1, \eta_2$ go to $0$. Moreover, our upper bound distinguishes the effect of two possible sources of errors on the algorithm: (1) when removing jobs from the prediction ($\eta_1=0$ and $\eta_2$ goes to $1$), the upper bound degrades monotonically to $O(\frac{1}{\lambda})$. (2) when adding jobs to the prediction ($\eta_2=0$ and $\eta_1$ goes to $+\infty$), the upper bound first degrades, then improves again, with an optimal asymptotic rate of $\gamma_{\text{on}}$. This is since our algorithm mostly follows the 
online algorithm when the cost of the additional jobs dominates.

\section{The Extension to Small Deviations}
\label{sec:extension}

Note that in the definition of the prediction error $\eta$, a job $j$ is considered to be correctly predicted only if $r_j = \hat{r}_j$ and $p_j = \hat{p}_j$.
In this extension, we consider that  a job is correctly predicted even if its release time and processing time are shifted by a small amount.  We also allow each job to have some weight $v_j>0$, that can be shifted as well. Assuming an additional smoothness condition on the quality cost function $F(., .)$, which is satisfied for the energy plus flow time minimization problem and its variants, we propose and analyze an algorithm that generalizes the algorithm from the previous section. 

The algorithm, called \tpes \ and formally described in Appendix~\ref{sec:app_extension}, takes the same input parameters as Algorithm \tpe, with some additional shift tolerance parameter $\eta^{\text{shift}}\in [0,1)$ that is chosen by the algorithm designer. 
Two main ideas are to artificially increase
the predicted processing time $\hat{p}_j$ of each job $j$ (because the
true processing time $p_j$ of job $j$ could be shifted and be
slightly larger than $\hat{p}_j$) and to introduce small delays for
the job speeds (because the true release time $r_j$ of some
jobs j could be shifted and be slightly later than $\hat{r}_j$). Details can be found in Appendix \ref{sec:app_extension}.

\section{Experiments}
\label{sec:experiments}

We empirically evaluate the performance of Algorithm \tpes \ on both synthetic and real datasets. Specifically, we consider the energy plus flow time minimization problem where $F(S, \J) = \sum_{j \in \J} c_j - r_j$ and consider unit-work jobs (i.e., $p_j=1$ for all $j$) and fix $\alpha = 3$.

\subsection{Experimental setting}

\paragraph{Benchmarks.} \textbf{\tpes} is Algorithm~\ref{alg-no-ass-shift} with the default setting $\lambda = 0.02$, $\eta^{\text{shift}} = 1$ and $\sigma = 0.4$, where $\sigma$ is a parameter that controls the level of prediction error, that we call the error parameter. \textbf{\textsc{2-competitive}} is the $2$-competitive online algorithm from \cite{andrew2009optimal} that sets the speed at each time $t$ to $n(t)^{\frac{1}{\alpha}}$, where $n(t)$ is the number of jobs with $r_j\leq t$ unfinished at time $t$, and uses the Shortest Remaining Processing Time rule. 

\paragraph{Data sets.} We consider two synthetic datasets and a real dataset. For the synthetic data, we first generate a predicted set of jobs $\predJ$ and we fix the value of the error parameter $\sigma>0$. To create the true set of jobs $\J$, we generate, for each job $j\in \predJ$, some error $err(j)$ sampled i.i.d. from $\N(0, \sigma)$. The true set of jobs is then defined as $\J= \{(j,\hat{r}_j + err(j)): j \in \predJ\}$, which is the set of all predicted jobs, shifted according to $\{err(j)\}$. Note that for all $j\in \J$, $j\in \J^{\text{shift}}$ only if $|\hat{r}_j - r_j| = |err(j)| <\sft$. Hence, a larger $\sigma>0$ corresponds to a larger prediction error $\eta^g$. For the first synthetic dataset, called the periodic dataset, the prediction is a set of $n = 300$ jobs, with $i^{th}$ job's arrival $r_i = i / \alpha$. For the second synthetic dataset, we generate the prediction by using a power-law distribution. More precisely, for each time $t\in \{1,\ldots, T\}$, where we fix $T=75$, the number of jobs' arrivals at time $t$ is set to $M(1-p(a))$, where $p(a)$ is sampled from a power law distribution of parameter $a$, and $M$ is some scaling parameter. In all experiments, we use the values $a = 100$, $M = 500$. 
 
We also evaluate the two algorithms on the College Message dataset from the SNAP database \cite{panzarasa2009patterns}, where the scheduler must process messages that arrive over $9$ days, each with between 300 and 500 messages. We first fix the error parameter $\sigma>0$, then, for each day, we define the true set $\J$ as the arrivals for this day, and we create the predictions $\predJ$ by adding some error $err(i)$ to the release time of each job $i$, where $err(i)$ is sampled i.i.d. from $\N(0, \sigma$).
 

 
\subsection{Experiment results}

 \begin{figure*}[t!]
 	\centering
  \setlength{\belowcaptionskip}{-8pt}
    \subfigure{\includegraphics[width=0.24\textwidth]{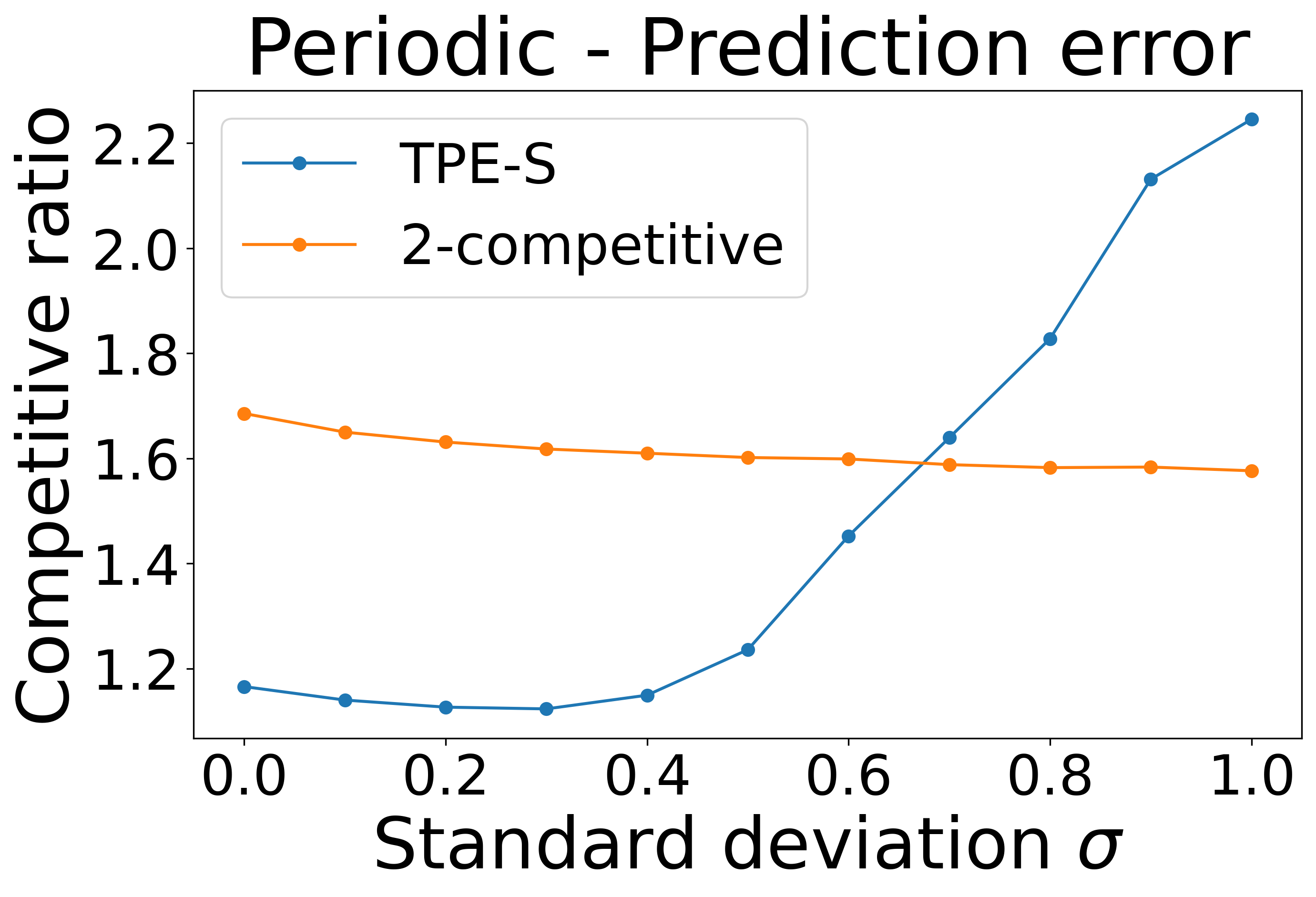}} 
    \subfigure{\includegraphics[width=0.24\textwidth]{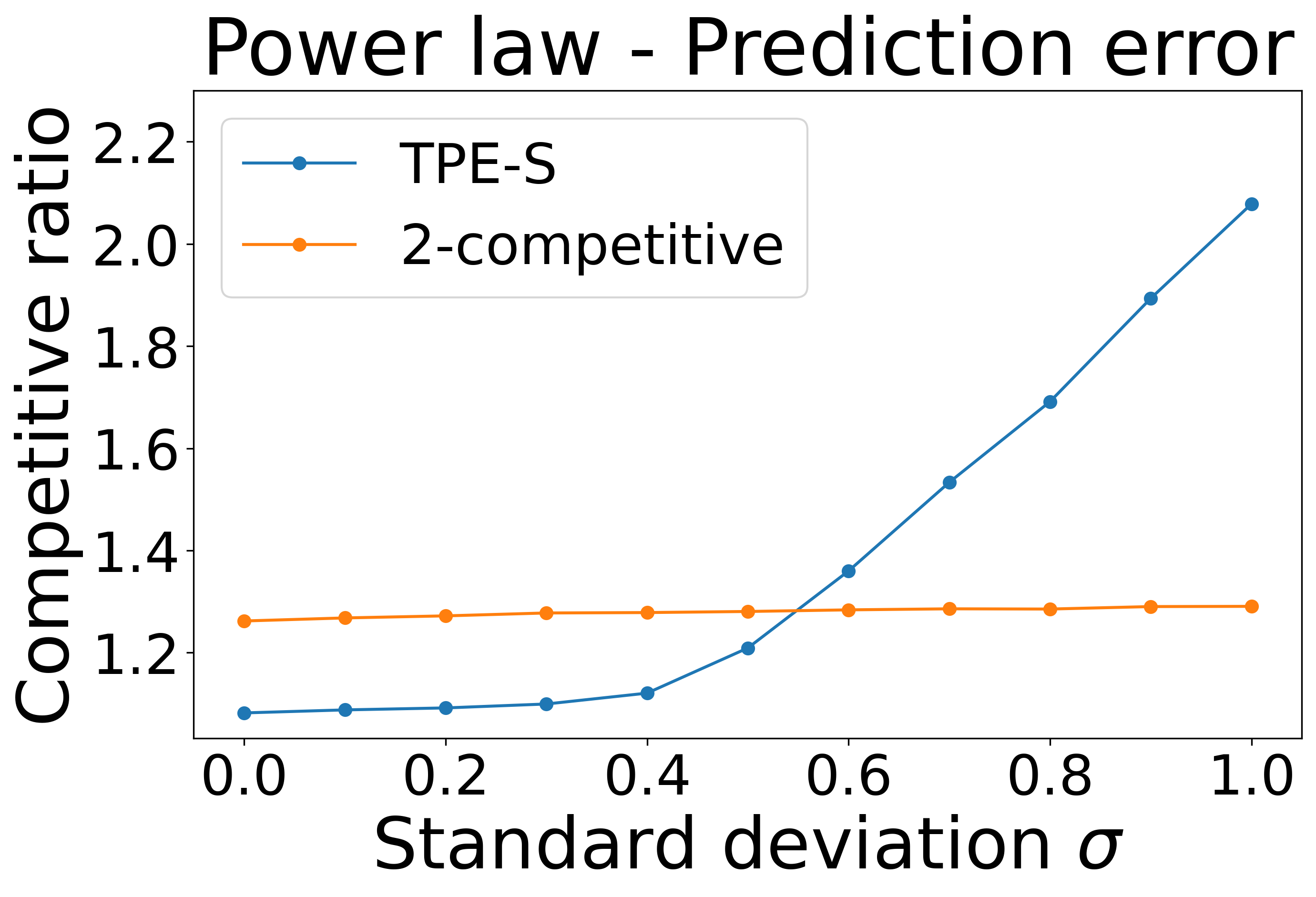}} \subfigure{\includegraphics[width=0.24\textwidth]{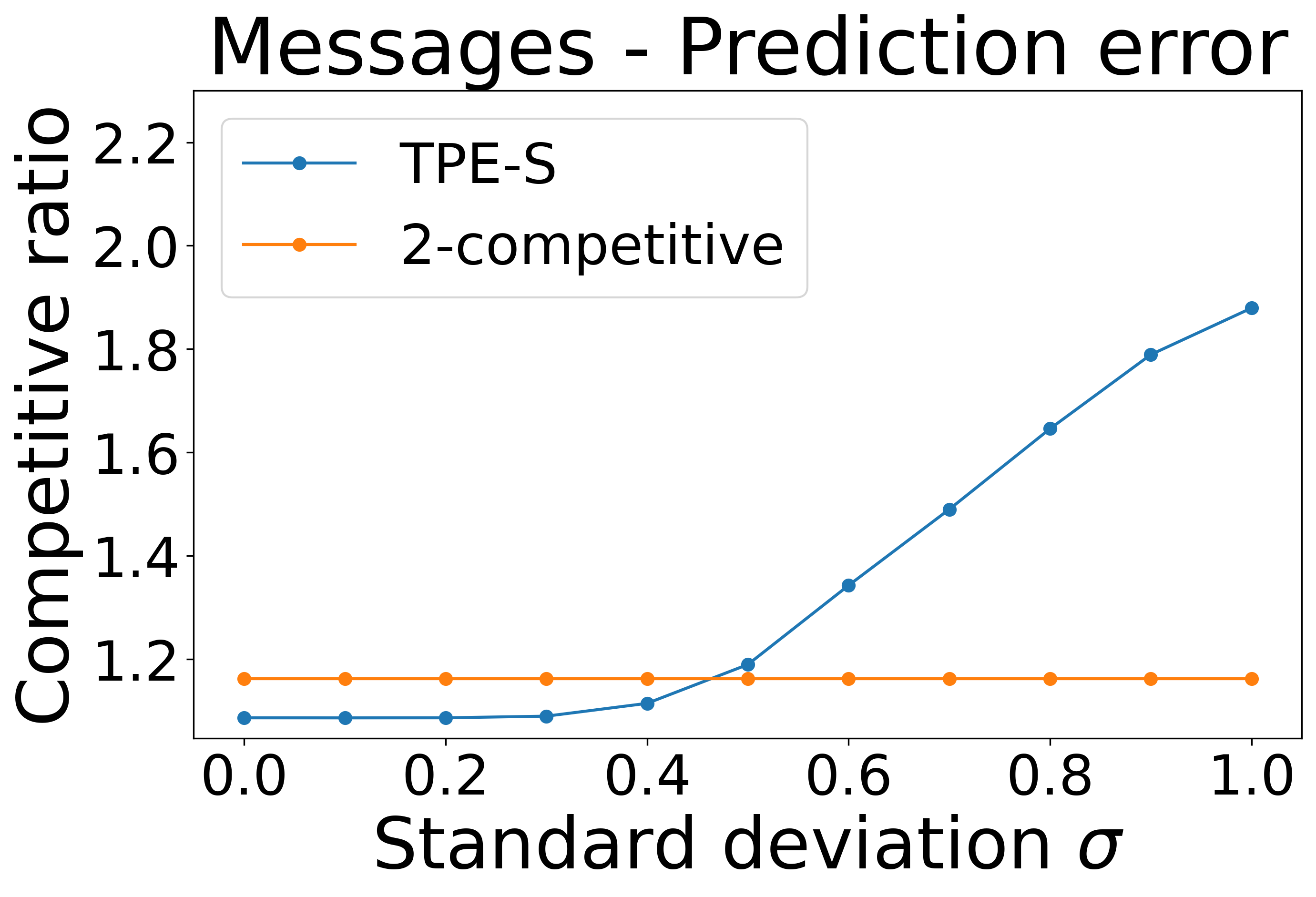}}
    \subfigure{\includegraphics[width=0.23\textwidth]{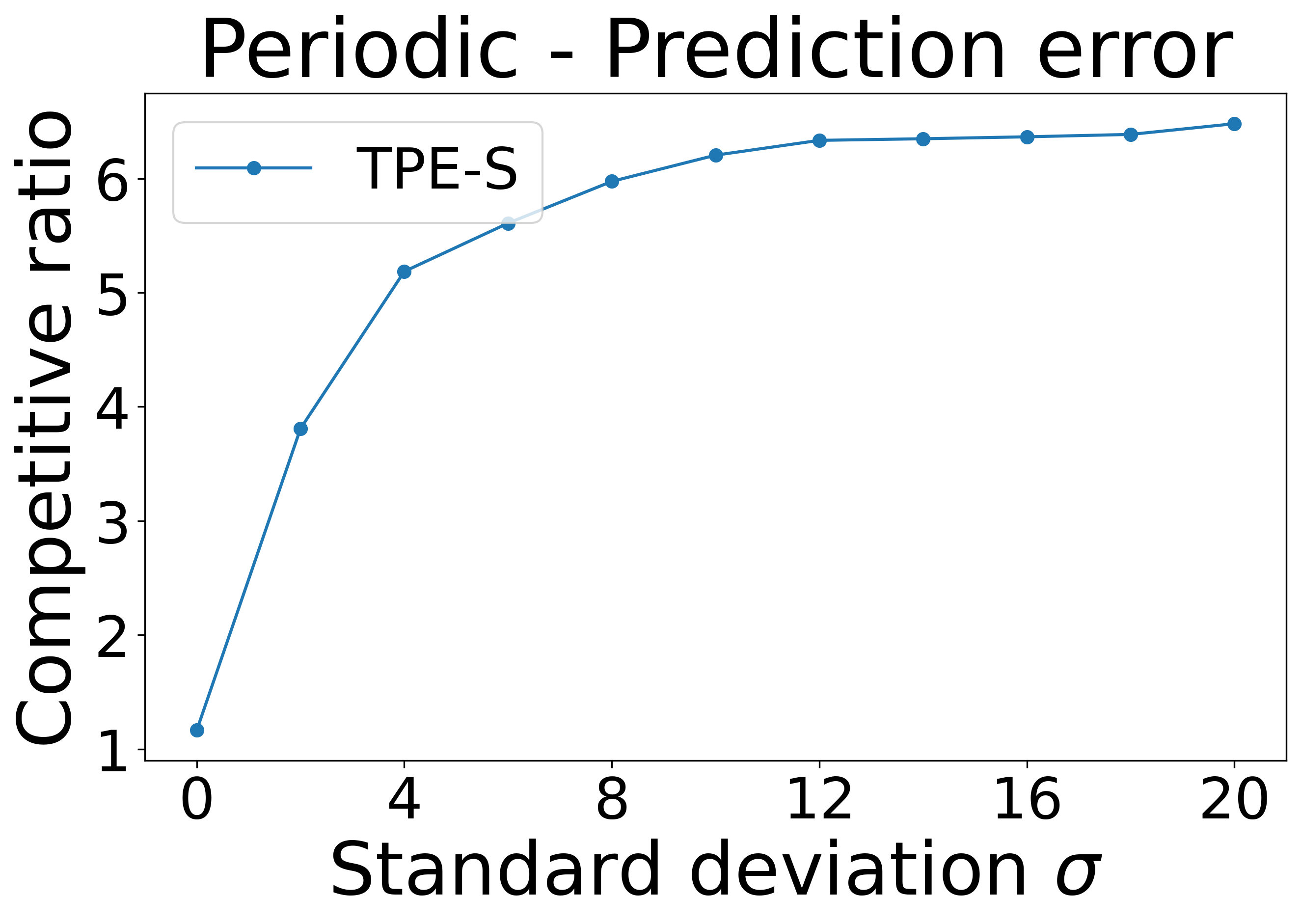} }
    \caption{\footnotesize{The competitive ratio achieved by our algorithm, \tpes, 
    and the benchmark algorithm, as a function of the error parameter $\sigma$ (from left-most to the second from the right), and the competitive ratio of \tpes \ for a larger range of $\sigma$, as a function of $\sigma$ (right-most).}}
    \label{fig:exp:1}  
 \end{figure*}

\begin{figure}[t!]
\centering
\setlength{\belowcaptionskip}{-8pt}
\subfigure{\includegraphics[width=0.23\textwidth]{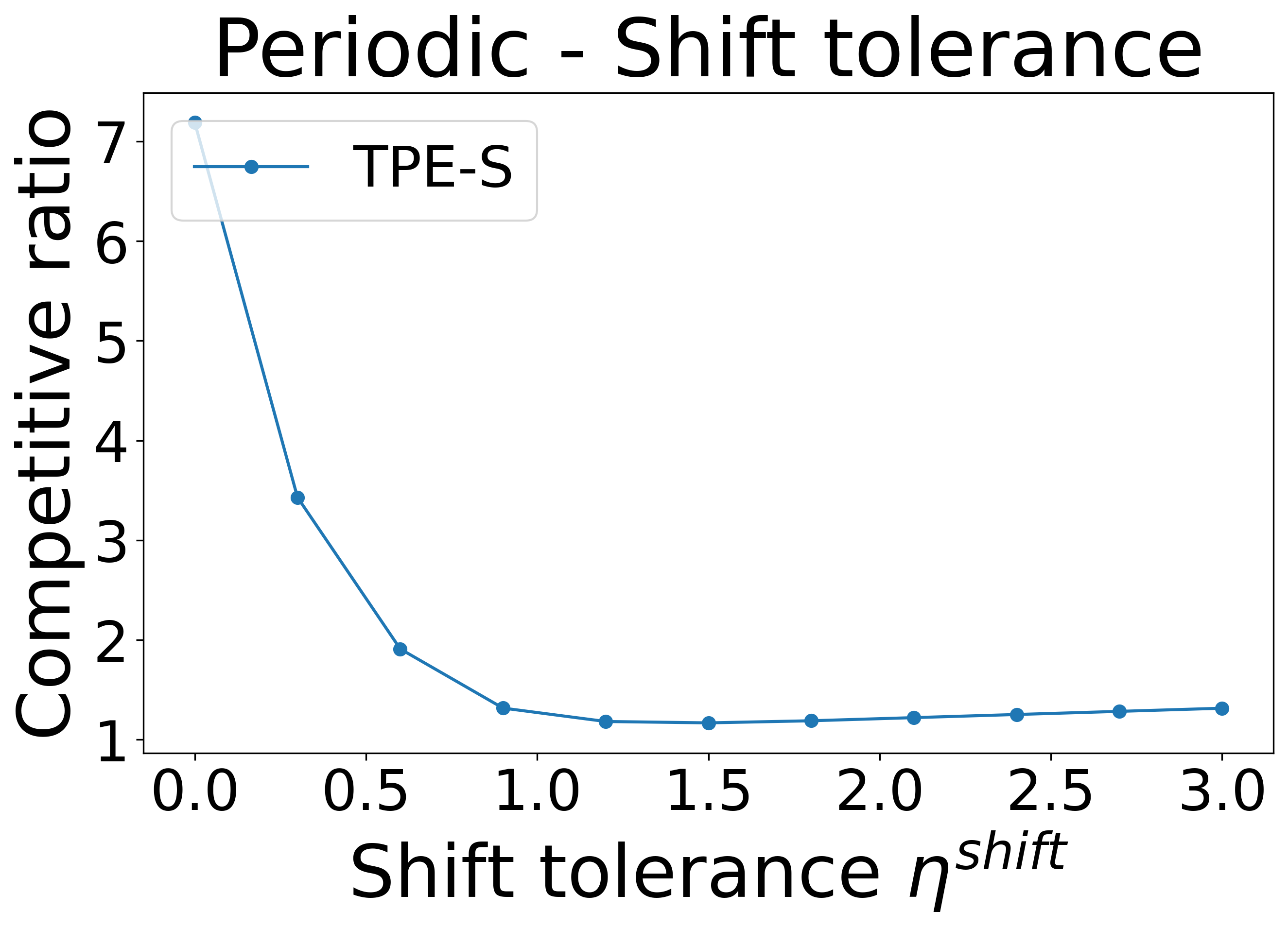}}
\subfigure{\includegraphics[width=0.26\textwidth]{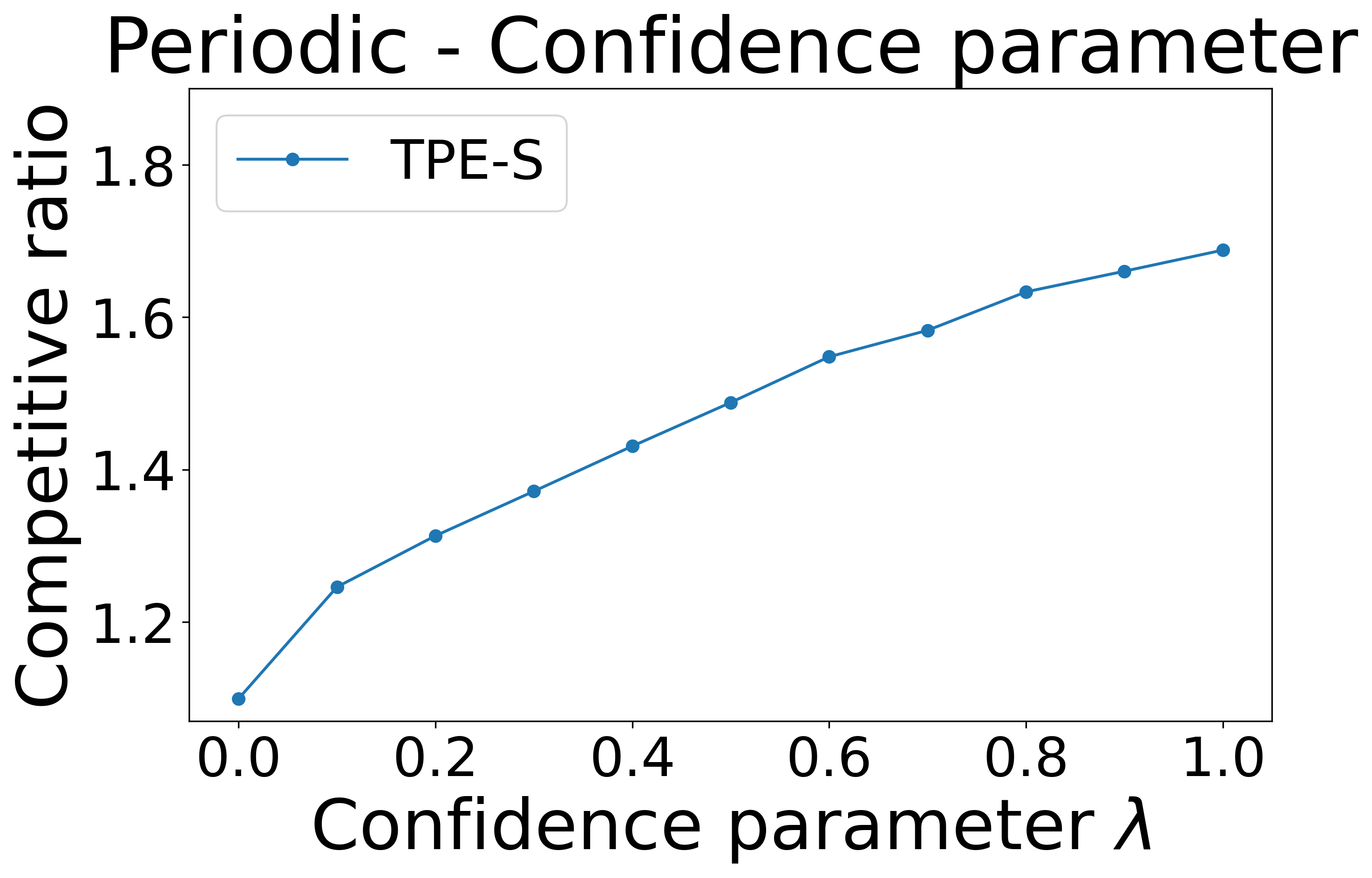}}
\caption{\footnotesize{The competitive ratio achieved by our algorithm, \tpes,  
as a function of the shift tolerance parameter $\eta^{\text{shift}}$ (left) and as a function of the confidence parameter $\lambda$ (right).}}\label{fig:exp:2}
\end{figure}

 For each of the synthetic datasets, the competitive ratio achieved by the different algorithms is averaged over $10$ instances generated i.i.d., and for the real dataset, it is averaged over the arrivals for each of the 9 days.
 
\paragraph{Experiment set 1.} We first evaluate the performance of the algorithms as a function of the error parameter $\sigma$. In Figure~\ref{fig:exp:1}, we observe that \tpes \ outperforms \textsc{2-competitive} \ when the error parameter is small. In the right-most figure of Figure~\ref{fig:exp:1}, the competitive ratio of \tpes \ plateaus when the value of $\sigma$ increases, which is consistent with our bounded robustness guarantee.


 \paragraph{Experiment set 2.}
In the second set of experiments, we study the impact of the parameters $\eta^{\text{shift}}$ and $\lambda$ of the algorithm for the periodic dataset (results for the other datasets can be found in Appendix~\ref{sec:appexperiments})  and fix $\sigma = 0.4$. In the left plot of Figure~\ref{fig:exp:2}, we observe the importance of allowing some shift in the predictions as the performance of our algorithms first rapidly improves as a function of $\eta^{\text{shift}}$ and then slowly deteriorates. The rapid improvement is because an increasing number of jobs are treated by the algorithm as being correctly predicted when $\eta^{\text{shift}}$  increases. Next, in the right plot, we observe that the competitive ratio deteriorates as a function of $\lambda$, which implies that the algorithm can completely skip the first phase that ignores the predictions and run the second phase that combines the offline and online schedules when the prediction error is not too large. Note, however, that a larger value of $\lambda$ leads to a better competitive ratio when the predictions are incorrect. Hence, there is a general trade-off here.

\section*{Acknowledgements} Eric Balkanski was supported by NSF grants CCF-2210502 and IIS-2147361.
Clifford Stein was supported in part by NSF grant  CCF-2218677,  ONR grant ONR-13533312, and by the Wai T. Chang Chair in Industrial Engineering and Operations Research.
Hao-Ting Wei was supported in part by NSF grant  CCF-2218677 and ONR grant ONR-13533312.

\newpage

\bibliographystyle{plainnat}
\bibliography{reference}



\newpage
\appendix
\onecolumn

\section*{Appendix}

\section{Consistent algorithms are not robust}
\label{app:consistency_robustness}

In this section, we show that any learning-augmented algorithm for the (GSSP) problem must incur some trade-off between robustness and consistency. Note that some impossibility results for general objective functions of the form $\text{cost}(S, \J) =  E(S) + F(S, \J)$ given in Section~\ref{sec:preliminaries} 
follow immediately 
from \cite{BamasMRS20}, since the problem of speed scaling with deadline constraints was studied there is a special case of (GSSP) (see Section~\ref{sec:various_objectives}). 

We prove here some impossibility results for a different family of objective functions, where the objective is to maximize the total energy plus flow time. This is one of the most widely studied objectives of the form in $\text{cost}(S, \J) =  E(S) + F(S, \J)$ given in Section~\ref{sec:preliminaries} (see for instance\cite{albers2007energy,andrew2009optimal,bansal2010speed,bansal2013speed}).
Here, for all $j\in \J$, we let $c_S^j$ denote the completion time of job $j$ while following schedule $S$. The quality cost function studied in the remainder of this section is defined as: $F(S,\J) = \sum_{j\in \J} (c_S^j - r_j)$ and the total objective is $\cst(S, \J) = F(S,\J) + E(S)$. We recall that for this problem, the best possible online algorithm is the $2$-competitive algorithm from \cite{andrew2009optimal}.

\subsection{Warm-up: no $1$-consistent algorithm is robust}
\label{app:no-1-cons}

We show in this section that no algorithm that is perfectly consistent (i.e., achieves an optimal cost when the prediction is totally correct) can have a bounded competitive ratio in the case the prediction is incorrect. To show this property, we build an instance where a lot of jobs are predicted, but only one of them arrives. To achieve consistency, any online algorithm must `burn' a lot of energy during the first few time steps; however, in the case where only one job arrives, the algorithm ends up having wasted too much energy. This illustrates the necessity of a trade-off between robustness and consistency.\\

\begin{proposition}
    For the objective of minimizing total energy plus (non-weighted) flow time, there is no algorithm that is $1$-consistent and $o(\sqrt{n})$-robust, even if all jobs have unit-size work and if $\J\subseteq \predJ$.
\end{proposition}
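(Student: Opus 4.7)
The plan is to exhibit a single adversarial pair $(\J, \predJ)$ forcing any $1$-consistent algorithm to incur competitive ratio $\Omega(\sqrt{n})$. I would take $\epsilon = n^{-1/\alpha}$ and let $\predJ$ be the periodic instance of $n$ unit-work jobs with release times $r_i = (i-1)\epsilon$ for $i=1,\dots,n$, while $\J = \{(1,0,1)\}$ contains only the first of these jobs, so that $\J \subseteq \predJ$. On $\J$ a single unit job can be processed at constant speed $\Theta(1)$, giving $\OPT(\J) = \Theta(1)$.

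The core of the argument is a careful analysis of $\OPT(\predJ)$ showing that the unique optimal schedule runs at constant speed $1/\epsilon = n^{1/\alpha}$ throughout the arrival window $[0, n\epsilon]$ and, in particular, processes job~$1$ to completion exactly within $[0,\epsilon)$. The reasoning combines a steady-state argument (any schedule whose average speed during the arrival window drops below the arrival rate $1/\epsilon$ accumulates a growing queue, so the flow-time term $\int n(t)\,dt$ dominates and makes the cost strictly larger) with strict convexity of $s\mapsto s^\alpha$ via Jensen's inequality (for a fixed amount of work done in any interval, energy is strictly minimized by constant speed). A local perturbation around the candidate optimum, writing the schedule as $1/\epsilon + \phi(t)$ with $\int_0^\epsilon \phi = -\delta$ so that $\delta$ units of job~$1$ are pushed into $[\epsilon, 2\epsilon)$, then shows that the first-order energy savings in $[0,\epsilon)$ are exactly cancelled by the first-order energy increase in $[\epsilon, 2\epsilon)$ while the second-order term of order $\delta^2/\epsilon^{\alpha-1}$ is strictly positive, with an additional flow-time surplus of order $\delta\epsilon$. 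This pins down the unique optimum.

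With that characterization in hand, the conclusion is an indistinguishability plus energy-counting argument. During $[0,\epsilon)$ only job~$1$ has been released in either scenario, so any online algorithm produces the same speed profile $s(\cdot)$ on $[0,\epsilon)$ under $\J$ as under $\predJ$. If the algorithm is $1$-consistent, its schedule on $\predJ$ achieves $\OPT(\predJ)$, which by the preceding paragraph forces $s(t) = 1/\epsilon$ on $[0,\epsilon)$. Evaluated on $\J$, this same speed profile already consumes energy at least $(1/\epsilon)^\alpha \cdot \epsilon = n^{(\alpha-1)/\alpha}$ during the initial interval, so the algorithm's total cost on $\J$ is $\Omega(n^{(\alpha-1)/\alpha})$ while $\OPT(\J) = \Theta(1)$. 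The resulting competitive ratio is $\Omega(n^{(\alpha-1)/\alpha})$, which is $\Omega(\sqrt{n})$ for $\alpha \geq 2$ (the practically relevant range).

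The main obstacle is the clean characterization of $\OPT(\predJ)$, and specifically ruling out the natural ``lazy'' strategy of doing little work on job~$1$ during $[0,\epsilon)$ (when only one job is visible) and catching up later. A priori this looks tempting because it saves initial energy; the perturbation argument therefore has to be quantitative enough to establish strict suboptimality of any deviation, not merely first-order stationarity. Once this is in place, strict convexity of $s\mapsto s^\alpha$ promotes stationarity to uniqueness of the optimum, and the indistinguishability reduction is straightforward.
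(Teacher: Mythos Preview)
Your route is genuinely different from the paper's: you spread the $n$ predicted jobs periodically, whereas the paper releases one job at $t=0$ and the remaining $n-1$ jobs in a single batch at $t=1/\sqrt{n}$ (fixing $\alpha=2$). The batch instance lets the paper invoke a known closed-form optimum from \cite{albers2007energy} (job $i$ at speed $\sqrt{n-i+1}$) and then observe directly that slowing down on the first job is strictly worse; no structural analysis of the optimum is needed beyond that citation.

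Your approach, by contrast, hinges on characterizing the optimum of the periodic instance from scratch, and here there is a genuine error. The claim that the unique optimal schedule runs at constant speed $1/\epsilon$ throughout $[0,n\epsilon]$ is false. For the last job, once no further arrivals remain, the single-job optimum processes it at the constant speed $(\alpha-1)^{-1/\alpha}=\Theta(1)$, not at $1/\epsilon=n^{1/\alpha}$; running the last job at $n^{1/\alpha}$ spends energy $n^{(\alpha-1)/\alpha}$ where $\Theta(1)$ suffices. Already for $n=2$, $\alpha=2$, $\epsilon=1/\sqrt{2}$, constant speed $\sqrt{2}$ costs $3\sqrt{2}\approx 4.24$, while processing job~1 at $\sqrt{2}$ and job~2 at $1$ costs $3/\sqrt{2}+2\approx 4.12$. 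So the optimum has a slow tail, and your ``steady-state plus Jensen'' argument, which is meant to pin down constant speed globally, cannot be correct as stated.

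What you actually need is only the weaker assertion that the optimum completes job~1 within $[0,\epsilon)$ at speed $1/\epsilon$, and this does appear to hold (and your perturbation calculation is consistent with it). But your derivation of it passes through the false global characterization rather than arguing for it directly, and a local perturbation around a non-optimum does not establish the global statement. If you want to keep the periodic instance, you would need a direct argument that any schedule leaving $\delta>0$ of job~1 unprocessed at time $\epsilon$ is strictly dominated---e.g., showing the resulting backlog can be absorbed at matching first-order energy cost while strictly increasing total flow time by $\Theta(\delta)$ propagated through the queue. Alternatively, switching to the paper's batch instance removes the obstacle entirely.
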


\begin{proof}
Set $\alpha=2$ and consider an instance $(\predJ, \J')$ where $\predJ$ contains $n$ jobs of unit-size work such that the first job arrives at time $t=0$ and the remaining $n-1$ jobs arrive at time $t = \frac{1}{\sqrt{n}}$, and $\J'$ contains only the job that arrives at time $t=0$.

By using results from \cite{albers2007energy}, the optimal offline schedule for $\predJ$ is to schedule each job $i\in [n]$ at speed $\sqrt{n-i+1}$. Moreover, processing the first job any slower leads to a strictly worse cost. Hence, any algorithm that is $1$-consistent (i.e, achieves an optimal competitive ratio when the realization is exactly $\predJ$) must process the first job at speed $s_1(t) = \sqrt{n}$ for all $t \in [0, \frac{1}{\sqrt{n}}]$. In this case, the total objective is at least $\sqrt{n}^2 \cdot 1/\sqrt{n} + 1/\sqrt{n} = \sqrt{n} + 1/\sqrt{n}$.

However, by using results from \cite{albers2007energy}, the optimal objective for $\J'$ is 2 (with the speed of the single job arriving at time $t=0$ being set to 1). Hence, in the case where the realization is $\J'$, any algorithm that schedules the first job at speed $s_1(t) = \sqrt{n}$ has a competitive ratio at least $\frac{\sqrt{n} + 1/\sqrt{n}}{2}$.

Therefore, any $1$-consistent algorithm must have a robustness factor of at least $\frac{\sqrt{n} + 1/\sqrt{n}}{2}$.
\end{proof}

\subsection{Consistency-robustness trade-off}
\label{app:lowerbound-tradeoff}

In this section, we quantify more precisely the necessary trade-off between robustness and consistency. More precisely, we prove that there is a constant $C>0$ such that for any $\lambda$ small enough, any algorithm that is at most $(1+\lambda)$ consistent must be at least $C\sqrt{\frac{1}{\lambda}+1}$ robust. Moreover, letting $n_{\leq t}(\J) = |\{j\in \J: r_j\leq t\}|$ be the number of jobs of $\J$ that arrived before time $t$, we show that for the following natural notion of error: 
\[
\tilde{\eta}(\J,\J') = \frac{1}{\max\{|\J|, |\J'|\}}\max_{t\geq 0}\{|n_{\leq t}(\J) - n_{\leq t}(\J')|\},
\]
which mimics the probability density function of the predicted and realized jobs, 
this property remains true even if we assume a small prediction error $\tilde{\eta}(\J,\predJ)$. Hence, one cannot obtain a smooth algorithm relatively to this notion of error. This motivates the introduction of the more refined notion of error from Section~\ref{sec:preliminaries}. More specifically, we show the following lemma.

\begin{restatable}{lem}{lemlowerbound}
\label{lem:lower_bound}
For the objective of minimizing total energy plus (non-weighted) flow time, there are $\lambda'\in (0,1]$ and $C>0$ such that for any $\epsilon>0$, there is $M\in \mathbb{N}$ such that for all $\lambda \leq \lambda'$ and $n\geq M$, there is an instance $(\predJ_{n,\lambda, \epsilon}, \J_{n,\lambda, \epsilon})$ such that $|\predJ_{n,\lambda, \epsilon}| = n$ and $\tilde{\eta}(\predJ_{n,\lambda, \epsilon}, \J_{n,\lambda, \epsilon})\leq \epsilon$, and such that for any algorithm $\mathcal{A}$, 
\begin{itemize}
    \item either $\cst_{\A}(\predJ = \predJ_{n,\lambda, \epsilon}, \J = \predJ_{n,\lambda, \epsilon})> (1+\lambda)\cdot \OPT(\predJ_{n,\lambda, \epsilon})$ \textbf{(large consistency factor)}
    \item or $\cst_{\A}(\predJ = \predJ_{n,\lambda, \epsilon}, \J = \J_{n,\lambda, \epsilon})\geq C\sqrt{\frac{1}{\lambda}+1}\cdot\OPT(\J_{n,\lambda, \epsilon})$ \textbf{(large robustness factor)}.
\end{itemize}
\end{restatable}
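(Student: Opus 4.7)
The plan is to lift the warm-up construction from Appendix~\ref{app:no-1-cons} to a parameterized family indexed by $\lambda$ and $\epsilon$. Concretely, I would let $K=\lceil 2/\lambda \rceil +1$ and $\delta = 1/\sqrt{K}$ and take the ``core'' of the instance to be one unit-size anchor job released at time $0$ together with $K-1$ unit-size ``decoy'' jobs released at time $\delta$ in $\predJ_{n,\lambda,\epsilon}$, while in $\J_{n,\lambda,\epsilon}$ we keep the anchor but drop the decoys. To reach $|\predJ_{n,\lambda,\epsilon}|=n$ and make $\tilde\eta$ small, I would add $n-K$ ``padding'' jobs shared by $\predJ_{n,\lambda,\epsilon}$ and $\J_{n,\lambda,\epsilon}$ with identical release times and with processing time $\mu$ chosen so small (e.g.\ $\mu = 1/n^2$) that their total optimal cost $2(n-K)\mu$ is $o(1)$ and does not affect the asymptotics. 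Since the only release-time disagreement is the $K-1$ decoys, one gets $\tilde\eta(\predJ_{n,\lambda,\epsilon},\J_{n,\lambda,\epsilon})=(K-1)/n\leq \epsilon$ as soon as $n\geq M:=\lceil (K-1)/\epsilon \rceil$.

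The next step is to extract, from the $(1+\lambda)$-consistency assumption, a quantitative lower bound on the average speed $s$ at which the algorithm processes the anchor during $[0,\delta]$. By the formulas derived in the warm-up, $\OPT(\predJ_{n,\lambda,\epsilon}) = \Theta(K^{3/2})$ and is achieved only when the anchor runs at speed $\sqrt{K}$ during $[0,\delta]$. I would then bound the excess cost of any feasible schedule as a convex function of the phase-1 anchor speed $s$ and the phase-1 work $S_1=\int_0^{\delta}s(t)\,dt$, using Jensen on the energy integral ($\int_0^\delta s(t)^2\,dt\geq S_1^2/\delta$), together with the fact that any remaining anchor work shifts every subsequent decoy's completion time and thus adds at least $(1-S_1)/\sqrt{K}$ to each of the $K-1$ decoy flow times. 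Equating this excess with the consistency slack $\lambda\cdot\OPT(\predJ_{n,\lambda,\epsilon})=\Theta(\lambda K^{3/2})$ and using $K=\Theta(1/\lambda)$ should pin $S_1$ (and hence the effective anchor speed) to be at least $\Omega(1)$, and, more importantly, the phase-1 energy $\int_0^\delta s(t)^2 dt$ to be $\Omega(\sqrt{K})$.

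For the robustness conclusion, I would use that on $\J_{n,\lambda,\epsilon}$ nothing arrives during $[0,\delta]$ beyond the anchor, so the algorithm's behavior on $[0,\delta]$ is identical to its behavior under $\predJ_{n,\lambda,\epsilon}$ and in particular its phase-1 energy is the same $\Omega(\sqrt{K})$ quantity. Since $\OPT(\J_{n,\lambda,\epsilon})=2+o(1)$ (anchor alone plus negligible padding), the ratio $\cst_{\A}(\predJ,\J)/\OPT(\J)$ is then $\Omega(\sqrt{K})=\Omega(\sqrt{1/\lambda+1})$, which yields the required $C\sqrt{1/\lambda+1}$ bound for a suitable universal constant $C>0$.

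The hardest step, and the one I would spend the most care on, is the consistency-to-speed reduction in the second paragraph. The subtle issue is that an algorithm can try to be slow in $[0,\delta]$ and ``catch up'' in phase 2 by processing the partially done anchor at the high phase-2 speed; a first-cut calculation seems to show that this catch-up only inflates the flow times of the decoys by $O(\sqrt{K})$, which is on the borderline of the $\lambda\cdot\OPT(\predJ_{n,\lambda,\epsilon})$ slack. Handling this requires either (i) a careful accounting using the exact SRPT schedule and the identity $\OPT(\predJ_{n,\lambda,\epsilon}) = \sum_{k=1}^K \sqrt k + 1/\sqrt K + \sum_{k=1}^{K-1}\sqrt k$ from the warm-up, in order to show that the catch-up flow inflation strictly exceeds the allowed slack by a constant factor when $K=\Theta(1/\lambda)$, or (ii) slightly modifying the construction (e.g.\ replacing the single decoy burst by a suitably chosen sequence of bursts that accumulate the consistency error) so that no feasible phase-2 recovery remains within the budget. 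Once this tight accounting is in place, the rest of the proof combines the lower bound on phase-1 energy with the trivial $\OPT(\J_{n,\lambda,\epsilon})=\Theta(1)$ to conclude.
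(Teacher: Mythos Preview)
Your construction with $K=\lceil 2/\lambda\rceil+1$ does not force any phase-1 work, and your fix~(i) cannot repair this. Let $S_1\in[0,1]$ be the anchor work done in $[0,\delta]$; using the exact phase-2 optimum (anchor at speed $\sqrt{K}$, then the $K-1$ decoys at speeds $\sqrt{K-1},\ldots,1$) one gets that the minimum cost on $\predJ$ over all schedules with phase-1 work $S_1$ equals
\[
S_1^2\sqrt{K}\;+\;\tfrac{1}{\sqrt{K}}\;+\;2(1-S_1)\sqrt{K}\;+\;2\sum_{i=1}^{K-1}\sqrt{i},
\]
so that $\OPT(\predJ)$ is attained at $S_1=1$ and the excess at general $S_1$ is exactly $\sqrt{K}(1-S_1)^2$. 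In particular the lazy algorithm ($S_1=0$, then optimal from $\delta$) has excess $\sqrt{K}$, while the consistency slack is $\lambda\cdot\OPT(\predJ)\approx\frac{4}{3}\lambda K^{3/2}$. With $K\approx 2/\lambda$ the slack is $\tfrac{8}{3}\sqrt{K}>\sqrt{K}$, so the lazy algorithm is $(1+\lambda)$-consistent yet has zero phase-1 energy, and on $\J$ it simply processes the anchor at speed~$1$ for cost $2+o(1)$. Your instance therefore admits an algorithm that is simultaneously $(1+\lambda)$-consistent and $O(1)$-robust, so no amount of ``careful accounting'' will extract a speed lower bound. The actual fix is not your (i) or (ii) but a smaller constant in $K$: with $K<\tfrac{3}{4\lambda}$ one gets $\tfrac{4}{3}\lambda K<1$, the inequality $\sqrt{K}(1-S_1)^2\le\tfrac{4}{3}\lambda K\sqrt{K}$ forces $S_1\ge 1-\sqrt{\tfrac{4}{3}\lambda K}>0$, and your phase-1 energy bound $\Omega(\sqrt{K})=\Omega(\sqrt{1/\lambda})$ goes through.

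The paper takes a structurally different route that avoids this constant-chasing entirely. Instead of one anchor against $\Theta(1/\lambda)$ decoys, it places $\frac{4}{3}\lambda\epsilon n$ anchor jobs at time~$0$, $\epsilon n$ decoys at time $t_A=\Theta(\lambda\sqrt{\epsilon n})$, and $n$ unit-size dummy jobs (spaced one apart) for padding. The dichotomy is: either at least $\lambda\epsilon n$ anchor units remain at $t_A$, in which case phase~2 has $\ge(1+\lambda)\epsilon n$ unit jobs and cost $\ge\frac{4}{3}((1+\lambda)\epsilon n)^{3/2}$, giving ratio $\ge(1+\lambda)^{3/2}/(1+\lambda+o(\lambda))>1+\lambda/4$; or at least $\frac{1}{3}\lambda\epsilon n$ anchor units were processed in a window of length $t_A$, which forces phase-1 energy $\Omega(\lambda(\epsilon n)^{3/2})$ and hence, against $\OPT(\J)=\Theta((\lambda\epsilon n)^{3/2})$, a robustness of $\Omega(1/\sqrt{\lambda})$. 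The many-anchors scaling is what makes the $3/2$-power gain $(1+\lambda)^{3/2}\approx 1+\tfrac{3}{2}\lambda$ beat the $1+\lambda$ slack without any tuning.
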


The rest of this section is dedicated to the proof of Lemma \ref{lem:lower_bound}.

We first describe our lower bound instance. In the remainder of this section, we set $\alpha = 2$.\\

\noindent\textbf{Lower bound instance.} Let $\lambda \in (0,1], n\in \mathbb{N}$ and $\epsilon>0$. We construct an instance $(\predJ_{n,\lambda, \epsilon}, \J_{n,\lambda, \epsilon})$ where the jobs in $\predJ_{n,\lambda, \epsilon}$ can be organized in three different groups.
\begin{enumerate}
    \item Group $A_{n,\lambda, \epsilon}$ is composed of $\frac{4}{3}\lambda\epsilon n$ jobs that all arrive at time $0$.
    \item Group $B_{n,\lambda, \epsilon}$ is composed of $\epsilon n$ jobs that all arrive at time $t_A:=\frac{4\lambda\epsilon n}{3\sqrt{\epsilon n (1 +\frac{4}{3}\lambda)}}$.
    \item Group $C_{n,\lambda, \epsilon}$ consists of $n$ dummy jobs, where, for some $t'>>0$, each job $j\in [n]$ arrives at time $t' + j$.
\end{enumerate}

Next, we define $\J_{n,\lambda, \epsilon}$ as the union of jobs in $A_{n,\lambda, \epsilon}$ and $C_{n,\lambda, \epsilon}$. Note that by construction, we have $
\tilde{\eta}(\predJ_{n,\lambda, \epsilon}, \J_{n,\lambda, \epsilon})\leq \epsilon$.\\

We now state and prove a few useful lemmas. 
\begin{lemma}
\label{lem:opt_cost_bounds}
Let $\K$ be a set of $n$ jobs that all arrive at some time $t\geq 0$. Then, we have
\[
\frac{4}{3}n^{3/2}\leq \OPT(\K) \leq \frac{4}{3}n^{3/2} + o(n^{3/2}).
\]
\end{lemma}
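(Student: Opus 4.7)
The plan is to prove the upper bound by exhibiting an explicit schedule and the lower bound by a Cauchy--Schwarz plus AM--GM argument. By translation invariance of the cost, we may assume the common release time is $t=0$. For the upper bound, I would fix any ordering of the $n$ unit jobs and process them sequentially: when exactly $k$ jobs (including the one currently in service) are still unfinished, run the machine at constant speed $\sqrt{k}$, so the current job completes after time $1/\sqrt{k}$. Since $\alpha = 2$, the energy spent in that sub-interval is $(\sqrt{k})^2 \cdot (1/\sqrt{k}) = \sqrt{k}$, for total energy $\sum_{k=1}^n \sqrt{k}$. The $i$-th completed job finishes at $\sum_{\ell=1}^i 1/\sqrt{n-\ell+1}$, and swapping the order of summation gives total flow time $\sum_{k=1}^n \sqrt{k}$ as well. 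Combining and using $\sum_{k=1}^n \sqrt{k} = \tfrac{2}{3}n^{3/2} + O(n^{1/2})$ (Euler--Maclaurin) yields the claimed upper bound.

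For the lower bound, fix any feasible schedule $S$ for $\K$ and let $c_1 \leq \cdots \leq c_n$ be the sorted completion times, $\delta_i = c_i - c_{i-1}$ with $c_0=0$, $W(t) = \int_0^t s(u)\,du$, and $\beta_i = W(c_i) - W(c_{i-1})$. Because at least $i$ unit jobs are completed by time $c_i$, the prefix sums $B_i = \sum_{j\leq i}\beta_j = W(c_i)$ satisfy $B_i \geq i$. The flow time equals $F(S,\K) = \sum_i c_i = \sum_i (n-i+1)\delta_i$, and Cauchy--Schwarz on each $[c_{i-1},c_i]$ gives $E(S) \geq \sum_i \beta_i^2/\delta_i$. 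Index-wise AM--GM, $\beta_i^2/\delta_i + (n-i+1)\delta_i \geq 2\beta_i\sqrt{n-i+1}$, then implies $\cst(S,\K) \geq 2\sum_{i=1}^n \beta_i\sqrt{n-i+1}$. An Abel summation rewrites this as $2\bigl(B_n + \sum_{i=1}^{n-1} B_i(\sqrt{n-i+1}-\sqrt{n-i})\bigr)$; since the increments $\sqrt{n-i+1}-\sqrt{n-i}$ are non-negative, substituting $B_i \geq i$ term by term yields $\cst(S,\K) \geq 2\sum_{k=1}^n \sqrt{k} \geq \tfrac{4}{3}n^{3/2}$, using $\sum_{k=1}^n \sqrt{k} \geq \int_0^n \sqrt{x}\,dx$ for the final step.

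The main obstacle is the Abel-summation step: one must resist the temptation to claim $\beta_i \geq 1$ pointwise (this is false in general, since work can be front-loaded, leaving $\beta_i < 1$ for later $i$), and instead exploit only the monotone prefix constraint $B_i \geq i$ together with the non-negativity of the increments of the weight sequence $\sqrt{n-i+1}$. Once the Abel expansion is written down, the rest reduces to routine convexity and integral-comparison estimates, giving the matching leading constant $\tfrac{4}{3}$ in both bounds.
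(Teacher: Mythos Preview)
Your argument is correct. The upper bound is exactly the paper's computation: the paper (citing \cite{albers2007energy}) takes the schedule that runs at speed $\sqrt{n-i+1}$ on job $i$, obtains total cost $2\sum_{k=1}^n\sqrt{k}$, and bounds this by integral comparison, just as you do.

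Where you genuinely diverge is the lower bound. The paper does not prove it from scratch; it simply invokes \cite{albers2007energy} to assert that the explicit schedule above is optimal, so the cost $2\sum_{k=1}^n\sqrt{k}$ is \emph{equal} to $\OPT(\K)$, and then uses $\sum_{k=1}^n\sqrt{k}\geq\int_0^n\sqrt{x}\,dx$. Your route is self-contained: Cauchy--Schwarz on each inter-completion interval converts energy into $\sum_i\beta_i^2/\delta_i$, the termwise AM--GM pairs it with the flow-time decomposition $\sum_i(n-i+1)\delta_i$, and the Abel rearrangement lets you exploit only the prefix inequalities $B_i=W(c_i)\geq i$ rather than the (false) pointwise $\beta_i\geq 1$. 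This reproduces the same bound $2\sum_{k=1}^n\sqrt{k}$ without ever identifying the optimal schedule. The paper's approach is shorter because it outsources the hard step, but yours is more elementary, makes the lemma independent of the cited structural result, and the Cauchy--Schwarz/AM--GM/Abel combination would adapt to non-unit processing times or other power exponents where the explicit optimum may be less clean. One small point worth stating explicitly in a write-up: when $\delta_i=0$ (ties among completion times) one has $\beta_i=0$ as well, and the corresponding term contributes $0$ on both sides, so the division causes no trouble.
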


\begin{proof}

By \cite{albers2007energy}, the optimal schedule is to run each job $i$ at speed $s_i = \sqrt{n-i+1}$.  The total cost is as follows:

\begin{align*}
    \cst(S^*(\K)) &= F(S^*(\K)) + E(S^*(\K))\\
    & = \sum_{i=1}^{n}\sum_{j=1}^{i}\frac{1}{\sqrt{n-j+1}} + \sum_{i=1}^{n}\frac{1}{\sqrt{n-i+1}}{\sqrt{n-i+1}}^2\\
    & = \sum_{j=1}^{n}\frac{1}{\sqrt{n-j+1}}\sum_{i = j}^{n}1 + \sum_{i=1}^{n}{\sqrt{n-i+1}}\\ 
    & = 2 \sum_{i=1}^{n}{\sqrt{n-i+1}}
\end{align*}

Hence we have 
\begin{equation*}
    2\int_{0}^{n} \sqrt{x} dx \le \cst(S^*(\K)) \le 2\int_{1}^{n+1} \sqrt{x} dx 
\end{equation*}
\begin{equation*}
    \Rightarrow \frac{4}{3}n^{3/2} \le \cst(S^*(\K)) \le \frac{4}{3}[(n+1)^{3/2}-1] = \frac{4}{3}[n^{3/2}+o(n^{3/2})]
\end{equation*}

\end{proof}

\begin{lemma}
\label{lem:opt_cost_dummy_jobs}
Let $\K = \{j_1,\ldots, j_{|\K|}\}$ be a set of $|\K|$ jobs such that for all $i \in [|\K|]$, $|r_{i+1}-r_i|\geq 1$. Then, we have
\[
 \OPT(\K) = 2|\K|.
\]
\end{lemma}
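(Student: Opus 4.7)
The plan is to show both inequalities: an upper bound by an explicit construction, and a matching lower bound via convexity.

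For the upper bound, I would exhibit a schedule $S$ of cost $2|\K|$. Since $|r_{i+1}-r_i|\geq 1$ for all $i$, the intervals $[r_i, r_i+1]$ are pairwise disjoint. Define $S$ by processing each job $j_i$ at constant speed $1$ on $[r_i, r_i+1]$ (assuming unit processing times, as in the proof of Lemma~\ref{lem:opt_cost_bounds} and the lower bound instance). Then the machine runs at total speed at most $1$ at every instant, contributing energy $1^2\cdot 1 = 1$ per job, and the flow time of each job is $1$. Summing over $|\K|$ jobs gives $\cst(S,\K) = 2|\K|$, so $\OPT(\K)\leq 2|\K|$.

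For the matching lower bound $\OPT(\K)\geq 2|\K|$, consider any feasible schedule $S$ and let $T_i := c_S^{j_i} - r_i > 0$ denote the flow time of $j_i$. Then $F(S,\K) = \sum_i T_i$. For the energy, I would use that $\alpha=2$ gives
\[
E(S) = \int_{t\geq 0} \Big(\sum_i s_{j_i}(t)\Big)^{2} \,\text{dt} \;\geq\; \sum_i \int_{t\geq 0} s_{j_i}(t)^{2}\,\text{dt},
\]
since $(\sum_i x_i)^2 \geq \sum_i x_i^2$ for nonnegative reals. For each fixed job $j_i$, $s_{j_i}(t)$ is supported in $[r_i, c_S^{j_i}]$, so by Jensen's inequality applied to the convex function $x\mapsto x^2$ on an interval of length $T_i$,
\[
\int_{r_i}^{c_S^{j_i}} s_{j_i}(t)^{2}\,\text{dt} \;\geq\; T_i\cdot\left(\frac{1}{T_i}\int_{r_i}^{c_S^{j_i}} s_{j_i}(t)\,\text{dt}\right)^{2} \;=\; T_i\cdot\frac{1}{T_i^{2}} \;=\; \frac{1}{T_i},
\]
using $\int_{r_i}^{c_S^{j_i}} s_{j_i}(t)\,\text{dt} = p_{j_i} = 1$.

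Combining, $\cst(S,\K) \geq \sum_i \big(T_i + 1/T_i\big) \geq 2|\K|$ by AM-GM on each summand. Taking the infimum over feasible $S$ yields $\OPT(\K)\geq 2|\K|$, and together with the upper bound we conclude $\OPT(\K) = 2|\K|$. I do not anticipate any real obstacle; the only subtlety is making the separation-of-energy step clean, and noting that the spacing hypothesis $|r_{i+1}-r_i|\geq 1$ is only used in the upper bound (the lower bound holds for any set of unit-work jobs).
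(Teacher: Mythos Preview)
Your proof is correct. The paper's own proof is a one-liner: it cites \cite{albers2007energy} for the fact that the optimal schedule runs each job at speed $1$, and then the value $2|\K|$ is immediate. Your argument reaches the same conclusion by a genuinely different and more self-contained route: the upper bound is the same explicit speed-$1$ schedule, but for the lower bound you bypass the structural characterization of the optimum entirely and instead use the chain $(\sum_i s_{j_i})^2 \ge \sum_i s_{j_i}^2$, Jensen on each job's speed profile, and AM--GM on $T_i + 1/T_i$. This buys you an elementary proof that does not depend on the cited result, and as you note, it also makes transparent that the spacing hypothesis is needed only for the upper bound---the inequality $\OPT(\K)\ge 2|\K|$ holds for \emph{any} collection of unit-work jobs when $\alpha=2$. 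The paper's approach, by contrast, is shorter but relies on knowing the exact form of the optimal schedule.
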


\begin{proof}
By using \cite{albers2007energy}, the optimal solution is to run each job at speed $1$. The result follows immediately.
\end{proof}

\begin{lemma}
\label{lem:upper_bound_opt_Jpredn}
Let $\lambda \in (0,1], n\in \mathbb{N}$. Then, the optimal cost for jobs in $\predJ_{n,\lambda, \epsilon}$ is upper bounded as follows:
\begin{align*}
   & \OPT(\predJ_{n,\lambda, \epsilon})\leq \frac{4}{3}(\epsilon n)^{3/2} (1 + \lambda + o(\lambda) + o(1)) \\
&\qquad\qquad\qquad\qquad\qquad\qquad\qquad\;\text{as $\lambda \rightarrow 0 \text{ (independently of $\epsilon$)},\; n \rightarrow +\infty \text{ (for a fixed $\epsilon$)}$.}
\end{align*}

\end{lemma}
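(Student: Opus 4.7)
The strategy is to construct an explicit feasible schedule for $\predJ_{n,\lambda,\epsilon} = A_{n,\lambda,\epsilon} \cup B_{n,\lambda,\epsilon} \cup C_{n,\lambda,\epsilon}$ that processes the three groups during pairwise \emph{disjoint} time intervals. Under this disjointness, the total cost of the schedule is simply the sum of the three individual group costs, since at each time $t$ at most one of the three sub-schedules runs the machine at nonzero speed and so no cross-term appears in the energy integral $\int s(t)^{\alpha}\text{dt}$. Any such construction upper bounds $\OPT(\predJ_{n,\lambda,\epsilon})$.

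The schedule I would use has the following form. I would process the $|A| = \tfrac{4}{3}\lambda\epsilon n$ unit-work jobs of $A_{n,\lambda,\epsilon}$ at constant speed $v := |A|/t_A$ on $[0,t_A]$ in arbitrary order; this is feasible since every job in $A$ is released at time $0$ and the total work $|A|$ matches the capacity $v\cdot t_A$. Starting at $t_A$, exactly when the jobs of $B_{n,\lambda,\epsilon}$ are released, I would run the optimal standalone schedule for $B$ from the proof of Lemma~\ref{lem:opt_cost_bounds}; it terminates in $\Theta(\sqrt{\epsilon n})$ time. Finally, by choosing $t'$ large enough to lie strictly after $B$ has finished, the widely-spaced jobs of $C_{n,\lambda,\epsilon}$ satisfy the hypothesis of Lemma~\ref{lem:opt_cost_dummy_jobs} and can each be processed at unit speed immediately upon arrival.

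The cost for $A$ equals $v^{\alpha} t_A + \sum_{i=1}^{|A|} i/v = |A|^2/t_A + (|A|+1)t_A/2$. Substituting $t_A = |A|/\sqrt{\epsilon n(1+\tfrac{4}{3}\lambda)}$ gives $|A|^2/t_A = \tfrac{4}{3}\lambda(\epsilon n)^{3/2}\sqrt{1+\tfrac{4}{3}\lambda}$, which a Taylor expansion in $\lambda$ rewrites as $\tfrac{4}{3}\lambda(\epsilon n)^{3/2}(1+O(\lambda))$; the flow-time contribution $(|A|+1)t_A/2$ is $\Theta(\lambda^2(\epsilon n)^{3/2})$ and is therefore absorbed into $o(\lambda(\epsilon n)^{3/2})$. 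By Lemma~\ref{lem:opt_cost_bounds} the cost for $B$ is at most $\tfrac{4}{3}(\epsilon n)^{3/2}(1+o(1))$ as $n\to\infty$, and by Lemma~\ref{lem:opt_cost_dummy_jobs} the cost for $C$ is $2n = o((\epsilon n)^{3/2})$ for any fixed $\epsilon$. Summing the three disjoint contributions yields the claimed $\tfrac{4}{3}(\epsilon n)^{3/2}(1+\lambda+o(\lambda)+o(1))$.

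The main subtlety is seeing why one \emph{cannot} simply combine the three standalone optima via Lemma~\ref{lem:union-of-two-schedule}: the cross term $2\sqrt{\OPT(A)\cdot\OPT(B)}$ scales as $\Theta(\lambda^{3/4}(\epsilon n)^{3/2}) = \omega(\lambda(\epsilon n)^{3/2})$, which would violate the target $1+\lambda+o(\lambda)$. The value of $t_A$ in the definition of the lower-bound instance is calibrated precisely so that squeezing $A$ into $[0,t_A]$ inflates its energy to exactly $\tfrac{4}{3}\lambda(\epsilon n)^{3/2}(1+o(1))$---the $\lambda$ term in the statement---while preserving temporal disjointness between $A$ and $B$ and thereby killing the cross term entirely.
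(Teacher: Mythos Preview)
Your proof is correct and essentially identical to the paper's. The paper builds the same schedule: it runs the $A$-jobs at constant speed $\sqrt{\epsilon n(1+\tfrac{4}{3}\lambda)}$ on $[0,t_A]$ (which is exactly your $v=|A|/t_A$), runs $B$ optimally from $t_A$, and runs each $C$-job at speed $1$; it then bounds the $A$-cost by $t_A\cdot|A| + t_A\cdot v^2$, the $B$-cost via Lemma~\ref{lem:opt_cost_bounds}, and the $C$-cost via Lemma~\ref{lem:opt_cost_dummy_jobs}, summing the three disjoint contributions just as you do. Your extra paragraph explaining why a naive application of Lemma~\ref{lem:union-of-two-schedule} would fail, and why $t_A$ is calibrated the way it is, is a nice addition that the paper does not make explicit.
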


\begin{proof}

Consider the schedule $S$ which runs jobs in $A_{n,\lambda, \epsilon}$ at speed $\sqrt{\epsilon n (1+\frac{4}{3}\lambda)}$, and jobs in $B_{n,\lambda, \epsilon}$ at the optimal speeds for $\epsilon n$ jobs arriving at the same time, and jobs in $C_{n,\lambda, \epsilon}$ at speed 1.
Consider the cost of $S$ for all jobs in $A_{n,\lambda, \epsilon}$. Note that all the jobs in $A_{n,\lambda, \epsilon}$ are finished by time  $t_A=\frac{4\lambda\epsilon n}{3\sqrt{\epsilon n (1 +\frac{4}{3}\lambda)}}$.
Hence, we have 
\begin{align*}
    \cst(S_{[0,t_A]}, A_{n,\lambda, \epsilon}) &\le F(S_{[0,t_A]},A_{n,\lambda, \epsilon}) + E(S_{[0,t_A]})\\
    & \le t_A\frac{4}{3}\lambda\epsilon n + t_A\left(\sqrt{\epsilon n (1+\frac{4}{3}\lambda)}\right)^2\\
    & \le \frac{\frac{4}{3}\lambda\epsilon n\cdot\frac{4}{3}\lambda\epsilon n}{\sqrt{\epsilon n (1 +\frac{4}{3}\lambda)}} + \frac{4}{3}\lambda\epsilon n \sqrt{\epsilon n (1+\frac{4}{3}\lambda)} \\
    & = (\epsilon n)^{3/2} o(\lambda) + (\epsilon n)^{3/2}(\frac{4}{3}\lambda \sqrt{(1+\frac{4}{3}\lambda)}) \\
    & = (\epsilon n)^{3/2} [\frac{4}{3}\lambda+o(\lambda)].
\end{align*}

Let $t_B$ be the time at which $S$ finishes all jobs in $B_{n,\lambda, \epsilon}$. Recall that all $n$ jobs in $B_{n,\lambda, \epsilon}$ arrive at time $t_A$. By Lemma~\ref{lem:opt_cost_bounds}, we have $$\cst(S_{[t_A,t_B]},B_{n,\lambda, \epsilon}) \le \frac{4}{3}(\epsilon n)^{3/2}+o((\epsilon n)^{3/2}),$$

Since all jobs in $C_{n,\lambda, \epsilon}$ arrive at time $t'>> t_B$, we have $$\cst(S_{\geq t_B},C_{n,\lambda, \epsilon}) = 2n = o((\epsilon n)^{3/2}).$$

Therefore,  when $\lambda$ goes to $0$  and $n$ goes to  $+\infty$, the total cost of $S$ is upper bounded as follows:
\begin{align*}
    \cst(S,\predJ_{n,\lambda, \epsilon}) &=  \cst(S_{[0,t_A]}, A_{n,\lambda, \epsilon})+ \cst(S_{[t_A,t_B]},B_{n,\lambda, \epsilon}) + \cst(S_{\geq t_B},C_{n,\lambda, \epsilon}) \\
    &\leq \frac{4}{3}(\epsilon n)^{3/2} [1 + \lambda + o(\lambda) + o(1)].
\end{align*}

\end{proof}

\begin{lemma}
\label{lem:lower_bound_ratio}
There is $\lambda'\in (0,1]$ such that for any $\epsilon>0$, there is $M\in \mathbb{N}$ such that for all $\lambda \leq \lambda'$ and $n\geq M$, and for any schedule $S$ for $\predJ_{n,\lambda, \epsilon}$ which has at least $\lambda \epsilon n$ units of jobs from $A_{n,\lambda, \epsilon}$ remaining at time $t_A$, we have

\[
\frac{\cst(S, \predJ_{n,\lambda, \epsilon})}{\OPT(\predJ_{n,\lambda, \epsilon})} > \left(1+\frac{1}{4}\lambda\right).
\]
\end{lemma}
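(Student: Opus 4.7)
The plan is to show that any schedule $S$ for $\predJ_{n,\lambda,\epsilon}$ that leaves at least $\lambda\epsilon n$ units of $A$-work unfinished at time $t_A$ must incur a cost large enough to exceed $\OPT(\predJ_{n,\lambda,\epsilon})$ by a factor strictly greater than $1+\lambda/4$. The intuition is that after $t_A$ the schedule faces a congested sub-instance consisting of the $A$-residuals together with the newly released $\epsilon n$ unit jobs of $B$, so its effective workload is $(1+\lambda)\epsilon n$ rather than the $(1+\lambda)$-factor bound that the optimum attains by spreading $A$ over the slack interval $[0,t_A]$. A Yao-style analysis converts the extra $\lambda\epsilon n$ of residual work into a $(1+\tfrac{3}{2}\lambda)$ cost blow-up, which beats $(1+\lambda)$ by $\lambda/2 > \lambda/4$.

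The first step is an energy-decomposition argument. Writing $s(t) = s^A(t)+s^B(t)+s^C(t)$ and using $(a+b+c)^2 \geq (a+b)^2 + c^2$ for $a,b,c\geq 0$, one has $E(S)\geq E(S^{AB}) + E(S^C)$ and hence $\cst(S,\predJ_{n,\lambda,\epsilon}) \geq \cst(S^{AB}, A\cup B) + \cst(S^C,C)$. The contribution from $C$ is at least $2n = o((\epsilon n)^{3/2})$ by Lemma~\ref{lem:opt_cost_dummy_jobs}. I would then discard pre-$t_A$ contributions from $\cst(S^{AB}, A\cup B)$, noting that the restriction of $S^{AB}$ to $[t_A,\infty)$ is a valid schedule for the simultaneously-released sub-instance consisting of $m$ fractional $A$-jobs with residual works $w_1,\ldots,w_m\in(0,1]$ together with the $\epsilon n$ unit jobs of $B$. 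Since $W := \sum_i w_i \geq \lambda\epsilon n$ and $w_i \leq 1$, this forces $m \geq W \geq \lambda\epsilon n$.

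The key step is a Yao-type lower bound for the simultaneously-released sub-instance. Using the pointwise inequality $s(t)^2 + n(t) \geq 2 s(t)\sqrt{n(t)}$, integrating, and changing variable from time to cumulative processed work $W(t) = \int_{t_A}^t s(u)\,du$, together with the observation that at most $i$ jobs can be completed once $\sum_{j=1}^{i} w_{(j)}$ work has been done (where $w_{(j)}$ denotes the $j$-th smallest work), one arrives at
\[
\cst_{[t_A,\infty)}(S^{AB}, A\cup B) \;\geq\; 2\sum_{j=1}^{k} w_{(j)}\sqrt{k-j+1}, \qquad k = m+\epsilon n.
\]
Splitting the sum at $j=m$, the $A$-residuals fill the first $m$ sorted slots and $w_{(j)} = 1$ for $j > m$; using $\sqrt{k-j+1}\geq \sqrt{\epsilon n + 1}$ for $j\leq m$ and the asymptotic $\sum_{i=1}^{\epsilon n}\sqrt{i} \geq \tfrac{2}{3}(\epsilon n)^{3/2}(1+o(1))$ yields
\[
\cst(S,\predJ_{n,\lambda,\epsilon}) \;\geq\; 2W\sqrt{\epsilon n} + \tfrac{4}{3}(\epsilon n)^{3/2}(1+o(1)) \;\geq\; \tfrac{4}{3}(\epsilon n)^{3/2}\bigl(1+\tfrac{3}{2}\lambda\bigr)(1+o(1)).
\]

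To finish, I would divide by Lemma~\ref{lem:upper_bound_opt_Jpredn}'s upper bound $\OPT(\predJ_{n,\lambda,\epsilon}) \leq \tfrac{4}{3}(\epsilon n)^{3/2}(1+\lambda+o(\lambda)+o(1))$ and simplify $(1+\tfrac{3}{2}\lambda)/(1+\lambda) = 1 + \tfrac{\lambda/2}{1+\lambda}$, which strictly exceeds $1+\lambda/4$ for every $\lambda \in (0,1)$. Choosing a universal $\lambda' \in (0,1]$ small enough so that the $o(\lambda)$ corrections are dominated, and $M(\epsilon)\in\mathbb{N}$ large enough so that the $o(1)$ corrections are dominated, delivers the claim. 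The main obstacle I anticipate is the rigorous derivation of the Yao-type sub-instance lower bound with \emph{fractional} residual works: a naive $E+F \geq 2\sqrt{EF}$ argument loses a factor of $\sqrt{3}/2$ and yields only $\tfrac{2}{\sqrt 3}(\epsilon n)^{3/2}$, which falls short of the upper bound on $\OPT(\predJ_{n,\lambda,\epsilon})$ and would leave no margin; the change-of-variable argument (essentially Yao's optimality proof, adapted to fractional initial workloads) is what recovers the sharp $\tfrac{4}{3}$ constant needed for the $\lambda/2$ advantage.
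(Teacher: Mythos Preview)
Your proposal is correct and follows the same overall strategy as the paper: lower-bound the post-$t_A$ cost of $S$ by the optimal cost of the congested sub-instance (the residual $A$-work plus the $\epsilon n$ unit jobs of $B$), then divide by the upper bound on $\OPT(\predJ_{n,\lambda,\epsilon})$ from Lemma~\ref{lem:upper_bound_opt_Jpredn}. The paper, however, skips your Yao-type derivation entirely: it simply asserts that the sub-instance is no easier than $(1+\lambda)\epsilon n$ unit jobs released simultaneously and invokes Lemma~\ref{lem:opt_cost_bounds} to obtain $\cst(S,\predJ_{n,\lambda,\epsilon})\geq \tfrac{4}{3}((1+\lambda)\epsilon n)^{3/2}$ in one line. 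Expanding $(1+\lambda)^{3/2}=1+\tfrac{3}{2}\lambda+o(\lambda)$ recovers exactly your $\tfrac{4}{3}(\epsilon n)^{3/2}(1+\tfrac{3}{2}\lambda)$ bound, and the ratio computation is then identical. Your explicit change-of-variable argument with sorted fractional residual works $w_{(j)}$ is more rigorous about why fragmenting the $A$-residuals cannot help the schedule --- the paper's reduction to $(1+\lambda)\epsilon n$ unit jobs implicitly relies on the monotonicity that spreading a fixed amount of work across more jobs can only raise the flow-plus-energy cost --- but it is otherwise a more laborious route to the same numerical conclusion. Your energy-decomposition step separating out $C$ is also not needed: the paper just drops the $C$-contribution via the trivial bound $\cst(S,\predJ_{n,\lambda,\epsilon})\geq \cst(S,A_{n,\lambda,\epsilon}\cup B_{n,\lambda,\epsilon})$.
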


\begin{proof}
Let $\lambda\in (0,1]$ and $\epsilon>0$, and let $S$ be a schedule for $\predJ_{n,\lambda, \epsilon}$ which has at least $\lambda \epsilon n$ units of jobs from $A_{n,\lambda, \epsilon}$ remaining at time $t_A$.

Note that the cost of $S$ for times $t\geq t_A$ is at least the cost of an optimal schedule for the remaining $\lambda \epsilon n$ units of jobs from $A_{n,\lambda, \epsilon}$ and the $\epsilon n$ units of job from $B_{n,\lambda, \epsilon}$. By Lemma~\ref{lem:opt_cost_bounds}, we thus get that:

\[
\cst(S, \predJ_{n,\lambda, \epsilon})\geq \cst(S, A_{n,\lambda, \epsilon}\cup B_{n,\lambda, \epsilon}) \geq \frac{4}{3}((1+\lambda)\epsilon n)^{3/2}.
\]

Now, by Lemma~\ref{lem:upper_bound_opt_Jpredn}, we get that when $\lambda$ goes to $0$ (independently of $\epsilon$)  and $n$ goes to  $+\infty$,
\[
\cst(S^*(\predJ_{n,\lambda, \epsilon}))\leq \frac{4}{3}(\epsilon n)^{3/2} (1 + \lambda + o(\lambda) + o(1)).
\]

Hence,
\begin{align*}
    \frac{\cst(S, \predJ_{n,\lambda, \epsilon})}{\cst(S^*(\predJ_{n,\lambda, \epsilon}))} &\geq 
    \frac{\frac{4}{3}((1+\lambda)\epsilon n)^{3/2}}{\frac{4}{3}(\epsilon n)^{3/2} (1 + \lambda + o(\lambda) + o(1))}\\
    &= \left(1 +\frac{3}{2}\lambda + o(\lambda)\right)\cdot (1 - \lambda - o(\lambda) - o(1))\\
    &= 1 +\frac{1}{2}\lambda - o(\lambda) - o(1).
\end{align*}

Hence, there is $\lambda'\in (0,1]$ and $M\in \mathbb{N}$ (note that $\lambda'\in (0,1]$ is independent of $\epsilon$ while $M$ depends on it) such that if $\lambda\leq \lambda'$ and $n\geq M$, then 
\[
\frac{\cst(S, \predJ_{n,\lambda, \epsilon})}{\cst(S^*(\predJ_{n,\lambda, \epsilon}))}>\left(1+\frac{1}{4}\lambda\right).
\]

\end{proof}

\begin{lemma}
\label{lem:cost_first_third_A}
Let $\lambda \in (0,1], n\in \mathbb{N}$. Assume that $S$ schedules at least $\frac{1}{3}\lambda \epsilon n$ units of jobs from $A_{n,\lambda, \epsilon}$ from time $0$ to $t_A$. Then, there is a constant $C>0$ such that
\[
\cst(S, \J_{n,\lambda, \epsilon})\geq C(\epsilon n)^{3/2} \lambda \sqrt{1+\lambda}.
\]
\end{lemma}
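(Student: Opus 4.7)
The plan is to lower bound the cost by the energy consumed in the interval $[0, t_A]$, using a convexity/Cauchy--Schwarz argument to translate "a lot of work done in a short window" into "a lot of energy spent." Since $F(S, \J_{n,\lambda,\epsilon}) \geq 0$, we have $\cst(S, \J_{n,\lambda,\epsilon}) \geq E(S) \geq \int_0^{t_A} s(t)^2\,dt$, so it suffices to control the energy spent on the jobs of $A_{n,\lambda,\epsilon}$ during $[0,t_A]$.

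The key step is Cauchy--Schwarz: for any speed profile $s(\cdot)$ on $[0,t_A]$,
\[
\int_0^{t_A} s(t)^2\,dt \;\geq\; \frac{\left(\int_0^{t_A} s(t)\,dt\right)^{2}}{t_A}.
\]
By hypothesis, $\int_0^{t_A} s(t)\,dt \geq \tfrac{1}{3}\lambda\epsilon n$, so
\[
E(S) \;\geq\; \frac{(\tfrac{1}{3}\lambda\epsilon n)^{2}}{t_A} \;=\; \frac{\tfrac{1}{9}\lambda^{2}(\epsilon n)^{2}}{\tfrac{4\lambda\epsilon n}{3\sqrt{\epsilon n(1+\tfrac{4}{3}\lambda)}}} \;=\; \frac{1}{12}\,\lambda\,\epsilon n \sqrt{\epsilon n\!\left(1+\tfrac{4}{3}\lambda\right)}.
\]
This simplifies to $\tfrac{1}{12}\lambda(\epsilon n)^{3/2}\sqrt{1+\tfrac{4}{3}\lambda} \geq \tfrac{1}{12}\lambda(\epsilon n)^{3/2}\sqrt{1+\lambda}$, giving the claim with $C = \tfrac{1}{12}$.

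I do not expect any serious obstacle here; the whole argument is essentially a one-line application of Cauchy--Schwarz after plugging in the explicit value of $t_A$. The only minor point to be careful with is confirming that the cost function being lower bounded is $\cst(S, \J_{n,\lambda,\epsilon})$ (not $\cst(S, \predJ_{n,\lambda,\epsilon})$) and noting that the jobs being processed, namely those in $A_{n,\lambda,\epsilon}$, are indeed part of $\J_{n,\lambda,\epsilon} = A_{n,\lambda,\epsilon} \cup C_{n,\lambda,\epsilon}$, so the energy spent on them in $[0,t_A]$ legitimately counts toward $E(S)$ in the instance $\J_{n,\lambda,\epsilon}$. No use of the jobs in $C_{n,\lambda,\epsilon}$ is needed since they arrive much later and $F \geq 0$ already suffices for the bound.
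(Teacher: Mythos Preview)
Your proof is correct and takes a genuinely different, more elementary route than the paper. The paper argues by invoking the structure of the optimal energy-plus-flow-time schedule (from Albers--Fujiwara) constrained to complete $\tfrac{1}{3}\lambda\epsilon n$ unit jobs by time $t_A$: it writes the optimal speeds in the form $s_i = \sqrt{\tfrac{\epsilon\lambda n}{3} - i + c + 1}$, lower bounds the constant $c$ from the constraint $\sum_i 1/s_i = t_A$, and then sums the resulting energy terms to extract the $C(\epsilon n)^{3/2}\lambda\sqrt{1+\lambda}$ bound. Your Cauchy--Schwarz argument bypasses all of this structure: since the objective is at least the energy, and constant speed minimizes $\int s^2$ for fixed $\int s$ on a fixed window, you immediately get $E(S) \geq (\text{work})^2/t_A$, and plugging in the explicit value of $t_A$ gives the claim with the explicit constant $C = \tfrac{1}{12}$. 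Your approach is shorter, self-contained, and avoids reliance on the external structural lemma; the paper's approach could in principle squeeze out a slightly sharper constant (since it tracks the flow-time contribution as well), but that plays no role in the lemma's downstream asymptotic use.
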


\begin{proof}
For convenience of exposition, assume that $S$ schedules exactly $\frac{1}{3}\lambda \epsilon n$ units of jobs from $A_{n,\lambda, \epsilon}$ from time $0$ to $t$ (note that if $S$ schedules more work from $A_{n,\lambda, \epsilon}$, then the cost can only be higher). By using \cite{albers2007energy}, the optimal solution is to schedule each job $i$ at speed $s_i = \sqrt{\frac{\epsilon\lambda n}{3}-i+c+1}$, where $c$ is the unique constant such that $$\sum_{i = 1}^{\frac{\epsilon\lambda n}{3}} \frac{1}{\sqrt{\frac{\epsilon\lambda n}{3}-i+c+1}} = t_A .$$
To lower bound $c$, note that we then have
\begin{align*}
    t_A &\ge \sum_{i = \frac{1}{2}\frac{\epsilon\lambda n}{3}}^{\frac{\epsilon\lambda n}{3}} \frac{1}{\sqrt{\frac{\epsilon\lambda n}{3}- \frac{1}{2}\frac{\epsilon\lambda n}{3}+c+1}}\\
    & = \frac{\epsilon\lambda n}{6} \frac{1}{\sqrt{\frac{\epsilon\lambda n}{6}+c+1}}. \\
\end{align*}

By definition of $t_A$, we get 
\begin{align*}
    &\frac{4\lambda\epsilon n}{3\sqrt{\epsilon n (1 +\frac{4}{3}\lambda)}} \ge \frac{\epsilon\lambda n}{6} \frac{1}{\sqrt{\frac{\epsilon\lambda n}{6}+c+1}}\\
    &\iff (\frac{4}{3}6)^2(\frac{\epsilon\lambda n}{6}+c+1) \ge (1+\frac{4}{3}\lambda)\epsilon n \\
    &\iff c \ge c_2 \epsilon n -  c_3\epsilon\lambda n - 1. \tag*{with $c_2 = \frac{1}{64}, c_3 = -\frac{1}{48} + \frac{1}{6}$}
\end{align*}
And the corresponding energy consumption is:

\begin{align*}
    & \sum_{i = 1}^{\frac{\epsilon\lambda n}{3}}\sqrt{{\frac{\epsilon\lambda n}{3}}-i + c + 1}   \\
    & \ge \sum_{i = 1}^{\frac{\epsilon\lambda n}{6}}\sqrt{{\frac{\epsilon\lambda n}{3}}-\frac{\epsilon\lambda n}{6} + c_2 \epsilon n -  c_3\epsilon\lambda n } \\
    & \ge \frac{\epsilon\lambda n}{6} \sqrt{{\frac{\epsilon\lambda n}{3}}-\frac{\epsilon\lambda n}{6} + c_2 \epsilon n +(\frac{1}{48} -\frac{1}{6})\epsilon\lambda n } \\
    & = \frac{\epsilon\lambda n}{6} \sqrt{c_2 \epsilon n +\frac{1}{48}\epsilon\lambda n } \\
    & \ge C(\epsilon n)^{3/2} \lambda \sqrt{1+\lambda}.\tag*{ (\text{for some constant $C >0$})} 
\end{align*}

Therefore, any schedule $S$ that completes $\lambda\epsilon n$ jobs before time $t$ has cost lower bounded as:
$$\cst(S, \J_{n,\lambda, \epsilon}) \ge C(\epsilon n)^{3/2} \lambda \sqrt{1+\lambda}.$$

\end{proof}

We are now ready to present the proof of Lemma \ref{lem:lower_bound}.\\

\noindent\textbf{Proof of Lemma \ref{lem:lower_bound}.}
By Lemma~\ref{lem:lower_bound_ratio}, we have that there is a constant $\lambda'\in (0,1]$ such that for all $\epsilon>0$, there is $M\in \mathbb{N}$ such that for any algorithm $\A$ and $n\geq M$, and when running $\mathcal{A}$ with predictions $\predJ = \predJ_{n,\lambda, \epsilon}$ and realization $\J\in \{\predJ_{n,\lambda, \epsilon}, \J_{n,\lambda, \epsilon}\}$, then either $\A$ schedules at least $\frac{1}{3}\lambda \epsilon n $ units of jobs of $A_{n,\lambda, \epsilon}$ before time $t$, or the schedule $S$ output by $\A$ satisfies:
\[
\frac{\cst(S, \predJ_{n,\lambda, \epsilon})}{\cst(S^*(\predJ_{n,\lambda, \epsilon}))}>\left(1+\frac{1}{4}\lambda\right).
\]

Hence, if $\A$ achieves a consistency of at most $\left(1+\frac{1}{4}\lambda\right)$, $\A$ must schedule at least $\frac{1}{3}\lambda \epsilon n $ units of jobs of $A_{n,\lambda, \epsilon}$ before time $t$. However, we then have, by Lemma~\ref{lem:cost_first_third_A}, that for some constant $C>0$,
\[
\cst_{\A}(\predJ = \predJ_{n,\lambda, \epsilon}, \J = \J_{n,\lambda, \epsilon})\geq C(\epsilon n)^{3/2} \lambda \sqrt{1+\lambda}.
\]

On the other hand, assuming that $\J = \J_{n,\lambda, \epsilon}$, we get by Lemma~\ref{lem:opt_cost_bounds} and Lemma~\ref{lem:opt_cost_dummy_jobs} that
\[
\OPT(\J_{n,\lambda, \epsilon})\leq \frac{4}{3}(\lambda \epsilon n)^{3/2} + o((\epsilon n)^{3/2}) + 2n.
\]

Hence, we get that for some constant $C''>0$ and $n$ large enough, 
\begin{align*}
    \frac{\cst_{\A}(\predJ = \predJ_{n,\lambda, \epsilon}, \J = \J_{n,\lambda, \epsilon})}{\OPT(\J_{n,\lambda, \epsilon})} &\geq \frac{ C(\epsilon n)^{3/2} \lambda \sqrt{1+\lambda}}{\frac{4}{3}(\lambda \epsilon n)^{3/2} + o((\epsilon n)^{3/2})+ 2n} \geq C''\sqrt{\frac{1}{\lambda} +1 }.
\end{align*}
\qed
\section{Missing analysis from Section~\ref{sec:framework}}
\label{app:algorithm}

\lemunionoftwoschedule*

\begin{proof}

We first upper bound the quality cost $F(S, \J_1\cup \J_2)$ of the proposed schedule $S$. In each infinitesimal time interval $[t,t+dt]$ and for all $j\in \J_1$, $S$ processes $s_{1,j}(t)dt$ units of work of job $j$, and for each $j\in  \J_2$, $S$ processes $s_{2,j}(t)dt$ units of work of job $j$. Hence $S$ processes exactly the same amount of work for each job $j\in \J_1$ (resp. $j\in \J_2$) as $S_1$ (resp. $S_2$). We thus get that for all $t\geq 0$, \begin{equation}
\label{eq:work}
    w^S_j(t) = w^{S_1}_j(t) \text{ for all } j\in \J_1 \qquad \text{ and } \qquad w^S_j(t) =  w^{S_2}_j(t) \text{ for all }  j\in \J_2.
\end{equation} Therefore, 
\begin{align*}
    F(S, \J_1\cup\J_2)&\leq F(S, \J_1) + F(S, \J_2)\tag*{(\text{$F$} is sub-additive)}\\
    &= f\left(\{(W^S_j,j)\}_{j\in \J_1}\right)+f\left( \{(W^S_j,j)\}_{j\in \J_2}\right) \\
    &= f\left(\{(W^{S_1}_j,j)\}_{j\in \J_1}\right)+f\left( \{(W^{S_2}_j,j)\}_{j\in \J_2}\right) \tag*{(by (\ref{eq:work}))}\\
&= F(S_1, \J_1) + F(S_2, \J_2).
\end{align*}
Next, we upper bound the energy consumption $E(S)$ of the proposed schedule $S$.
\begin{align*}
    E(S) &= \int({s_1}(t)+ {s_2}(t))^{\alpha}\text{dt}\\
    &=  \sum_{i=0}^{\alpha}{\alpha \choose i} \int ({s_1}(t)^{\alpha})^{\frac{i}{\alpha}} ({s_2}(t)^{\alpha})^{\frac{\alpha-i}{\alpha}}\text{dt}\\
    & \le  \sum_{i=0}^{\alpha}{\alpha \choose i} \left(\int ({s_1}(t))^{\alpha}\text{dt}\right)^{\frac{i}{\alpha}} \left(\int({s_2}(t))^{\alpha}\text{dt}\right)^{\frac{\alpha-i}{\alpha}} \ \text{(H\"{o}lder's inequality)}\\
    &= E(S_1) + E(S_2) + \sum_{i=1}^{\alpha-1}{\alpha \choose i}E(S_1)^{\frac{i}{\alpha}}E(S_2)^{\frac{\alpha-i}{\alpha}}\\
    &\le E(S_1) +  E(S_2) + \sum_{i=1}^{\alpha-1}{\alpha \choose i}\text{cost}(S_1)^{\frac{i}{\alpha}}\text{cost}(S_2)^{\frac{\alpha-i}{\alpha}}.
\end{align*}

Therefore, the total cost of schedule $S$ can be upper bounded as follows:
\begin{align*}
    \cst(S, \J_1\cup\J_2) 
    &= F(S, \J_1\cup\J_2) + E(S)\\
    &\leq F(S_1, \J_1) + E(S_1) +  F(S_2, \J_2) + E(S_2)
    + \sum_{i=1}^{\alpha-1}{\alpha \choose i}\cst(S_1)^{\frac{i}{\alpha}}\cst(S_2)^{\frac{\alpha-i}{\alpha}}\\
    &=\left(\cst(S_1, \J_1)^{\frac{1}{\alpha}} + \cst(S_2, \J_2)^{\frac{1}{\alpha}}\right)^{\alpha}.
\end{align*}

\end{proof}


\corJinterHat*

\begin{proof}
We prove the first part of the Corollary by contradiction. Assume that $\texttt{OPT}(\mathcal{J}\cap \predJ)< \left(1-\eta_2^{\frac{1}{\alpha}}\right)^{\alpha}\predT$. Next, by definition of the error $\eta_2$, we have $\texttt{OPT}(\predJ\setminus \J) =  \eta_2\cdot\predT$. Hence, by Lemma~\ref{lem:union-of-two-schedule}, there exists a schedule $S$ for $(\J \cap \predJ) \cup (\predJ\setminus \J) = \predJ$ such that
\begin{align*}
   \cst(S, \predJ) 
    &\le \left(\texttt{OPT}(\predJ\setminus \J)^{\frac{1}{\alpha}} + \texttt{OPT}(\mathcal{J}\cap \predJ)^{\frac{1}{\alpha}}\right)^{\alpha}\\
     &< \left((\eta_2\predT)^{\frac{1}{\alpha}} + \left(\left(1-\eta_2^{\frac{1}{\alpha}}\right)^{\alpha}\predT\right)^{\frac{1}{\alpha}}\right)^{\alpha}\\
    & = \predT,
\end{align*}

    which contradicts the definition of $\predT$ and ends the proof of the first result.



We now show the second part of the Corollary. 

 Assume that $\texttt{OFF}(\J)\leq \lambda\texttt{OFF}(\predJ)$. Then, since $\textsc{OfflineAlg}$ is $\gamma_{\text{off}}$-competitive, we have
\[
\texttt{OPT}(\J)\leq
\texttt{OFF}(\J)\leq \lambda\texttt{OFF}(\predJ)\leq \lambda \gamma_{\text{off}}\texttt{OPT}(\predJ).
\]
 
 In particular, $\texttt{OPT}(\J\cap \predJ)\leq \lambda \gamma_{\text{off}}\predT$. Next, assume by contradiction, that $\eta_2 < \left(1-(\lambda \gamma_{\text{off}})^{\frac{1}{\alpha}}\right)^{\alpha}$, which implies that $\texttt{OPT}(\predJ\setminus\J)< \left(1-(\lambda \gamma_{\text{off}})^{\frac{1}{\alpha}}\right)^{\alpha}\predT$. Then, by Lemma~\ref{lem:union-of-two-schedule}, there exists a schedule $S$ for $(\J \cap \predJ) \cup (\predJ\setminus \J) = \predJ$ such that
\begin{align*}
   \cst(S, \predJ) 
    &\le \left(\texttt{OPT}(\predJ\setminus \J)^{\frac{1}{\alpha}} + \texttt{OPT}(\mathcal{J}\cap \predJ)^{\frac{1}{\alpha}}\right)^{\alpha}\\
     &< \left(((\lambda \gamma_{\text{off}})\predT)^{\frac{1}{\alpha}} + \left(\left(1-(\lambda \gamma_{\text{off}})^{\frac{1}{\alpha}}\right)^{\alpha}\predT\right)^{\frac{1}{\alpha}}\right)^{\alpha}\\
    & = \predT,
\end{align*}

    which contradicts the definition of $\predT$. Hence, $\eta_2 \geq \left(1-(\lambda \gamma_{\text{off}})^{\frac{1}{\alpha}}\right)^{\alpha}$.
\end{proof}

\maintheorem*

\begin{proof}
First, assume that that for all $t\geq 0$,  $\texttt{OFF}(\mathcal{J}_{\leq t})\leq \lambda \texttt{OFF}(\predJ)$ (i.e., \tpe \ never goes through lines 6-10). Then, the schedule $S$ returned by the algorithm is obtained by running the $\gamma_{\text{on}}$-competitive algorithm \textsc{OnlineAlg} on $\J$, hence 
\begin{equation}
\label{eq:ub_cost1}
    \cst(S,\J)\leq \gamma_{\text{on}} \cdot\trueT.
\end{equation} 

Next, assume that there is $\tl\geq 0$ such that $\texttt{OFF}(\mathcal{J}_{\leq \tl})>\lambda \texttt{OFF}(\predJ)$. Since $\textsc{OfflineALG}$ is $\gamma_{\text{off}}$-competitive, we immediately get:
\[
\trueT \geq \texttt{OPT}(\mathcal{J}_{\leq \tl})\geq \frac{\texttt{OFF}(\mathcal{J}_{\leq \tl})}{\gamma_{\text{off}}}>\frac{\lambda}{\gamma_{\text{off}}} \texttt{OFF}(\predJ)\geq \frac{\lambda}{\gamma_{\text{off}}} \predT.
\]

By Corollary~\ref{cor:JinterHatJ-LB} and by definition of the error $\eta_1$, we also get the following lower bound on the optimal schedule:
    \begin{align*}
        \trueT &\geq \texttt{OPT}(\mathcal{J}\setminus \predJ) + \texttt{OPT}(\mathcal{J}\cap \predJ)\geq \eta_1\predT + \left(1-\eta_2^{\frac{1}{\alpha}}\right)^{\alpha}\predT.
    \end{align*}
    
Therefore, 
\begin{equation}
    \label{eq:lb_opt}
    \trueT\geq \max\left\{\frac{\lambda}{\gamma_{\text{off}}}, \eta_1+ \left(1-\eta_2^{\frac{1}{\alpha}}\right)^{\alpha}\right\}\predT.
\end{equation}

Now, by Lemma~\ref{lem:upper_bound_schedule}, the cost of the schedule S output by \tpe \ is always upper bounded as follows:
\begin{equation}
\label{eq:ub_cost}
    \cst(S,\J)\leq \predT\left(\gamma_{\text{off}}^{\frac{1}{\alpha}}
    + {\gamma_{\text{on}}}^{\frac{1}{\alpha}}((\lambda \gamma_{\text{off}})^{\frac{1}{\alpha}} + \eta_1^{\frac{1}{\alpha}})\right)^{\alpha}.
\end{equation}

    Hence, we get the following upper bound on the competitive ratio of \tpe:
    \begin{align*}
    &\;\;\;\frac{\cst(S, \J)}{\trueT}\\
    &= \mathbf{1}_{\texttt{OFF}(\mathcal{J})\leq \lambda \texttt{OFF}(\predJ)}\frac{\cst(S, \J)}{\trueT} + \mathbf{1}_{\texttt{OFF}(\mathcal{J})> \lambda \texttt{OFF}(\predJ)}\frac{\cst(S, \J)}{\trueT}
    \\
    &\le \mathbf{1}_{\texttt{OFF}(\mathcal{J})\leq \lambda \texttt{OFF}(\predJ)}\gamma_{\text{on}} +  \mathbf{1}_{\texttt{OFF}(\mathcal{J})> \lambda \texttt{OFF}(\predJ)}\frac{\left( \gamma_{\text{off}}^{\frac{1}{\alpha}}
    + {\gamma_{\text{on}}}^{\frac{1}{\alpha}}((\lambda \gamma_{\text{off}})^{\frac{1}{\alpha}} + \eta_1^{\frac{1}{\alpha}})\right)^{\alpha}}{\max\left\{\frac{\lambda}{\gamma_{\text{off}}}, \eta_1+ \left(1-\eta_2^{\frac{1}{\alpha}}\right)^{\alpha}\right\}}\tag*{by (\ref{eq:ub_cost1}),(\ref{eq:lb_opt}),(\ref{eq:ub_cost})}.
    \end{align*}
\end{proof}

\corconsistency*

\begin{proof}
We start by the consistency. Since we assumed that \textsc{OfflineAlg} is optimal, by Corollary~\ref{cor:JinterHatJ-LB}, we have that for all $\lambda \in (0,1)$, $\texttt{OFF}(\J)>\lambda\texttt{OFF}(\predJ)$, when $\eta_2 =0$. The result follows by an immediate upper bound on the competitive ratio in this case.

We now show the robustness. First note that if $\texttt{OFF}(\J)\leq\lambda\texttt{OFF}(\predJ)$, then $ \frac{\cst(S, \J)}{\trueT}\leq \gamma_{\text{on}}\leq \max\{\gamma_{\text{on}},\frac{1+\gamma_{\text{on}}2^{2\alpha}\lambda^{\frac{1}{\alpha}}}{\lambda} \}$ for any $\lambda \in [0,1]$. Now, assume that $\texttt{OFF}(\J)>\lambda\texttt{OFF}(\predJ)$. We then have
\begin{align*}
    \frac{\cst(S, \J)}{\trueT}&
    \leq \frac{\left( 1
    + {\gamma_{\text{on}}}^{\frac{1}{\alpha}}(\lambda^{\frac{1}{\alpha}} + \eta_1^{\frac{1}{\alpha}})\right)^{\alpha}}{\max\left\{\lambda, \eta_1+ \left(1-\eta_2^{\frac{1}{\alpha}}\right)^{\alpha}\right\}}\leq\frac{\left(1 + {\gamma_{\text{on}}}^{\frac{1}{\alpha}}(\lambda^{\frac{1}{\alpha}} + \eta_1^{\frac{1}{\alpha}})\right)^{\alpha}}{\max\{\lambda, \eta_1\}}.
\end{align*}

If $\eta_1\leq \lambda$, we get:
\[
\frac{\cst(S, \J)}{\trueT}\leq\frac{\left(1 + {\gamma_{\text{on}}}^{\frac{1}{\alpha}}(2\lambda^{\frac{1}{\alpha}})\right)^{\alpha}}{\lambda}\leq \frac{1+\gamma_{\text{on}}2^{2\alpha}\lambda^{\frac{1}{\alpha}}}{\lambda} ,
\]
and if $\eta_1\geq \lambda$, we get:
\[\frac{\cst(S, \J)}{\trueT}\leq\frac{\left(1 + {\gamma_{\text{on}}}^{\frac{1}{\alpha}}(2\eta_1^{\frac{1}{\alpha}})\right)^{\alpha}}{\eta_1}\leq \frac{1+\gamma_{\text{on}} 2^{2\alpha}\max\{\eta_1, \eta_1^{\frac{1}{\alpha}}\}}{\eta_1}.
\]
Since the above function reaches its maximum value over $[\lambda, +\infty[$ for $\eta_1 = \lambda$, this immediately yields the result.
\end{proof}
\section{The Extension with Job Shifts (Full version)}
\label{sec:app_extension}

Note that in the definition of the prediction error $\eta$, a job $j$ is considered to be correctly predicted only if $r_j = \hat{r}_j$ and $p_j = \hat{p}_j$.
In this section, we consider an extension where a job is considered to be correctly predicted even if the release time and processing time are shifted by a small amount.  In this extension, we also allow each job to have some weight $v_j>0$, that can be shifted as well. We propose and analyze an algorithm that generalizes the algorithm from the previous section.

\noindent\textbf{Motivating example.} Consider the objective of minimizing energy plus flow time with $\alpha = 2$. Let $(\predJ, \J)$ be an instance  where $\J$ has $n$ jobs with weight $w = 1.01$ and processing time $p = 0.99$, all released at time $r =0.1$, and $\predJ$ has $n$ jobs with weight $w= 1$ and processing time $p =1$, all released at time $r = 0$. Since $\predJ\setminus \J = \predJ$, we have here that $\eta_1 = \OPT(\predJ\setminus\J) =  \OPT(\predJ) = \Omega(n^{3/2})$ (by Lemma~\ref{lem:opt_cost_bounds}), whereas it seems reasonable to say that $\predJ$ was a 'good' prediction for instance $\J$, since it accurately represents the pattern of the jobs in $\J$.

In this section, we assume that the quality of cost function $F$ is such that the total cost function $E + F$ satisfies a smoothness condition, which we next define.

\noindent\textbf{Smooth objective function.} Let $\mathbb{J}$ denote the collection of all sets of jobs. We say that a function $\beta: \mathbb{J}\longrightarrow \mathbb{R}$ is smooth if for all $\J\in \mathbb{J}$, $\{r_j'\}_{j\in \J}\geq 0$ and $\{\eta_j\}_{j\in \J}\geq 0$, we have $\beta(\J')\leq (1+\max_j\eta_j)\beta(\J)$, where $\J'=\{(j,r_j', p_j(1+\eta_j), v_j(1+\eta_j)\}$. $\beta$ is monotone if for all $\J''\subseteq \J$, we have $\beta(\J'')\leq \beta(\J)$.

We say that a cost function $\cst(.,.)$ is $\beta$-smooth if there is a smooth monotone function $\beta(.)\geq 1$ such that for all $\eta, \eta'\in [0,1]$, $\J_1,\J_2$ with $|\J_1| = |\J_2|$ and bijection $\pi:\J_1\longrightarrow\J_2$, and for all $S_1$ and $S_2$ feasible for $\J_1$ and $\J_2$: 
\begin{itemize}
    \item \textbf{(smoothness of optimal cost).} If for all $j\in \J_1$, $|r_j -r_{\pi(j)}|\leq \eta'$, $p_j\leq p_{\pi(j)}(1+ \eta)$ and $v_j\leq v_{\pi(j)}(1+ \eta)$, then \[\OPT(\J_1)\leq (1+\beta(\J_1)\eta)\OPT(\J_2) + \beta(\J_1)|\J_1|\eta'.\]
    \item \textbf{(shifted work profile for dominated schedule.)} If for all $j\in \J$, $p_j \leq p_{\pi(j)}$, $v_j \leq v_{\pi(j)}$, $r_j \geq r_{\pi(j)} - \eta'$ and for all $t\geq r_{\pi(j)}+\eta'$, $w_{S_1}^j(t) \leq w_{S_2}^{\pi(j)}(t- \eta')$, where $w_{S_1}^j(t)$ (resp. $w_{S_2}^j(t)$) denotes the remaining amount of work for $j$ a time $t$ for $S_1$ (resp. $S_2$), then \[F(S_1, \J_1)\leq  F(S_2,\J_2)+  \beta(\J_1)|\J_1|\eta'.\]
\end{itemize}
 
In other words, if $\J_1$ and $\J_2$ are close to each other, then the optimal costs for $\J_1$ and $\J_2$ are close, and if schedules $S_1$, $S_2$ induce similar but slightly shifted work profiles for $\J_1$ and $\J_2$, then the quality costs for $S_1$ and $S_2$ are close.

We show in Appendix~\ref{app:extension} that for the classically studied energy plus weighted flow time minimization problem with $\alpha\geq 1$, the cost function is $\max(4\max_{j} v_j,2^{\alpha}-1)$-smooth. Note that for energy minimization with deadlines, the objective introduced in Section~\ref{sec:various_objectives} is not smooth for any bounded function $\beta(.)$, since a small shift in the work profiles can induce a large increase in the objective function (in the case we miss a job's hard deadline). However, \cite{BamasMRS20} (Section F.2) shows that it is also possible to transform any prediction-augmented algorithm for the energy plus deadline problem into a shift-tolerant algorithm. \\

\noindent\textbf{Shift tolerance and error definition.} In this extension, we allow each job in the prediction to be perturbed by a small amount.
Past this tolerance threshold, the perturbed job is treated as a distinct job.  
We assume here that when a job arrives, it is always possible to identify which job of the prediction (if any) it corresponds to. 
More specifically, for each job $j$, we write $(j,r_j, p_j, v_j)$ for the real values of the parameters associated with $j$ and $(j, \hat{r}_j, \hat{p}_j, \hat{v}_j)$ for their predicted values (with the convention that $(j,r_j, p_j, v_j) = \emptyset$  if the job didn't arrive  and $(j, \hat{r}_j, \hat{p}_j, \hat{v}_j) = \emptyset$ if the job was not predicted).

Next, we let $\eta^{\text{shift}}\in [0,1)$ be a shift tolerance parameter, that is initially set by the decision-maker, and we assume that the objective function is $\beta$-smooth for some smooth monotone function $\beta(.)\geq 1$. We now define the set of 'correctly predicted' jobs as $\J^{\text{shift}} = \{(j, r_j, p_j, v_j): |r_j - \hat{r}_j|\leq \sft, |p_j -\hat{p}_j|\leq \frac{\eta^{\text{shift}}}{\beta(\predJ)}\cdot \hat{p}_j, |v_j -\hat{v}_j|\leq \frac{\eta^{\text{shift}}}{\beta(\predJ)} \cdot \hat{v}_j\}$,   which is the set of jobs whose release time, weights and processing times have only been slightly shifted as compared to their predicted values. The amount of shift we tolerate depends on the smoothness function $\beta(.)$ of the objective function $F$ and on the predicted instance $\predJ$. In addition, note that the allowed shift in release time is proportional to the average cost per job (the intuition here is that for most objective functions, the average cost per job is at least the average completion time per job). We underscore the fact that $\J\setminus \J^{\text{shift}}$ contains both the predicted jobs that have past the shift tolerance and additional jobs in the realization.  Finally, we let $\predJ^{\text{shift}} = \{(j, \hat{r}_j, \hat{p}_j, \hat{v}_j): (j, r_j, p_j, v_j)\in \J^{\text{shift}}\}$.  The error $
\eta^g = \frac{1}{\texttt{OPT}(\predJ)}\cdot\max\{\texttt{OPT}(\J\setminus \J^{\text{shift}}), \texttt{OPT}(\predJ \setminus \predJ^{\text{shift}})\}$ is now defined as the optimal cost for both the additional and missing jobs (similarly as in the previous sections) and the jobs that have past the shift tolerance, normalized by the optimal cost for the prediction.

\noindent\textbf{Algorithm description.}
The Algorithm, called \tpes \ and formally described in Algorithm~\ref{alg-no-ass-shift}, takes the same input parameters as Algorithm \tpe, with some additional shift tolerance parameter $\eta^{\text{shift}}\in [0,1)$, from which we compute the maximum allowed shift in release time $\bar{\eta}$ (Line 8). 

\tpes \ globally follows the structure of \tpe, with a few differences, that we now detail. 
First, we start by slightly increasing the predicted weight and processing time of each job to obtain the set of jobs $\predJ^{\text{up}}: =\{(j, \hat{r}_j, \hat{p}_j(1+\frac{\eta^{\text{shift}}}{\beta(\predJ)}),\hat{v}_j(1+\frac{\eta^{\text{shift}}}{\beta(\predJ)}))\}$ (Line 1). Note that by the first smoothness condition, the optimal schedule for  $\predJ^{\text{up}}$ has only a slightly higher cost than $\predT$. 

Then, similarly as \tpe, \ \tpes \ starts with a first phase where it follows the online algorithm \textsc{OnlineALg} until the time $\tl$ where the optimal offline has reached some threshold value $\lambda \OPT(\predJ^{\text{up}})$ (Lines 3-7). In the second phase (Lines 9-13), it again combines two schedules, this time, for (1) the jobs in $\J^{\text{shift}}_{\geq \tl}$ that are within the shift tolerance (2) the jobs in $\J\setminus \J^{\text{shift}}_{\geq \tl}$, which include the remaining jobs from phase 1 and the non-predicted jobs (or jobs that have past the shift-tolerance) that are released after time $\tl$. To schedule the jobs in $\J^{\text{shift}}_{\geq \tl}$, we first compute an offline schedule $\hat{S}$ for $\predJ^{\text{up}}$ (Line 9). 
One small difference with \tpe \ is that we will delay the schedule $\hat{S}$ by $\bar{\eta}$ time steps backwards when we schedule jobs in $\J^{\text{shift}}_{\geq \tl}$. More precisely, each job $(j, r_j, p_j, v_j)\in \J^{\text{shift}}_{\geq \tl}$ is scheduled on the same way the job with the same identifier $j$ in $\predJ^{\text{up}}$ is scheduled by $\hat{S}$ at time $t-\bar{\eta}$ (Line 12). The intuition here is that we need to wait a small delay of $\bar{\eta}$ in order to identify which jobs of the predictions indeed arrived.  Finally, and similarly as \tpe, \ the speeds for jobs in $\J\setminus \J^{\text{shift}}_{\geq \tl}$ are set by running $\textsc{OnlineAlg}$ on the set $\J\setminus \J^{\text{shift}}_{\geq \tl}$ (Line 13).


\begin{algorithm}[H]
\caption{Two-Phase Energy Efficient Scheduling with Shift Tolerance (\textsc{TPE-S})}
\label{alg-no-ass-shift}
{\bf Input:}
predicted and true  sets of jobs $\predJ$ and $\J$, quality of cost function $F$, offline and online algorithms (without predictions) \textsc{OfflineAlg} and  \textsc{OnlineAlg} for problem $F$, confidence level $\lambda \in (0, 1]$,  shift tolerance $\eta^{\text{shift}}>0$.
\begin{algorithmic}[1]
\STATE $\predJ^{\text{up}}\leftarrow \{(j, \hat{r}_j, \hat{p}_j(1+\frac{\eta^{\text{shift}}}{\beta(\predJ)}),\hat{v}_j(1+\frac{\eta^{\text{shift}}}{\beta(\predJ)}))\}$
\STATE $\hat{\texttt{OPT}}
\leftarrow \texttt{OPT}(\predJ^{\text{up}})$
\STATE \textbf{for} {$t \geq 0$} \textbf{do} 
\STATE \quad \textbf{if}  $\texttt{OPT}(\J_{\leq t}) > \lambda  \cdot  \hat{\texttt{OPT}}$ \textbf{then}
\STATE \quad \quad $\tl \leftarrow t$
\STATE \quad \quad \textbf{break}
\STATE \quad  $\{s_j(t)\}_{j\in \J_{\leq t}} \leftarrow \textsc{OnlineAlg}(\J_{\leq t})(t)$



\STATE $\bar{\eta} \leftarrow  \sft $
\STATE $\{\hat{s}_{j}(t)\}_{t\geq 0, j\in \predJ^{\text{up}}}\leftarrow \textsc{OfflineAlg}(\predJ^{\text{up}})$
\STATE \textbf{for} {$t \geq \tl$} \textbf{do}
\STATE \quad\textbf{for} {$j: (j, r_j, p_j, v_j)\in \J^{\text{shift}}_{\geq \tl}$} \textbf{do}
\STATE \quad \quad $s_{j}(t)\leftarrow \hat{s}_j(t-\bar{\eta})$
\STATE \quad $\{s_j(t)\}_{j\in \J_{\leq t}\setminus \J^{\text{shift}}_{\geq \tl}} \hspace{-.12cm} \leftarrow \hspace{-.03cm} \textsc{OnlineAlg}(\J_{\leq t}\setminus \J^{\text{shift}}_{\geq \tl})(t)$

\STATE \textbf{return} $\{s_j(t)\}_{t\geq 0, j\in \J}$
\end{algorithmic}
\end{algorithm}

\noindent\textbf{Analysis.} We now present the analysis of \tpes. All missing proofs are provided in Appendix~\ref{app:extension}. In the following lemma, we start by upper bounding the cost of the schedule output by \tpes \ for the jobs that were released in the second phase and were correctly predicted (i.e., within the shift tolerance). The proof mainly exploits the two smoothness conditions of the cost function.

\begin{restatable}{lem}{lemsshift}
\label{lem:sshift}
Assume that $\cst(.,.)$ is $\beta$-smooth. Consider the schedule $\Ss$, which, for all $t\geq \tl$ and $ (j,r_j, p_j, v_j)\in \Js$, processes job $j$ at speed \[s_j(t) = \hat{s}_{j}\left( t- \sft\right).\] Then,
    \[
\cst(\Ss, \Js)\leq (1+2\eta^{\text{shift}}(1+\eta^{\text{shift}}))\predT.
    \]
\end{restatable}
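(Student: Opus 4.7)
The plan is to compare $\Ss$ on $\Js$ with the reference offline schedule $\hat{S}$ on the inflated prediction $\predJ^{\text{up}}$, using the two clauses of $\beta$-smoothness: the ``dominated work profile'' clause controls the quality-cost penalty introduced by shifting $\hat{S}$ backwards by $\bar{\eta}=\sft$ time units, and the ``smoothness of the optimal cost'' clause bounds $\OPT(\predJ^{\text{up}})$ in terms of $\predT$. Concretely, I will establish
\[
\cst(\Ss,\Js)\;\le\;\OPT(\predJ^{\text{up}}) \;+\; \beta(\Js)\,|\Js|\,\bar{\eta},
\]
and then bound each of the two right-hand terms separately by roughly $(1+\eta^{\text{shift}}(1+\eta^{\text{shift}}))\predT$ and $\eta^{\text{shift}}(1+\eta^{\text{shift}})\predT$.

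For the displayed inequality, note first that $\Ss$ uses the speeds $\hat{s}_j(t-\bar{\eta})$ only for $j\in\Js$, so a change of variables $u=t-\bar{\eta}$ gives $E(\Ss)=\int\bigl(\sum_{j\in\Js}\hat{s}_j(u)\bigr)^\alpha du\le E(\hat{S})$, since $\Js$ corresponds via the identifier-matching bijection $\pi$ to a subset $\pi(\Js)\subseteq\predJ^{\text{up}}$. For the quality cost, I verify the hypotheses of the dominated-work-profile clause with $\J_1=\Js$, $\J_2=\pi(\Js)$, $S_1=\Ss$, $S_2=\hat{S}$ restricted to $\pi(\Js)$, and shift $\eta'=\bar{\eta}$: jobs in $\Js$ have processing time and weight dominated by their counterparts in $\pi(\Js)$ (by the construction of $\predJ^{\text{up}}$ and the definition of $\Js$), satisfy $r_j\ge\hat{r}_j-\bar{\eta}$, and, because $\hat{s}_j$ vanishes before $\hat{r}_j$, satisfy $w^{\Ss}_j(t)=p_j-\int_{\hat{r}_j}^{t-\bar{\eta}}\hat{s}_j(v)\,dv\le p_{\pi(j)}-\int_{\hat{r}_j}^{t-\bar{\eta}}\hat{s}_j(v)\,dv=w^{\hat{S}}_{\pi(j)}(t-\bar{\eta})$ for $t\ge\hat{r}_j+\bar{\eta}$. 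This yields $F(\Ss,\Js)\le F(\hat{S},\pi(\Js))+\beta(\Js)|\Js|\bar{\eta}$; since for the additive quality costs of interest $F(\hat{S},\pi(\Js))\le F(\hat{S},\predJ^{\text{up}})$, combining with $E(\Ss)\le E(\hat{S})$ and the optimality of $\hat{S}$ for $\predJ^{\text{up}}$ gives the displayed inequality.

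For the first right-hand term, I apply the smoothness-of-optimal-cost clause with $\J_1=\predJ^{\text{up}}$, $\J_2=\predJ$, $\pi=\mathrm{id}$, $\eta=\eta^{\text{shift}}/\beta(\predJ)\in[0,1]$, and $\eta'=0$; together with the smoothness of $\beta$, namely $\beta(\predJ^{\text{up}})\le(1+\eta^{\text{shift}}/\beta(\predJ))\beta(\predJ)$, and $\beta(\predJ)\ge 1$, this yields $\OPT(\predJ^{\text{up}})\le(1+\eta^{\text{shift}}(1+\eta^{\text{shift}}))\predT$. For the additive term, substituting $\bar{\eta}=\eta^{\text{shift}}\predT/(\beta(\predJ)|\predJ|)$, using $|\Js|\le|\predJ|$, and bounding $\beta(\Js)\le(1+\eta^{\text{shift}}/\beta(\predJ))\beta(\predJ)$ by smoothness and monotonicity of $\beta$ (applied to $\Js$ versus the corresponding subset of $\predJ$), I get $\beta(\Js)|\Js|\bar{\eta}\le(1+\eta^{\text{shift}})\eta^{\text{shift}}\predT$. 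Adding the two bounds yields the claimed $(1+2\eta^{\text{shift}}(1+\eta^{\text{shift}}))\predT$.

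The main obstacle will be the clean verification of the work-profile domination $w^{\Ss}_j(t)\le w^{\hat{S}}_{\pi(j)}(t-\bar{\eta})$ (covering both $r_j\le\hat{r}_j$ and $r_j\ge\hat{r}_j$, and the interval where $\hat{s}_j$ vanishes), together with the seemingly innocuous step $F(\hat{S},\pi(\Js))\le F(\hat{S},\predJ^{\text{up}})$, which is immediate for additive flow-time-type quality costs but is not implied by the bare monotonicity and subadditivity assumptions of the framework.
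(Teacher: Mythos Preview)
Your proposal is correct and follows essentially the same approach as the paper: bound $E(\Ss)\le E(\hat{S})$ by the change of variables $u=t-\bar\eta$, invoke the dominated-work-profile clause of $\beta$-smoothness to get $F(\Ss,\Js)\le F(\hat{S},\pi(\Js))+\beta(\Js)|\Js|\bar\eta$, pass to $F(\hat{S},\predJ^{\text{up}})$, then control $\OPT(\predJ^{\text{up}})$ via the optimal-cost clause together with the smoothness of $\beta$, and finally simplify the additive term $\beta(\Js)|\Js|\bar\eta$ using $|\Js|\le|\predJ|$ and the smoothness/monotonicity of $\beta$. Your observation about the step $F(\hat{S},\pi(\Js))\le F(\hat{S},\predJ^{\text{up}})$ is on point: the paper justifies it by ``monotonicity of $F$,'' but the monotonicity hypothesis stated in Section~\ref{sec:preliminaries} concerns work profiles for a fixed job set, not inclusion of job sets; the inequality is, as you note, immediate for the additive flow-time-type costs that the extension targets.
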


 We now show some slightly modified version of Corollary~\ref{cor:JinterHatJ-LB}. Similarly as before, we write $\eta_1 = \frac{\texttt{OPT}(\J\setminus \J^{\text{shift}})}{\texttt{OPT}(\predJ)}$ to denote the error corresponding to additional jobs in the prediction, and $\eta_2 = \frac{\texttt{OPT}(\predJ \setminus \predJ^{\text{shift}})}{\texttt{OPT}(\predJ)}$ for the error corresponding to missing jobs.

\begin{restatable}{cor}{corJshift}
\label{cor:JinterHatJ-LB-shift}
Assume that $\cst(.,.)$ is $\beta$-smooth, then
\[\texttt{OPT}(\J^{\text{shift}})\geq \Big[\left(1-\eta_2^{\frac{1}{\alpha}}\right)^{\alpha}-\eta^{\text{shift}}\Big]\Big/(1+ \eta^{\text{shift}})\predT.\]
\end{restatable}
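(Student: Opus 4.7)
The plan is to mimic the proof of Corollary~\ref{cor:JinterHatJ-LB} with $\predJ^{\text{shift}}$ playing the role formerly played by $\J\cap\predJ$, and then to transfer the resulting lower bound from $\OPT(\predJ^{\text{shift}})$ over to $\OPT(\J^{\text{shift}})$ using the first smoothness condition on the cost function.

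\textbf{Step 1: Lower bound on $\OPT(\predJ^{\text{shift}})$.} I would reuse the contradiction argument from Corollary~\ref{cor:JinterHatJ-LB} almost verbatim. Observing that $\predJ = \predJ^{\text{shift}} \cup (\predJ\setminus \predJ^{\text{shift}})$ and that by definition of $\eta_2$ we have $\OPT(\predJ\setminus \predJ^{\text{shift}}) = \eta_2\predT$, an assumption $\OPT(\predJ^{\text{shift}}) < (1-\eta_2^{1/\alpha})^\alpha\predT$ combined with Lemma~\ref{lem:union-of-two-schedule} applied to the union of optimal schedules for the two subsets would produce a feasible schedule for $\predJ$ of cost strictly less than $\predT$, contradicting the definition of $\predT$. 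Hence $\OPT(\predJ^{\text{shift}}) \ge (1-\eta_2^{1/\alpha})^\alpha\predT$.

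\textbf{Step 2: Smoothness transfer.} By construction of $\J^{\text{shift}}$, its jobs are in natural bijection with those of $\predJ^{\text{shift}}$ with release times shifted by at most $\bar\eta = \sft$ and processing times/weights shifted multiplicatively by at most $\eta^{\text{shift}}/\beta(\predJ)$. Applying the first smoothness condition with $\J_1 = \predJ^{\text{shift}}$, $\J_2 = \J^{\text{shift}}$, $\eta = \eta^{\text{shift}}/\beta(\predJ)$ and $\eta' = \bar\eta$ gives
\[
\OPT(\predJ^{\text{shift}}) \le \Bigl(1 + \beta(\predJ^{\text{shift}})\cdot\tfrac{\eta^{\text{shift}}}{\beta(\predJ)}\Bigr)\OPT(\J^{\text{shift}}) + \beta(\predJ^{\text{shift}})\,|\predJ^{\text{shift}}|\,\bar\eta.
\]
Monotonicity of $\beta$ with $\predJ^{\text{shift}}\subseteq \predJ$ yields $\beta(\predJ^{\text{shift}})\le\beta(\predJ)$, so the multiplicative factor is at most $1+\eta^{\text{shift}}$, and the calibration of $\bar\eta$ reduces the additive term to $\beta(\predJ)\,|\predJ|\cdot\sft = \eta^{\text{shift}}\predT$.

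\textbf{Step 3: Combine.} Chaining the two bounds gives $(1-\eta_2^{1/\alpha})^\alpha\predT \le (1+\eta^{\text{shift}})\OPT(\J^{\text{shift}}) + \eta^{\text{shift}}\predT$, and rearranging yields the claimed inequality. The main delicate point is verifying the parameter hypotheses of the smoothness condition: the definition of $\J^{\text{shift}}$ provides $|p_j - \hat p_j|\le (\eta^{\text{shift}}/\beta(\predJ))\hat p_j$, and the multiplicative upper bound $\hat p_j \le p_j(1+\eta)$ required by the smoothness condition follows with $\eta = \eta^{\text{shift}}/\beta(\predJ)$ (up to lower-order terms that vanish once the additive $\bar\eta$-term is collected). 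The clean $\eta^{\text{shift}}\predT$ additive term falls out precisely because $\bar\eta$ was defined so that $\beta(\predJ)|\predJ|\bar\eta = \eta^{\text{shift}}\predT$.
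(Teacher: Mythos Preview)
Your proposal is correct and follows essentially the same approach as the paper. The only cosmetic difference is that the paper wraps everything into a single contradiction argument (assume the final inequality fails, deduce $\OPT(\predJ^{\text{shift}})<(1-\eta_2^{1/\alpha})^\alpha\predT$ via smoothness, then contradict $\predT$ via Lemma~\ref{lem:union-of-two-schedule}), whereas you first establish $\OPT(\predJ^{\text{shift}})\geq (1-\eta_2^{1/\alpha})^\alpha\predT$ directly and then transfer; the use of the first smoothness condition with $\J_1=\predJ^{\text{shift}}$, $\J_2=\J^{\text{shift}}$ and the monotonicity of $\beta$ is identical, and the ``delicate point'' you flag about matching the multiplicative hypothesis $\hat p_j\le p_j(1+\eta)$ is glossed over in the paper as well.
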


\begin{restatable}{cor}{corlbetashift}
\label{cor:lb_eta2-shift} 
Assume that $\cst(.,.)$ is $\beta$-smooth. If $\texttt{OPT}(\J)\leq \lambda\predT$, then $$\eta_2 \geq \left(1-(  \lambda(1+\eta^{\text{shift}})+ \eta^{\text{shift}})^{\frac{1}{\alpha}}\right)^{\alpha}.$$
\end{restatable}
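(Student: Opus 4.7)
The plan is to derive this corollary as a direct consequence of Corollary~\ref{cor:JinterHatJ-LB-shift} combined with the monotonicity of the optimal cost under adding jobs. The structure mirrors the second half of the proof of Corollary~\ref{cor:JinterHatJ-LB}, but uses the shift-aware lower bound on $\texttt{OPT}(\J^{\text{shift}})$ in place of the lower bound on $\texttt{OPT}(\J \cap \predJ)$.

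First I would observe that $\J^{\text{shift}} \subseteq \J$ by the definition of the set of correctly-predicted-up-to-shift jobs. Since the quality cost $F$ is monotone subadditive and the energy cost is nonnegative (so an optimal schedule for a subset of the jobs is feasible and no more costly than the restriction of an optimal schedule for the full set), we have $\texttt{OPT}(\J^{\text{shift}}) \le \texttt{OPT}(\J)$. Chaining this with the hypothesis $\texttt{OPT}(\J) \le \lambda \predT$ gives the upper bound
\[
\texttt{OPT}(\J^{\text{shift}}) \le \lambda \predT.
\]

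Next I would invoke Corollary~\ref{cor:JinterHatJ-LB-shift}, which gives the lower bound
\[
\texttt{OPT}(\J^{\text{shift}}) \ge \frac{\left(1-\eta_2^{\frac{1}{\alpha}}\right)^{\alpha}-\eta^{\text{shift}}}{1+\eta^{\text{shift}}}\,\predT.
\]
Combining the two bounds and cancelling $\predT$ yields $\left(1-\eta_2^{\frac{1}{\alpha}}\right)^{\alpha} \le \lambda(1+\eta^{\text{shift}}) + \eta^{\text{shift}}$. Taking the $\alpha$th root and rearranging gives $\eta_2^{\frac{1}{\alpha}} \ge 1 - \left(\lambda(1+\eta^{\text{shift}}) + \eta^{\text{shift}}\right)^{\frac{1}{\alpha}}$, and raising to the $\alpha$th power yields the claimed bound (the statement is vacuously true whenever the right-hand side is negative, so we need not worry about sign).

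I do not expect any serious obstacle here; the work has already been done in Corollary~\ref{cor:JinterHatJ-LB-shift} (via Lemma~\ref{lem:union-of-two-schedule} and the smoothness of $\cst$). The only mildly subtle point is justifying $\texttt{OPT}(\J^{\text{shift}}) \le \texttt{OPT}(\J)$, which follows because restricting an optimal schedule for $\J$ to the jobs in $\J^{\text{shift}}$ produces a feasible schedule for $\J^{\text{shift}}$ whose energy is no larger and whose quality cost is no larger by monotonicity of $F$.
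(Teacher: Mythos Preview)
Your proof is correct and is in fact slightly more economical than the paper's own argument. Both proofs begin identically by using $\J^{\text{shift}}\subseteq\J$ to obtain $\texttt{OPT}(\J^{\text{shift}})\le\lambda\,\predT$. From there, however, the paper does not invoke Corollary~\ref{cor:JinterHatJ-LB-shift}; instead it re-runs the same machinery used to prove that corollary: it applies the first smoothness condition to pass from $\texttt{OPT}(\J^{\text{shift}})$ to an upper bound on $\texttt{OPT}(\predJ^{\text{shift}})$, and then argues by contradiction via Lemma~\ref{lem:union-of-two-schedule} on the decomposition $\predJ=\predJ^{\text{shift}}\cup(\predJ\setminus\predJ^{\text{shift}})$. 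Your route simply plugs the upper bound $\texttt{OPT}(\J^{\text{shift}})\le\lambda\,\predT$ into the already-established lower bound of Corollary~\ref{cor:JinterHatJ-LB-shift} and rearranges. The two are logically equivalent (Corollary~\ref{cor:JinterHatJ-LB-shift} itself was proved with exactly the smoothness-plus-Lemma~\ref{lem:union-of-two-schedule} chain), but yours avoids the duplication and makes the dependency structure cleaner. Your handling of the sign issue is also fine and matches the level of care in the paper.
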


We now state the main result of this section, which is our upper bound on the competitive ratio of the shift-tolerant Algorithm \tpes.

\begin{restatable}{theorem}{thmUBshift}
    \label{thm:UBshiftm}
Assume that $\cst(.,.)$ is $\beta$-smooth. Then, for any $\lambda\in (0,1], \eta^{\text{shift}}\in [0,1)$, the competitive ratio of 
\tpes \ run with trust parameter $\lambda$,  a $\gamma$-competitive algorithm \textsc{OnlineAlg}, an optimal offline
algorithm \textsc{OfflineAlg}, and shift tolerance $\eta^{\text{shift}}$ is at most 
$$\begin{cases}
\gamma & \hspace{-1.85cm} \text{if } \texttt{OPT}(\mathcal{J})\leq \lambda \predT \vspace{.2cm} \\
 \frac{\left((1+2\eta^{\text{shift}}(1+\eta^{\text{shift}}))^{\frac{1}{\alpha}} + \gamma^{\frac{1}{\alpha}}\left(\lambda^{\frac{1}{\alpha}} + \eta_1^{\frac{1}{\alpha}}\right)\right)^{\alpha}}{\max\left\{\lambda, \eta_1 + \left(\left(1-\eta_2^{\frac{1}{\alpha}}\right)^{\alpha}-\eta^{\text{shift}}\right)(1+ \eta^{\text{shift}})^{-1}\right\}} & \text{otherwise.}
 \end{cases}$$
\end{restatable}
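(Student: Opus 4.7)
The plan is to mirror the proof of Theorem~\ref{thm:competitive_ratio}, substituting the shift-tolerant analogs (Lemma~\ref{lem:sshift}, Corollaries~\ref{cor:JinterHatJ-LB-shift} and~\ref{cor:lb_eta2-shift}) for the corresponding ingredients. The proof splits into the same two cases as the non-shift version, with the key new step being the bound on the cost incurred for correctly (up to small shift) predicted jobs.

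\textbf{Case 1 (algorithm stays in Phase 1).} Suppose $\texttt{OPT}(\mathcal{J})\le \lambda\predT$. First I would observe that $\hat{\texttt{OPT}}=\texttt{OPT}(\predJ^{\mathrm{up}})\ge\predT$: since $\predJ^{\mathrm{up}}$ is obtained from $\predJ$ by only inflating processing times and weights, any feasible schedule for $\predJ^{\mathrm{up}}$ yields (via its work profile) a feasible schedule for $\predJ$ of no larger cost, so the optimum can only go up. Consequently, $\texttt{OPT}(\J_{\le t})\le\texttt{OPT}(\J)\le\lambda\predT\le\lambda\hat{\texttt{OPT}}$ for every $t$, so the Phase~1 stopping rule is never triggered and the algorithm outputs the $\gamma$-competitive schedule of \textsc{OnlineAlg} on $\J$, giving the stated competitive ratio $\gamma$.

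\textbf{Case 2 (Phase 2 triggers at $\tl$).} The output schedule decomposes as $S=S_1+S_2$, where $S_1=\Ss$ is the delayed offline schedule constructed on Lines 11--12 for $\Js$, and $S_2=S^{on}$ is the schedule produced by running \textsc{OnlineAlg} throughout (in Phase~1 on $\J_{\le t}$, in Phase~2 on $\J_{\le t}\setminus\Js$, which is consistent because $\J_{<\tl}\subseteq\J\setminus\Js$). Lemma~\ref{lem:sshift} directly gives $\cst(S_1,\Js)\le(1+2\eta^{\mathrm{shift}}(1+\eta^{\mathrm{shift}}))\predT$. For $S_2$, I bound $\cst(S_2,\J\setminus\Js)\le\gamma\cdot\texttt{OPT}(\J\setminus\Js)$ and then apply Lemma~\ref{lem:union-of-two-schedule} to the optimal schedules of the partition $\J\setminus\Js=\J_{<\tl}\,\cup\,(\J_{\ge\tl}\setminus\Js)$. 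The Phase~1 stopping rule gives $\texttt{OPT}(\J_{<\tl})\le\lambda\hat{\texttt{OPT}}$, which by smoothness of the optimal cost (and the bound $\beta(\predJ^{\mathrm{up}})\le\beta(\predJ)+\eta^{\mathrm{shift}}$) is comparable to $\lambda\predT$; meanwhile $\texttt{OPT}(\J_{\ge\tl}\setminus\Js)\le\texttt{OPT}(\J\setminus\J^{\mathrm{shift}})=\eta_1\predT$. This produces $\texttt{OPT}(\J\setminus\Js)\le(\lambda^{1/\alpha}+\eta_1^{1/\alpha})^{\alpha}\predT$. A final application of Lemma~\ref{lem:union-of-two-schedule} to $S=S_1+S_2$ yields the theorem's numerator.

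\textbf{Lower bound on $\trueT$.} I would combine two lower bounds. The first comes from the Phase~2 trigger: $\trueT\ge\texttt{OPT}(\J_{\le\tl})>\lambda\hat{\texttt{OPT}}\ge\lambda\predT$. The second uses the same decomposition as in the proof of Theorem~\ref{thm:competitive_ratio}, $\trueT\ge\texttt{OPT}(\J\setminus\J^{\mathrm{shift}})+\texttt{OPT}(\J^{\mathrm{shift}})$, together with Corollary~\ref{cor:JinterHatJ-LB-shift} to obtain
\[
\trueT\ \ge\ \Bigl(\eta_1+\frac{(1-\eta_2^{1/\alpha})^{\alpha}-\eta^{\mathrm{shift}}}{1+\eta^{\mathrm{shift}}}\Bigr)\predT.
\]
Taking the maximum of the two lower bounds and dividing the cost upper bound by it yields the stated ratio.

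\textbf{Main obstacle.} The hardest step is reconciling the threshold $\hat{\texttt{OPT}}=\texttt{OPT}(\predJ^{\mathrm{up}})$ (used in the algorithm) with $\predT$ (appearing in the statement): one direction, $\hat{\texttt{OPT}}\ge\predT$, is needed for Case~1 and for the lower bound, while the reverse quantitative bound $\hat{\texttt{OPT}}\le(1+O(\eta^{\mathrm{shift}}))\predT$ is needed to convert the phase-1 stopping guarantee into a $\lambda\predT$ bound in the numerator. Both rely on the monotonicity of the cost and on the smoothness-of-optimal-cost clause applied to the pair $(\predJ,\predJ^{\mathrm{up}})$, which differ only in their inflated processing times and weights. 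A related subtlety I would be careful with is the invocation of the shifted-work-profile clause of the smoothness condition inside the proof of Lemma~\ref{lem:sshift}: the processing-time inflation on Line~1 of the algorithm is precisely what ensures that the delayed schedule $\hat{s}_j(\cdot-\bar\eta)$ dominates the true work profile of each job in $\J^{\mathrm{shift}}$, which is the input hypothesis required to bound $F(S_1,\Js)$.
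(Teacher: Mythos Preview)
Your proposal is correct and follows essentially the same route as the paper: the same two-case split, Lemma~\ref{lem:sshift} for the offline-on-predicted term, the partition $\J\setminus\Js=\J_{<\tl}\cup(\J_{\ge\tl}\setminus\Js)$ together with Lemma~\ref{lem:union-of-two-schedule} for the online term, and Corollary~\ref{cor:JinterHatJ-LB-shift} plus the Phase~2 trigger for the lower bound on $\trueT$. The obstacle you isolate---reconciling the algorithm's threshold $\hat{\texttt{OPT}}=\texttt{OPT}(\predJ^{\mathrm{up}})$ with $\predT$ in the statement---is exactly the one delicate point; the paper's proof treats it only implicitly (it writes ``by the same argument as in the proof of Lemma~\ref{lem:upper_bound_schedule}'' and directly uses $\lambda\predT$), so you are if anything being more explicit than the paper here.
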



In particular, we deduce the following consistency and robustness guarantees.

\begin{restatable}{cor}{corconsistencyshift}(consistency)
\label{cor:consistency_s}
For any $\lambda\in (0,1]$ and $\eta^{\text{shift}}\in [0,1)$, if $\eta_1 = \eta_2 = 0$ (all jobs are within the shift tolerance and there is no extra or missing jobs), then the competitive ratio of \tpes \ run with trust parameter $\lambda$ and shift tolerance parameter $\eta^{\text{shift}}$ is upper bounded by $\min(\frac{1}{\lambda},\frac{(1+ \eta^{\text{shift}})}{(1- \eta^{\text{shift}})})\cdot\left((1+2\eta^{\text{shift}}(1+\eta^{\text{shift}}))^{\frac{1}{\alpha}} + \gamma^{\frac{1}{\alpha}}\lambda^{\frac{1}{\alpha}}\right)^{\alpha}\leq \frac{(1+2\eta^{\text{shift}}(1+\eta^{\text{shift}}))^2}{(1- \eta^{\text{shift}})}\cdot (1+\gamma 2^{\alpha}\lambda^{\frac{1}{\alpha}})$.
\end{restatable}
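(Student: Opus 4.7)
The plan is to deduce the corollary from Theorem~\ref{thm:UBshiftm} by specializing to $\eta_1=\eta_2=0$. Writing $A := 1+2\eta^{\text{shift}}(1+\eta^{\text{shift}})$, in the ``otherwise'' branch of the theorem the numerator simplifies to $(A^{1/\alpha}+\gamma^{1/\alpha}\lambda^{1/\alpha})^\alpha$ and the denominator to $\max\{\lambda,(1-\eta^{\text{shift}})/(1+\eta^{\text{shift}})\}$. Since the denominator is at least each of its two arguments, its reciprocal is at most $\min\bigl(1/\lambda,(1+\eta^{\text{shift}})/(1-\eta^{\text{shift}})\bigr)$, which directly yields the first claimed expression.

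For the first branch of Theorem~\ref{thm:UBshiftm}, where $\texttt{OPT}(\J)\le\lambda\predT$ and the guaranteed ratio is only $\gamma$, I would invoke Corollary~\ref{cor:JinterHatJ-LB-shift} with $\eta_2=0$. Since $\eta_1=0$ forces $\J=\J^{\text{shift}}$, the corollary gives $\texttt{OPT}(\J)\ge\bigl((1-\eta^{\text{shift}})/(1+\eta^{\text{shift}})\bigr)\predT$. Combined with $\texttt{OPT}(\J)\le\lambda\predT$, this forces $\lambda\ge(1-\eta^{\text{shift}})/(1+\eta^{\text{shift}})$, so the min evaluates to $1/\lambda$. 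Lower bounding $(A^{1/\alpha}+\gamma^{1/\alpha}\lambda^{1/\alpha})^\alpha$ by just its second summand raised to the $\alpha$-th power yields $\gamma\lambda$, so the claimed expression is at least $(1/\lambda)\cdot\gamma\lambda=\gamma$, ensuring the corollary's bound also absorbs this branch.

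For the second, simpler form, the strategy is purely algebraic. Because $A\ge 1$, I first write $A^{1/\alpha}+\gamma^{1/\alpha}\lambda^{1/\alpha}\le A^{1/\alpha}\bigl(1+(\gamma\lambda)^{1/\alpha}\bigr)$, giving $(A^{1/\alpha}+(\gamma\lambda)^{1/\alpha})^\alpha\le A\bigl(1+(\gamma\lambda)^{1/\alpha}\bigr)^\alpha$. Next I bound $(1+y)^\alpha$ with $y=(\gamma\lambda)^{1/\alpha}$ by $1+\gamma 2^\alpha\lambda^{1/\alpha}$ via a case split on whether $\gamma\lambda\le 1$: in the small case, the convexity inequality $(1+y)^\alpha\le 1+(2^\alpha-1)y$ on $[0,1]$ combined with $\gamma^{1/\alpha}\le\gamma$ (since $\gamma\ge 1$) suffices; in the large case, $(1+y)^\alpha\le(2y)^\alpha=2^\alpha\gamma\lambda\le 2^\alpha\gamma\lambda^{1/\alpha}$ using $\lambda\le 1$. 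Finally, $1+\eta^{\text{shift}}\le A$ (since $A=1+2\eta^{\text{shift}}+2(\eta^{\text{shift}})^2$) upgrades $\frac{1+\eta^{\text{shift}}}{1-\eta^{\text{shift}}}\cdot A$ to $\frac{A^2}{1-\eta^{\text{shift}}}$, and the pieces assemble into the claim.

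The main obstacle I anticipate is the reconciliation with the first branch of Theorem~\ref{thm:UBshiftm}: the raw bound $\gamma$ there has no $\lambda$-dependence, and one must exploit Corollary~\ref{cor:JinterHatJ-LB-shift} to conclude that whenever this branch triggers, $\lambda$ is already large enough for the corollary's target expression to dominate $\gamma$. The remaining manipulations for the simpler form are routine but need the case split on $\gamma\lambda$ to cover all regimes cleanly.
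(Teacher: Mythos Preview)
Your proof is correct. The paper does not include an explicit proof of this corollary (it is stated without proof in Appendix~\ref{app:extension}), but your argument is exactly the natural extension of the paper's proof of Corollary~\ref{cor:consistency} to the shift-tolerant setting: specialize Theorem~\ref{thm:UBshiftm} to $\eta_1=\eta_2=0$, and use Corollary~\ref{cor:JinterHatJ-LB-shift} (equivalently Corollary~\ref{cor:lb_eta2-shift}) to constrain $\lambda$ whenever the first branch of the theorem triggers.

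One small remark: in the non-shift case the paper's proof of Corollary~\ref{cor:consistency} simply rules out the first branch entirely for $\lambda\in(0,1)$, since $\eta_2=0$ there forces $\lambda\ge 1$. In the shift case this is no longer possible---the first branch can trigger whenever $\lambda\ge(1-\eta^{\text{shift}})/(1+\eta^{\text{shift}})$---and you correctly handle this by showing that in this regime the claimed bound already dominates $\gamma$. This is the one genuinely new ingredient beyond the non-shift proof, and you have it right. The algebraic manipulations for the second, simpler form (factoring out $A^{1/\alpha}$, the convexity/secant bound on $(1+y)^\alpha$ with the case split on $\gamma\lambda$, and the use of $1+\eta^{\text{shift}}\le A$) are all valid.
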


\begin{restatable}{cor}{corrobustnessshift}(robustness)
\label{cor:robustness_s}
For any $\lambda\in (0,1]$ and $\eta^{\text{shift}}\in [0,1)$, the competitive ratio of Algorithm~\ref{alg-no-ass} run with trust parameter $\lambda$ and shift tolerance parameter $\eta^{\text{shift}}$ is upper bounded by $ \frac{(1+2\eta^{\text{shift}}(1+\eta^{\text{shift}}))(1+\gamma 2^{2\alpha}\lambda^{\frac{1}{\alpha}})}{\lambda}$.
\end{restatable}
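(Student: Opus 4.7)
The plan is to derive Corollary~\ref{cor:robustness_s} as a direct consequence of Theorem~\ref{thm:UBshiftm}, by lifting the robustness calculation from the robustness half of Corollary~\ref{cor:consistency} and tracking the new $\eta^{\text{shift}}$-dependent terms. Writing $A := (1+2\eta^{\text{shift}}(1+\eta^{\text{shift}}))^{1/\alpha}\ge 1$, the target bound rewrites as $A^{\alpha}(1+\gamma 2^{2\alpha}\lambda^{1/\alpha})/\lambda$, so the entire $A^{\alpha}$ factor should be peeled off early and what remains should look exactly like the non-shift robustness bound.

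First, I would split on the two cases of Theorem~\ref{thm:UBshiftm}. In the easy case, $\texttt{OPT}(\J)\le\lambda\predT$, the theorem yields ratio at most $\gamma$. Since $\alpha\ge 1$ and $\lambda\in(0,1]$ imply $\lambda^{1/\alpha}\ge \lambda$, we have $A^{\alpha}(1+\gamma 2^{2\alpha}\lambda^{1/\alpha})/\lambda \ge (1+\gamma 2^{2\alpha}\lambda)/\lambda \ge \gamma 2^{2\alpha}\ge \gamma$, so the stated bound dominates.

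In the other case, I would first factor $A$ out of the numerator in Theorem~\ref{thm:UBshiftm}: using $A\ge 1$, the inequality $(A+x)^{\alpha}\le A^{\alpha}(1+x)^{\alpha}$ gives $(A+\gamma^{1/\alpha}(\lambda^{1/\alpha}+\eta_1^{1/\alpha}))^{\alpha}\le A^{\alpha}(1+\gamma^{1/\alpha}(\lambda^{1/\alpha}+\eta_1^{1/\alpha}))^{\alpha}$. The residual ratio now has exactly the form analyzed in the robustness step of Corollary~\ref{cor:consistency}, except that the denominator is $\max\{\lambda,\ \eta_1+((1-\eta_2^{1/\alpha})^{\alpha}-\eta^{\text{shift}})/(1+\eta^{\text{shift}})\}$ rather than $\max\{\lambda,\ \eta_1+(1-\eta_2^{1/\alpha})^{\alpha}\}$. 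I would then rerun the two-way case split of that proof. For $\eta_1\le\lambda$ the denominator is at least $\lambda$, and $(1+2\gamma^{1/\alpha}\lambda^{1/\alpha})^{\alpha}\le 1+\gamma 2^{2\alpha}\lambda^{1/\alpha}$ gives the target ratio $(1+\gamma 2^{2\alpha}\lambda^{1/\alpha})/\lambda$ immediately. For $\eta_1>\lambda$ I would invoke monotonicity of $\texttt{OPT}$ under job removal (a consequence of $F$ being monotone together with Lemma~\ref{lem:union-of-two-schedule}) to get $\eta_2\in[0,1]$ and hence $(1-\eta_2^{1/\alpha})^{\alpha}\ge 0$, so the denominator is at least $\eta_1-\eta^{\text{shift}}/(1+\eta^{\text{shift}})$, which differs from the non-shift value $\eta_1$ only by an additive constant strictly less than $1$.

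The main obstacle is precisely this additive slack: in the narrow window $\lambda<\eta_1<\lambda+\eta^{\text{shift}}/(1+\eta^{\text{shift}})$ the denominator collapses to $\lambda$ rather than to $\eta_1$. I would absorb it by observing that $\eta_1$ is bounded by the constant $2$ in that window, so the numerator is at most a constant times $A^{\alpha}(1+\gamma^{1/\alpha}\lambda^{1/\alpha})^{\alpha}$, which is itself absorbed by the $\gamma 2^{2\alpha}$ term of the target bound; for $\eta_1\ge 2$ the bound $\eta_1-\eta^{\text{shift}}/(1+\eta^{\text{shift}})\ge \eta_1/2$ and the monotonicity analysis of the function $\eta_1\mapsto (1+2\gamma^{1/\alpha}\eta_1^{1/\alpha})^{\alpha}/\eta_1$ over $[\lambda,\infty)$, which is maximized at $\eta_1=\lambda$, finishes the case. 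Multiplying the resulting non-shift bound by the factor $A^{\alpha}$ carried throughout yields exactly the claimed robustness factor $(1+2\eta^{\text{shift}}(1+\eta^{\text{shift}}))(1+\gamma 2^{2\alpha}\lambda^{1/\alpha})/\lambda$.
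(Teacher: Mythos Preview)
The paper does not spell out a proof of this corollary; it is evidently meant to follow from Theorem~\ref{thm:UBshiftm} in the same spirit as the robustness half of Corollary~\ref{cor:consistency}. Your plan follows that spirit, and you correctly flag the one new obstacle: in the shift version the second term of the denominator is $\eta_1+\big((1-\eta_2^{1/\alpha})^{\alpha}-\eta^{\text{shift}}\big)/(1+\eta^{\text{shift}})$, which can fall strictly below $\eta_1$, so the clean lower bound $\max\{\lambda,\eta_1\}$ used in Corollary~\ref{cor:consistency} is no longer available.

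However, your proposed workaround for the ``narrow window'' does not close the gap. The claim that the numerator is ``at most a constant times $A^{\alpha}(1+\gamma^{1/\alpha}\lambda^{1/\alpha})^{\alpha}$, which is itself absorbed by the $\gamma 2^{2\alpha}$ term'' fails when $\lambda$ is small: the constant you pick up is at least $(1+\gamma^{1/\alpha}\eta_1^{1/\alpha})^{\alpha}>1$, and as $\lambda\to 0$ the target $1+\gamma 2^{2\alpha}\lambda^{1/\alpha}$ tends to $1$, so no absorption is possible. Concretely, take $\alpha=2$, $\gamma=2$, $\eta^{\text{shift}}\to 1$, $\eta_2=1$, $\eta_1=\tfrac12$, and $\lambda=10^{-4}$: the bound supplied by Theorem~\ref{thm:UBshiftm} is about $10.6/\lambda$, while the corollary's target is about $6.6/\lambda$. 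The same kind of loss (an extra factor $2$) appears in your $\eta_1\ge 2$ branch. In other words, the packaged statement of Theorem~\ref{thm:UBshiftm} is genuinely weaker than the corollary in this regime, and no case split on $\eta_1$ alone can recover the stated bound.

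A clean fix is to go back to the ingredients of the theorem rather than its final inequality. In the case $\texttt{OPT}(\J)>\lambda\,\texttt{OPT}(\predJ)$, bound the online part directly by competitiveness and subset monotonicity of $\texttt{OPT}$:
\[
\cst(S^{on},\,\J\setminus\Js)\ \le\ \gamma\,\texttt{OPT}(\J\setminus\Js)\ \le\ \gamma\,\texttt{OPT}(\J),
\]
combine with Lemma~\ref{lem:sshift} ($\cst(\Ss,\Js)\le A^{\alpha}\,\texttt{OPT}(\predJ)$) via Lemma~\ref{lem:union-of-two-schedule}, and then use $\texttt{OPT}(\predJ)\le \texttt{OPT}(\J)/\lambda$. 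This gives
\[
\frac{\cst(S,\J)}{\texttt{OPT}(\J)}\ \le\ \big(A\lambda^{-1/\alpha}+\gamma^{1/\alpha}\big)^{\alpha}
\ =\ \frac{(A+\gamma^{1/\alpha}\lambda^{1/\alpha})^{\alpha}}{\lambda}
\ \le\ \frac{A^{\alpha}\,(1+\gamma^{1/\alpha}\lambda^{1/\alpha})^{\alpha}}{\lambda},
\]
and the elementary inequality $(1+\gamma^{1/\alpha}\lambda^{1/\alpha})^{\alpha}\le 1+\gamma\,2^{2\alpha}\lambda^{1/\alpha}$ (secant bound for $\gamma^{1/\alpha}\lambda^{1/\alpha}\le 1$, and $(1+x)^{\alpha}\le 2^{\alpha}x^{\alpha}=2^{\alpha}\gamma\lambda\le 2^{\alpha}\gamma\lambda^{1/\alpha}$ otherwise) yields exactly the claimed robustness factor.
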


\section{Missing analysis from Appendix~\ref{sec:app_extension}}
\label{app:extension}

\begin{lemma}
\label{lem:smooth}
    For the objective of minimizing total integral weighted flow time plus energy with $\alpha\geq 1$, the cost function is $\max(4\cdot\max_{j} v_j,2^{\alpha}-1)$-smooth.
\end{lemma}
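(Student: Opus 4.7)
The plan is to verify the three structural requirements on the function $\beta(\J) := \max(4\max_{j \in \J} v_j,\, 2^\alpha - 1)$, followed by the two smoothness inequalities in the definition. The structural checks are quick: $\beta \geq 1$ since $2^\alpha - 1 \geq 1$ for $\alpha \geq 1$; monotonicity follows because enlarging $\J$ can only enlarge $\max_{j \in \J} v_j$; and smoothness of $\beta$ as a function of $\J$ holds because perturbing $v_j \mapsto v_j(1+\eta_j)$ scales $\max_j v_j$ by at most $(1+\max_j \eta_j)$, and this same factor trivially bounds the growth of $\max(\,\cdot\,,\, 2^\alpha-1)$.

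For the smoothness of the optimal cost, the plan is to take an optimal schedule $S_2^\star$ for $\J_2$ and build a feasible schedule $S_1$ for $\J_1$ by processing each $j \in \J_1$ at speed $(1+\eta)\, s_{\pi(j)}^{S_2^\star}(t-\eta')$ for $t \geq r_{\pi(j)} + \eta'$. The shift by $\eta'$ handles release-time drift (using $r_j \leq r_{\pi(j)} + \eta'$), and the $(1+\eta)$ speed-up handles processing-time inflation (since $(1+\eta) p_{\pi(j)} \geq p_j$). Then $c_j^{S_1} \leq c_{\pi(j)}^{S_2^\star} + \eta'$, and together with $r_j \geq r_{\pi(j)} - \eta'$ this gives $c_j^{S_1} - r_j \leq (c_{\pi(j)}^{S_2^\star} - r_{\pi(j)}) + 2\eta'$. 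Applying the weight inequality $v_j \leq (1+\eta)v_{\pi(j)}$ only to the scaled flow-time term (not to the $\eta'$ correction) and summing yields
\[ F(S_1,\J_1) \leq (1+\eta)\, F(S_2^\star, \J_2) + 2\eta' \sum_{j \in \J_1} v_j. \]
The total speed of $S_1$ at $t$ equals $(1+\eta)$ times the total speed of $S_2^\star$ at $t-\eta'$, so $E(S_1) = (1+\eta)^\alpha E(S_2^\star)$. Combining both parts and using the chord bound $(1+\eta)^\alpha \leq 1 + (2^\alpha - 1)\eta$ on $[0,1]$, which follows from convexity of $\eta \mapsto (1+\eta)^\alpha$ for $\alpha \geq 1$ (the curve lies below the chord joining $(0,1)$ and $(1,2^\alpha)$), gives $\OPT(\J_1) \leq \cst(S_1,\J_1) \leq (1+(2^\alpha-1)\eta)\,\OPT(\J_2) + 2\eta'\,|\J_1|\,\max_{j\in\J_1} v_j$, matching the required bound (with slack in the constant $4$).

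For the shifted-work-profile inequality, the single relevant input is the hypothesis $w^j_{S_1}(t) \leq w^{\pi(j)}_{S_2}(t-\eta')$ evaluated at $t = c_{\pi(j)}^{S_2} + \eta'$: the right-hand side vanishes there, so $c_j^{S_1} \leq c_{\pi(j)}^{S_2} + \eta'$, and combined with $r_j \geq r_{\pi(j)} - \eta'$ this again gives $c_j^{S_1} - r_j \leq (c_{\pi(j)}^{S_2} - r_{\pi(j)}) + 2\eta'$. Multiplying by $v_j$ and using $v_j \leq v_{\pi(j)}$ on only the first summand yields $v_j(c_j^{S_1} - r_j) \leq v_{\pi(j)}(c_{\pi(j)}^{S_2} - r_{\pi(j)}) + 2\eta' v_j$; summing gives $F(S_1,\J_1) \leq F(S_2,\J_2) + 2\eta'\sum_{j\in\J_1} v_j \leq F(S_2,\J_2) + 2\max_{j\in\J_1} v_j \cdot |\J_1|\cdot \eta'$, comfortably absorbed by $\beta(\J_1)|\J_1|\eta'$.

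The main obstacle is in the first inequality: the construction of $S_1$ from $S_2^\star$ must simultaneously realize the $(1+\eta)$ speed-up to restore processing-time feasibility and the $\eta'$ time-shift to restore release-time feasibility, and then pay only $(1+\eta)^\alpha$ in energy, which is controlled through the convex chord bound. A subtle bookkeeping point is that the weight inequality $v_j \leq (1+\eta)v_{\pi(j)}$ must be applied only to the scaled flow-time term, so that the $2\eta'$ correction is measured on $\sum_{j\in\J_1} v_j$ (controllable by $\max_{j\in\J_1} v_j$) rather than on $\sum_j v_{\pi(j)}$, which would live on $\J_2$ and could not be bounded by $\beta(\J_1)$.
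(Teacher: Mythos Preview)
Your proof is correct and follows essentially the same approach as the paper: construct a feasible schedule for $\J_1$ by time-shifting and speed-scaling an optimal schedule for $\J_2$, bound the energy via $(1+\eta)^\alpha \leq 1+(2^\alpha-1)\eta$, and bound the flow time via $c_j^{S_1} \leq c_{\pi(j)}^{S_2^\star}+\eta'$ together with $r_j \geq r_{\pi(j)}-\eta'$. Your bookkeeping is in fact slightly sharper than the paper's---by applying $v_j \leq (1+\eta)v_{\pi(j)}$ only to the flow-time term and not to the $2\eta'$ correction, you obtain $2\max_{j\in\J_1} v_j$ rather than the paper's $4\max_j v_j$ (the paper multiplies the whole expression by $(1+\eta)\leq 2$)---and you also explicitly verify the monotonicity and smoothness of $\beta(\cdot)$ itself, which the paper leaves implicit.
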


\begin{proof}
   Let $\J_1,\J_2$ with $|\J_1| = |\J_2|$, a bijection $\pi:\J_1\longrightarrow\J_2$ and $S_1$ and $S_2$ feasible for $\J_1$ and $\J_2$. 
   
   We start with the first smoothness condition. Assume that for some $\eta, \eta'\in [0,1]$ and for all $j\in \J_1$, $|r_j -r_{\pi(j)}|\leq \eta'$, $p_j\leq p_{\pi(j)}(1+ \eta)$ and $v_j\leq v_{\pi(j)}(1+ \eta)$. Let $S^*$ be an optimal schedule for $\J_2$ and consider the schedule $S = \{s_j(t):= (1+\eta)\cdot s^*_{\pi(j)}(t-\eta')\}_{j\in \J_1}$ for $\J_1$.

   Note that for all $j\in \J_1$ and $t\geq 0$, $s^*_{\pi(j)}(t-\eta')>0$ only if $t-\eta'>r_{\pi(j)}$. Since we assumed $|r_j -r_{\pi(j)}|\leq \eta'$, we get that $s_j(t)>0$ only if $t\geq r_j$, hence $S$ is feasible for $\J_1$. Next, note that 
   \[
E(S) = E(S^*)(1+\eta)^{\alpha}\leq E(S^*)(1+(2^{\alpha} -1 )\eta).
   \]
   Recall that $c_{j}^S$ denotes the completion time of $j$ by $S$. Since $p_j\leq (1+\eta)p_{\pi(j)}$, and since we assumed that $|r_j -r_{\pi(j)}|\leq \eta'$, we have,  by definition of $S$, that for all $j$, $c_j^S\leq \eta' + c_{\pi(j)}^{S^*}$.
   
   Hence, 
   \begin{align*}
       F(S,\J_1) &= \sum_{j\in \J_1} v_j(c_j^S - r_j)\\
       &\leq \sum_{j\in \J_1} v_{\pi(j)}(1+\eta)(c_{\pi(j)}^{S^*} +\eta' - (r_{\pi(j)} - \eta')) \\
       &= (1+\eta)F(S^*, \J_2) + 2\max_j v_j|\J_1|(1+\eta)\eta'\\
       &\leq (1+\eta)F(S^*, \J_2) + 4\max_j v_j|\J_1|\eta' &\eta\in [0,1]\\
       &\leq (1+(2^{\alpha} -1)\eta)F(S^*, \J_2) + 4\max_j v_j|\J_1|\eta' &\alpha\geq 1.
   \end{align*}

Therefore,
\begin{align*}
    \OPT(\J_1)&\leq \cst(S, \J_1) \\
    &= E(S) + F(S,\J_1) \\
    &\leq (E(S^*) + F(S^*,\J_2))\cdot(1+(2^{\alpha} -1 )\eta) + 4\max_j v_j|\J_1|\eta'\\
    &= \cst(S^*, \J_2)(1+(2^{\alpha} -1 )\eta)+ 4\max_j v_j|\J_1|\eta'\\
    &= \OPT(\J_2)(1+(2^{\alpha} -1 )\eta)+ 4\max_j v_j|\J_1|\eta'.
\end{align*}

We now show the second smoothness condition. Assume that for all $j\in \J_1$, $p_j \leq p_{\pi(j)}$, $v_j \leq v_{\pi(j)}$, $r_j \geq r_{\pi(j)} - \eta'$ and that for all $t\geq r_{\pi(j)}+\eta'$, $w_{S_1}^j(t) \leq w_{S_2}^{\pi(j)}(t- \eta')$. Then, in particular $w_{S_1}^j(c_{\pi(j)}^{S_2}+\eta') \leq w_{S_2}^{\pi(j)}(c_{\pi(j)}^{S_2}) = 0$, hence  $c_j^{S_1} = \min\{t\geq r_j: w_{S_1}^j(t) = 0\} \leq c_{\pi(j)}^{S_2}+\eta'$. By a similar argument as above, we conclude that:
\[
F(S,\J_1) \leq F(S_2, \J_2) + 4\max_j v_j|\J_1|\eta'.
\]

\end{proof}

\lemsshift*

\begin{proof}

For simplifying the exposition, in the remainder of the proof, we write $\bar{\eta}$ instead of $\sft$.

We first analyse the energy cost.

\allowdisplaybreaks{
\begin{align}
    E(\Ss) = \int_{t\geq \tl} \left(\sum_{j\in \J^{\text{shift}}_{\geq \tl}}\hat{s}_{j}\left( t- \bar{\eta}\right)\right)^{\alpha}\text{dt}
    &= \int_{t\geq \tl-\bar{\eta}} \left(\sum_{j\in \J^{\text{shift}}_{\geq \tl}}\hat{s}_{j}\left( t\right)\right)^{\alpha}\text{dt} \nonumber\\
    &\leq \int_{t} \left(\sum_{j\in \predJ^{\text{up}}}\hat{s}_{j}\left( t\right)\right)^{\alpha}\text{dt}
     = E(\hat{S}).\label{eq:shiftE}
\end{align}}

Next, we analyze the quality cost. Note that by definition of $\Ss$, we have that for all $ j \in \Js$ and $t\geq \hat{r}_j + \bar{\eta}$ , the same amount of work for $(j, r_j, p_j,v_j)$ has been processed by $\Ss$ at time $t$ as the amount of work for $(j, \hat{r}_j, \hat{p}_j\big(1+\frac{\eta^{\text{shift}}}{\beta(\predJ)}\big),\hat{v}_j\big(1+\frac{\eta^{\text{shift}}}{\beta(\predJ)}\big))$ processed by $\hat{S}$ at time $t- \bar{\eta}$.
Hence, for all $t\geq \hat{r}_j + \bar{\eta}$,
\[
w_{\Ss}^{(j, r_j, p_j,v_j)}(t) = w_{\hat{S}}^{(j, \hat{r}_j, \hat{p}_j\big(1+\frac{\eta^{\text{shift}}}{\beta(\predJ)}\big),\hat{v}_j\big(1+\frac{\eta^{\text{shift}}}{\beta(\predJ)}\big))}\left(t- \bar{\eta}\right).
\]

By definition of $\Js$, we also have that for all $(j,r_j,p_j, v_j)\in \Js$, $|r_j-\hat{r}_j|\leq \bar{\eta}$, $p_j \leq \hat{p}_j\big(1+\frac{\eta^{\text{shift}}}{\beta(\predJ)}\big)$ and $v_j \leq \hat{v}_j\big(1+\frac{\eta^{\text{shift}}}{\beta(\predJ)}\big)$. Hence, we can apply the second smoothness condition with $\J_1 = \Js$ and $\J_2 = \{(j, \hat{r}_j, \hat{p}_j\big(1+\frac{\eta^{\text{shift}}}{\beta(\predJ)}\big),\hat{v}_j\big(1+\frac{\eta^{\text{shift}}}{\beta(\predJ)}\big)): (j, r_j,p_j,v_j)\in \Js\}\subseteq \predJ^{\text{up}}$. This gives:

\begin{align}
F(\Ss, \Js) &\leq F(\hat{S}, \J_2) + \bar{\eta}\beta(\Js)|\Js|\nonumber\\
    &\leq F(\hat{S}, \predJ^{\text{up}}) + \bar{\eta}\beta(\Js)|\Js|\nonumber\\
    &= F(\hat{S}, \predJ^{\text{up}}) + \sft \cdot \beta(\Js)|\Js|\nonumber\\
     &\leq  F(\hat{S}, \predJ^{\text{up}}) + \sft \cdot \beta(\predJ)\left(1+\frac{\eta^{\text{shift}}}{\beta(\predJ)}\right)|\predJ|\nonumber\\
    &= F(\hat{S}, \predJ^{\text{up}}) + \eta^{\text{shift}}\left(1+\frac{\eta^{\text{shift}}}{\beta(\predJ)}\right)\OPT(\predJ)\nonumber\\
    &\leq F(\hat{S}, \predJ^{\text{up}}) + \eta^{\text{shift}}(1+\eta^{\text{shift}})\OPT(\predJ),\label{eq:shiftF}
\end{align}

where the first inequality is by the second smoothness condition, the second one is is by monotonicity of $F$, the equality is by definition of $\bar{\eta}$, and the third inequality is by the smoothness and monotonicity of $\beta$. The last inequality is since $\beta\geq 1$.

Therefore, we get
\begin{align*}
    \cst(\Ss, \Js) &= E(\Ss)+ F(\Ss, \Js)\nonumber\\
    &\leq E(\hat{S}) + F(\hat{S}, \predJ^{\text{up}}) + \eta^{\text{shift}}(1+\eta^{\text{shift}})\OPT(\predJ)\\
    &=  \OPT(\predJ^{\text{up}}) + \eta^{\text{shift}}(1+\eta^{\text{shift}})\predT\\
    &\leq  \left(1+ \beta(\predJ^{\text{up}})\frac{\eta^{\text{shift}}}{\beta(\predJ)}\right)\OPT(\predJ) + \eta^{\text{shift}}(1+\eta^{\text{shift}})\predT\\
&\leq  \left(1+ \beta(\predJ)\left(1+\frac{\eta^{\text{shift}}}{\beta(\predJ)}\right)\frac{\eta^{\text{shift}}}{\beta(\predJ)}\right)\OPT(\predJ) + \eta^{\text{shift}}(1+\eta^{\text{shift}})\predT\\
    &\leq (1 + \eta^{\text{shift}}(1+\eta^{\text{shift}}))\predT +  \eta^{\text{shift}}(1+\eta^{\text{shift}})\predT\\
    &=(1+2\eta^{\text{shift}}(1+\eta^{\text{shift}}))\predT,
\end{align*}

where the first inequality is by (\ref{eq:shiftE}) and (\ref{eq:shiftF}), the second inequality is by the first smoothness condition with $\J_1 = \predJ^{\text{up}}$ and $\J_2 = \predJ$, the third inequality is by smoothness of $\beta$ and the last inequality since $\beta\geq 1$.

\end{proof}

\corJshift*

\begin{proof}
We prove the result by contradiction. Assume that $\texttt{OPT}(\J^{\text{shift}})< \Big[\left(1-\eta_2^{\frac{1}{\alpha}}\right)^{\alpha}-\eta^{\text{shift}}\Big]\Big/(1+ \eta^{\text{shift}})\predT$. Then, we have:
\begin{align*}
    \texttt{OPT}(\predJ^{\text{shift}})&\leq \left(1+ \beta(\predJ^{\text{shift}})\cdot\frac{\eta^{\text{shift}}}{\beta(\predJ)}\right)\texttt{OPT}(\J^{\text{shift}}) + \beta(\predJ^{\text{shift}})|\predJ^{\text{shift}}|\cdot \sft\\
    &\leq \left(1+ \beta(\predJ)\cdot\frac{\eta^{\text{shift}}}{\beta(\predJ)}\right)\texttt{OPT}(\J^{\text{shift}}) + \beta(\predJ)|\predJ^{\text{shift}}|\cdot \sft\\
    &\leq (1+\eta^{\text{shift}})\texttt{OPT}(\J^{\text{shift}}) + \eta^{\text{shift}}\predT\\
    &< \left(1-\eta_2^{\frac{1}{\alpha}}\right)^{\alpha}\predT.
\end{align*}
where the first inequality is by the first smoothness condition with $\J_1 = \predJ^{\text{shift}}$, $\J_2 = \J^{\text{shift}}$, and the second one is by monotonicity of $\beta$.

Next, by definition of the error $\eta_2$, we have $\texttt{OPT}(\predJ\setminus \predJ^{\text{shift}}) =  \eta_2\cdot\predT$. Hence, by Lemma~\ref{lem:union-of-two-schedule}, there exists a schedule $S$ for $\predJ^{\text{shift}} \cup (\predJ\setminus\predJ^{\text{shift}}) = \predJ$ such that
\begin{align*}
   \cst(S, \predJ) 
    &\le \left(\texttt{OPT}(\predJ\setminus \predJ^{\text{shift}})^{\frac{1}{\alpha}} + \texttt{OPT}(\predJ^{\text{shift}})^{\frac{1}{\alpha}}\right)^{\alpha}\\
     &< \left((\eta_2\predT)^{\frac{1}{\alpha}} + \left(\left(1-\eta_2^{\frac{1}{\alpha}}\right)^{\alpha}\predT\right)^{\frac{1}{\alpha}}\right)^{\alpha}\\
    & = \predT,
\end{align*}

    which contradicts the definition of $\predT$.
\end{proof}

\corlbetashift*
\begin{proof}
 Assume that $\texttt{OPT}(\J)\leq \lambda\predT$. Since $\J^{\text{shift}}\subseteq \J$, we get $\texttt{OPT}(\J^{\text{shift}})\leq \lambda\predT$. Hence, we have:
\begin{align*}
    \texttt{OPT}(\predJ^{\text{shift}})&\leq \left(1+ \beta(\predJ^{\text{shift}})\cdot\frac{\eta^{\text{shift}}}{\beta(\predJ)}\right)\texttt{OPT}(\J^{\text{shift}}) + \beta(\predJ^{\text{shift}})|\predJ^{\text{shift}}|\cdot \sft\\
    &\leq \left(1+ \beta(\predJ)\cdot\frac{\eta^{\text{shift}}}{\beta(\predJ)}\right)\texttt{OPT}(\J^{\text{shift}}) + \beta(\predJ)|\predJ^{\text{shift}}|\cdot \sft\\
    &\leq (1+\eta^{\text{shift}})\texttt{OPT}(\J^{\text{shift}}) + \eta^{\text{shift}}\predT\\
    &\leq (  \lambda(1+\eta^{\text{shift}})+ \eta^{\text{shift}})\predT.
\end{align*}
where the first inequality is by the first smoothness condition with $\J_1 = \predJ^{\text{shift}}$, $\J_2 = \J^{\text{shift}}$, and the second one is by monotonicity of $\beta$.

 Next, assume by contradiction, that $\eta_2 <  \left(1-(  \lambda(1+\eta^{\text{shift}})+ \eta^{\text{shift}})^{\frac{1}{\alpha}}\right)^{\alpha}$, which implies that $\texttt{OPT}(\predJ\setminus \predJ^{\text{shift}})<  \left(1-(  \lambda(1+\eta^{\text{shift}})+ \eta^{\text{shift}})^{\frac{1}{\alpha}}\right)^{\alpha}\predT$. 
 
 Then, by Lemma~\ref{lem:union-of-two-schedule}, there exists a schedule $S$ for $\predJ^{\text{shift}} \cup (\predJ\setminus \predJ^{\text{shift}}) = \predJ$ such that
\begin{align*}
   &\cst(S, \predJ) \\
    &\le \left(\texttt{OPT}(\predJ\setminus \predJ^{\text{shift}})^{\frac{1}{\alpha}} + \texttt{OPT}(\predJ^{\text{shift}})^{\frac{1}{\alpha}}\right)^{\alpha}\\
     &< \left(((  \lambda(1+\eta^{\text{shift}})+ \eta^{\text{shift}})\predT)^{\frac{1}{\alpha}} + \left(\left(1-(  \lambda(1+\eta^{\text{shift}})+ \eta^{\text{shift}})^{\frac{1}{\alpha}}\right)^{\alpha}\predT\right)^{\frac{1}{\alpha}}\right)^{\alpha}\\
    & = \predT,
\end{align*}

    which contradicts the definition of $\predT$. Hence, $\eta_2 \geq \left(1-(  \lambda(1+\eta^{\text{shift}})+ \eta^{\text{shift}})^{\frac{1}{\alpha}}\right)^{\alpha}$.
\end{proof}

\thmUBshift*

\begin{proof}

Similarly as in the proof of Theorem~\ref{thm:competitive_ratio}, we have that if $\texttt{OPT}(\mathcal{J})\leq \lambda \predT$, then 
\begin{equation}
\label{eq:ub_cost1-shift}
    \cst(S,\J)\leq \gamma\cdot\trueT.
\end{equation} 

Next, we assume that there is $\tl\geq 0$ such that $\texttt{OPT}(\mathcal{J}_{\leq \tl})>\lambda \predT$. Hence we immediately have:
\[
\trueT \geq \texttt{OPT}(\mathcal{J}_{\leq \tl})> \lambda \predT.
\]

By Corollary~\ref{cor:JinterHatJ-LB-shift} and by definition of the error $\eta_1$, we also get the following lower bound on the optimal schedule:
    \begin{align*}
        \trueT &\geq \texttt{OPT}(\mathcal{J}\setminus \J^{\text{shift}}) + \texttt{OPT}(\J^{\text{shift}})\\
        &\geq \eta_1\predT + \Big[\left(1-\eta_2^{\frac{1}{\alpha}}\right)^{\alpha}-\eta^{\text{shift}}\Big]\Big/(1+ \eta^{\text{shift}})\predT.
    \end{align*}
    
Therefore, 
\begin{equation}
    \label{eq:lb_opt-shift}
    \trueT\geq \max\left\{\lambda, \eta_1 + \Big[\left(1-\eta_2^{\frac{1}{\alpha}}\right)^{\alpha}-\eta^{\text{shift}}\Big]\Big/(1+ \eta^{\text{shift}})\right\}\predT.
\end{equation}

We now upper bound the cost of the schedule output by our algorithm. By the same argument as in the proof of Lemma~\ref{lem:upper_bound_schedule}, we get:
\[
\cst(S^{on}, \J\setminus\Js)  \leq  \gamma\cdot \predT\left(\lambda^{\frac{1}{\alpha}} + \eta_1^{\frac{1}{\alpha}}\right)^{\alpha}.
\]

Now, from Lemma~\ref{lem:sshift}, we have 

\[
\cst(\Ss, \Js)\leq (1+2\eta^{\text{shift}}(1+\eta^{\text{shift}}))\predT.
\]

Therefore, by applying Lemma~\ref{lem:union-of-two-schedule}, we get:
\begin{align}
    \cst(S, \J) &\leq \left(\cst(\Ss, \Js)^{\frac{1}{\alpha}} + \cst(S^{on}, \J\setminus\Ss)^{\frac{1}{\alpha}}\right)^{\alpha}\nonumber\\
    &\leq \predT\left((1+2\eta^{\text{shift}}(1+\eta^{\text{shift}}))^{\frac{1}{\alpha}} + \gamma^{\frac{1}{\alpha}}(\lambda^{\frac{1}{\alpha}} + \eta_1^{\frac{1}{\alpha}})\right)^{\alpha}\label{eq:ub_cost-shift}.
\end{align}

    Hence, we get the following upper bound on the competitive ratio of Algorithm~\ref{alg-no-ass-shift}:
    \begin{align*}
    &\;\;\;\frac{\cst(S, \J)}{\trueT}\\
    &= \mathbf{1}_{\texttt{OPT}(\mathcal{J})\leq \lambda \predT}\frac{\cst(S, \J)}{\trueT} + \mathbf{1}_{\texttt{OPT}(\mathcal{J})> \lambda \predT}\frac{\cst(S, \J)}{\trueT}
    \\
    &\le \mathbf{1}_{\texttt{OPT}(\mathcal{J})\leq \lambda \predT}\cdot \gamma +  \mathbf{1}_{\texttt{OPT}(\mathcal{J})> \lambda \predT}\frac{\left((1+2\eta^{\text{shift}}(1+\eta^{\text{shift}}))^{\frac{1}{\alpha}} + \gamma^{\frac{1}{\alpha}}(\lambda^{\frac{1}{\alpha}} + \eta_1^{\frac{1}{\alpha}})\right)^{\alpha}}{\max\left\{\lambda, \eta_1 + \Big[\left(1-\eta_2^{\frac{1}{\alpha}}\right)^{\alpha}-\eta^{\text{shift}}\Big]\Big/(1+ \eta^{\text{shift}})\right\}}.
    \end{align*}

\end{proof}
\section{Additional Experiments}
\label{sec:appexperiments}

  \begin{figure*}[t!]
 	\centering
    \subfigure{\includegraphics[width=0.23\textwidth]{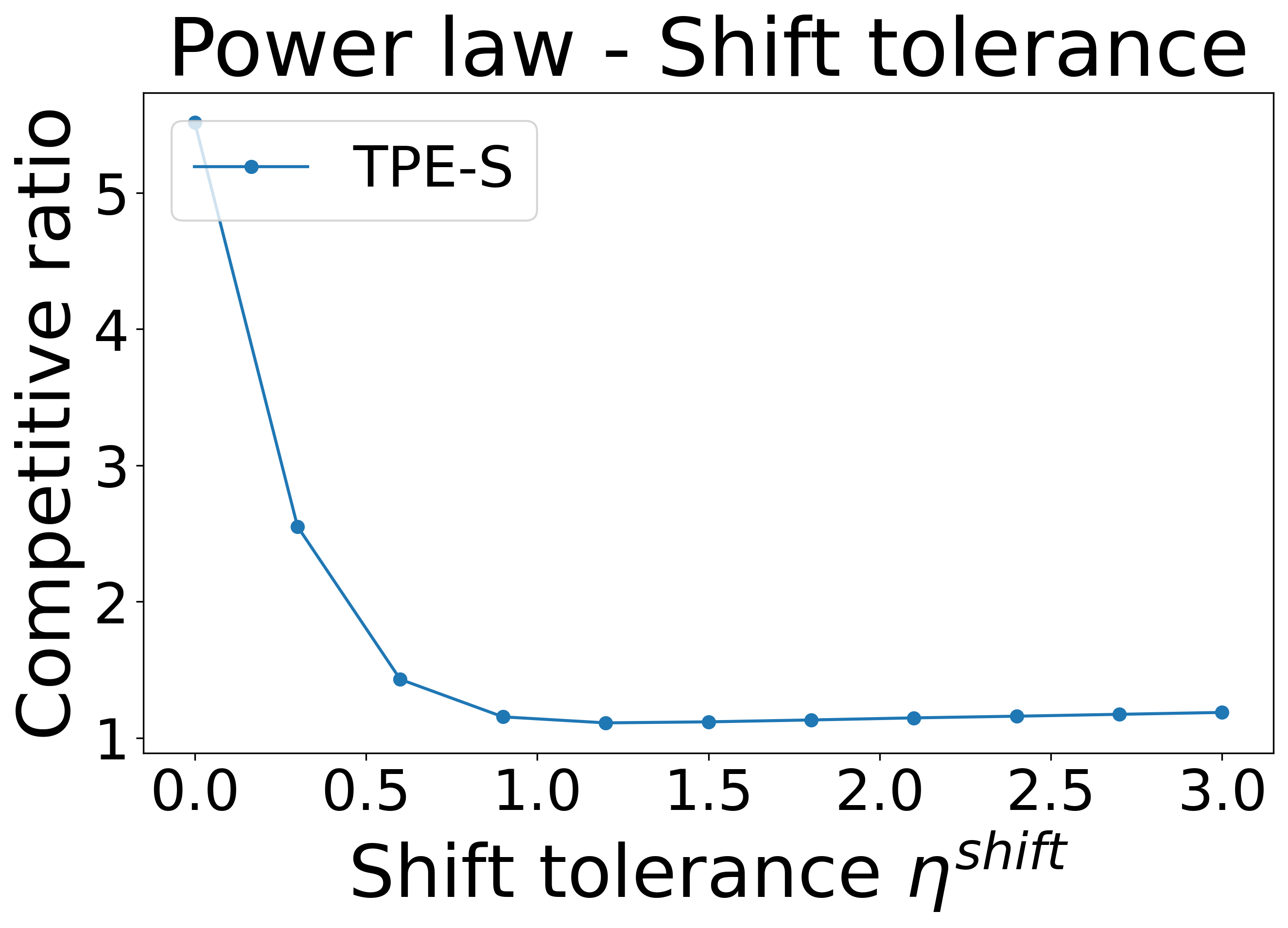}} 
    \subfigure{\includegraphics[width=0.23\textwidth]{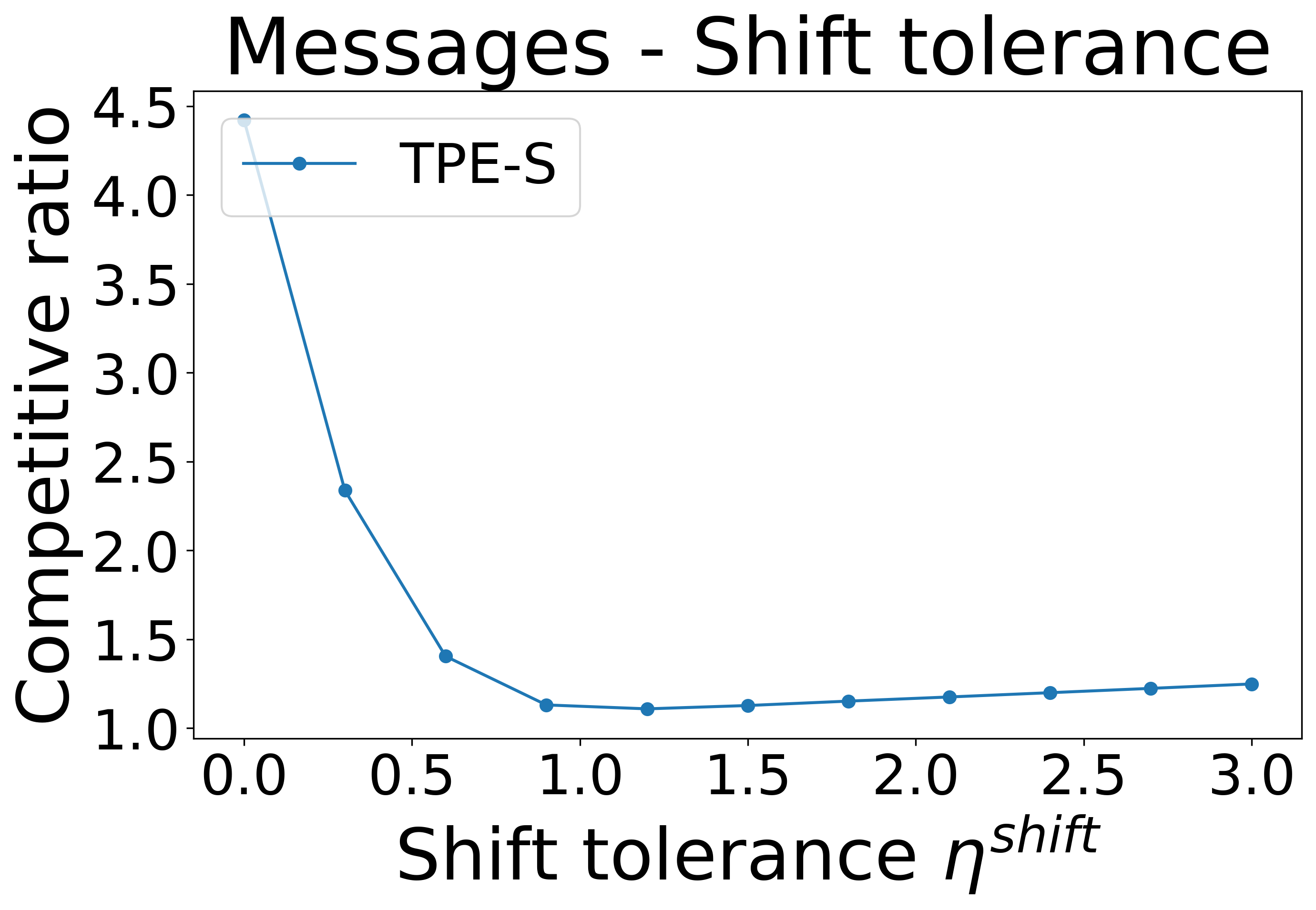}}
    \subfigure{\includegraphics[width=0.23\textwidth]{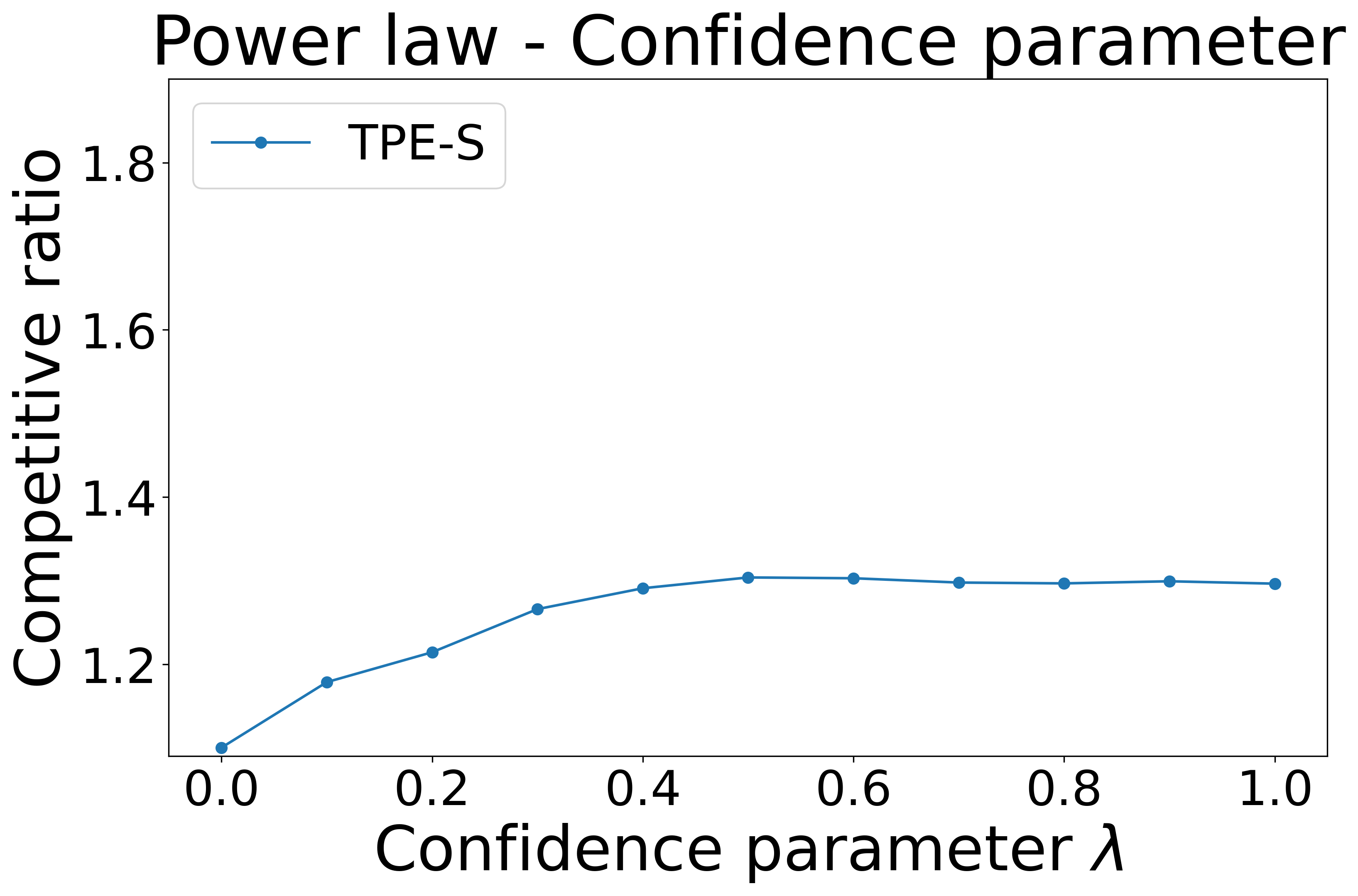}} 
    \subfigure{\includegraphics[width=0.23\textwidth]{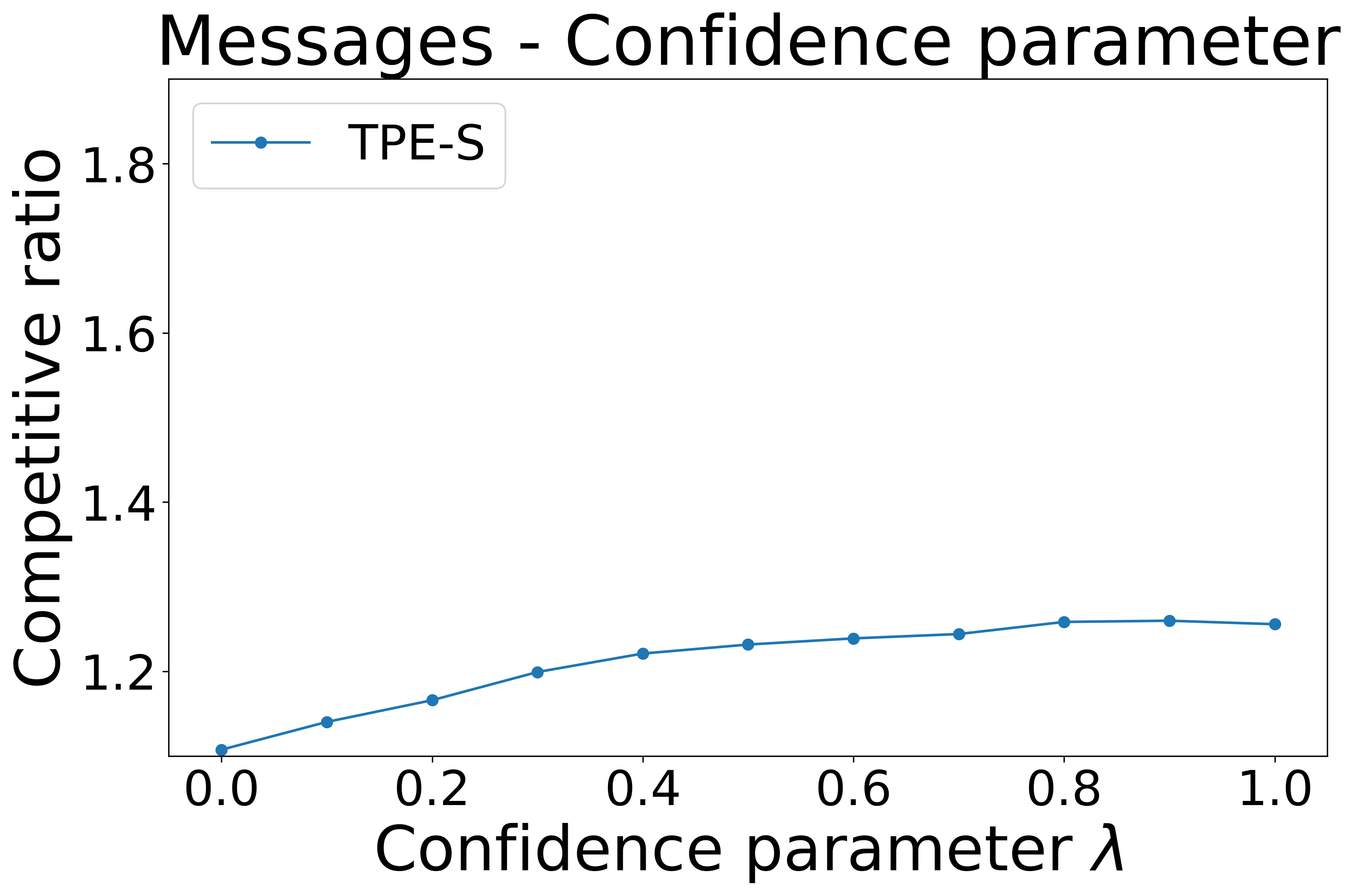}}
 	\caption{The competitive ratio achieved by our algorithm, \tpes \, and the benchmark algorithm as a function of the shift tolerance $\eta^{shift}$ (row 1) and as a function of the confidence parameter $\lambda$ (row 2).}
 	\label{fig:exp:add}
 \end{figure*}
Here, we also evaluate the impact of setting the parameters $\eta^{shift}$ and $\lambda$ on the two other datasets (power law and real datasets). The result are presented in Figure~\ref{fig:exp:add}. We observe similar behaviors as for the periodic dataset.

\section{Comparison with \cite{BamasMRS20, antoniadis2021novel}}
\label{appendix_discussion}

In \cite{BamasMRS20, antoniadis2021novel}, the authors consider the energy minimization problem with deadlines, which, as detailed in Section~\ref{sec:various_objectives}, is a special case of our general framework. For this problem, they propose two different learning-augmented algorithms. We present here some elements of comparison with our algorithm for (GESP). We first show in Section~\ref{appe:discussion_model} and~\ref{app:CR_comparison} that our prediction model and results generalize the ones in \cite{BamasMRS20}: they are similar  in the case of \textit{uniform} deadlines and generalize the ones in \cite{BamasMRS20} for \textit{general} deadlines. We also note that they are incomparable to those in \cite{antoniadis2021novel}. In Section~\ref{app:techniques_discussion}, we then discuss the algorithmic differences with \cite{BamasMRS20} for the special case of energy with uniform deadlines.

\subsection{Discussion about the prediction and error model in comparison to \cite{BamasMRS20, antoniadis2021novel}}
\label{appe:discussion_model}


At a high level, the prediction models considered in \cite{BamasMRS20} and \cite{antoniadis2021novel} are qualified by \cite{antoniadis2021novel} as 'orthogonal'. In \cite{antoniadis2021novel}, the number of jobs is known in advance, as well as the exact processing time for each job, however, the release time and deadlines are only revealed when a job arrives, and the error is proportional to the maximal shift in these values. On the contrary, in \cite{BamasMRS20}, the release times and deadlines are known in advance, and the prediction regards the total workload at each time step. The error is then defined as a function of the total variation of workload, which is the analog of additional and missing jobs in our setting. Note that in the model in \cite{antoniadis2021novel}, the predicted and true set of jobs need to contain exactly the same number of jobs, whereas the model in \cite{BamasMRS20} and our model allow for extra or missing jobs.


\paragraph{Comparison with the prediction model and the error metrics in \cite{BamasMRS20} for energy minimization with uniform deadlines.} 

Note that the prediction model used in  \cite{BamasMRS20} for the energy minimization with deadlines problem is slightly different than ours: the prediction is the total workload $w_i^{\text{pred}}$ that arrives at each time step $i$ and needs to be scheduled before time $i+D$, and the error metric is defined as 
\begin{equation}
\label{eq:bamas_metric}
   \text{err}(w^{\text{real}}, w^{\text{pred}}) := \sum_i ||w_i^{\text{real}} - w_i^{\text{pred}}||^{\alpha},
\end{equation}

where $w^{\text{real}}$ denotes the real workload at each time step.

However, in the specific case of energy minimization under \textit{uniform} deadline constraints, our prediction model and error metric and the ones from \cite{BamasMRS20} are comparable: a workload $w^{\text{real}}_i$ that arrives at time $i$ is equivalent in our setting to receiving $w^{\text{real}}_i$ unit jobs with release time $r=i$ and a common deadline $d = i + D$. Moreover, we prove the following lemma, which shows that a small error in the sense of \cite{BamasMRS20} induces a small error $\eta(\J, \predJ)$ in the sense defined in Section~\ref{sec:preliminaries}.

\begin{lem}
For any constant $D>0$, and any instance $(\J, \predJ)$, where at each time $i$, $\J$ is composed of $w_i^{\text{real}}$ jobs of one time unit with deadline $i+D$ and $\predJ$ is composed of $w_i^{\text{pred}}$ jobs of one time unit with deadline $i+D$, we have:
\[
\eta(\J, \predJ)\cdot \texttt{OPT}(\predJ)=  \max\{\texttt{OPT}(\J\setminus \predJ), \texttt{OPT}(\predJ \setminus \J)\}\leq D \cdot\text{err}(w^{\text{real}}, w^{\text{pred}}).
\]
\end{lem}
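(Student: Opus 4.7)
The plan is to exhibit an explicit feasible schedule for each of $\J\setminus\predJ$ and $\predJ\setminus\J$ and directly bound its cost by $\text{err}(w^{\text{real}}, w^{\text{pred}})$.

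First, I would observe that because every job in this setting is unit-work with a common deadline offset $D$, we may identify $\min(w_i^{\text{real}}, w_i^{\text{pred}})$ predicted jobs at each time $i$ with the corresponding realized jobs. Consequently $\J\setminus\predJ$ consists, at each integer time $i$, of a batch of $\Delta_i := \max(0, w_i^{\text{real}} - w_i^{\text{pred}})$ unit jobs with deadline $i+D$, and symmetrically $\predJ\setminus\J$ consists of batches of size $\Delta'_i := \max(0, w_i^{\text{pred}} - w_i^{\text{real}})$. Since at each $i$ at most one of $\Delta_i, \Delta'_i$ is nonzero, $|w_i^{\text{real}} - w_i^{\text{pred}}|^{\alpha} = \Delta_i^{\alpha} + (\Delta'_i)^{\alpha}$, and hence $\sum_i \bigl(\Delta_i^{\alpha} + (\Delta'_i)^{\alpha}\bigr) = \text{err}(w^{\text{real}}, w^{\text{pred}})$.

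Next, I would construct the following feasible schedule $S$ for $\J\setminus\predJ$: process the $i$-th batch uniformly on $[i, i+1]$, running each of its $\Delta_i$ unit jobs at speed $1$ so that the total machine speed is $\Delta_i$ on that interval. Because $D \geq 1$, every job completes by $i+1 \leq i+D$, so all deadlines are met and $F(S, \J\setminus\predJ) = 0$. Since the batches occupy pairwise disjoint unit intervals, the total speed at time $t$ is just $\Delta_{\lfloor t\rfloor}$ and the energy is $E(S) = \sum_i \Delta_i^{\alpha}$, giving $\texttt{OPT}(\J\setminus\predJ) \leq \sum_i \Delta_i^{\alpha}$. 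The analogous schedule applied to $\predJ\setminus\J$ yields $\texttt{OPT}(\predJ\setminus\J) \leq \sum_i (\Delta'_i)^{\alpha}$.

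Combining the two bounds with the identity from the first step, and using $D \geq 1$ in the final inequality,
\[
\max\{\texttt{OPT}(\J\setminus\predJ),\;\texttt{OPT}(\predJ\setminus\J)\} \;\leq\; \sum_i \bigl(\Delta_i^{\alpha} + (\Delta'_i)^{\alpha}\bigr) \;=\; \text{err}(w^{\text{real}}, w^{\text{pred}}) \;\leq\; D\cdot \text{err}(w^{\text{real}}, w^{\text{pred}}),
\]
which is the claim. The main (and only mild) obstacle is choosing a schedule whose energy is additive across batches: running each batch in its own unit interval sidesteps any convexity loss coming from simultaneously active batches, and the assumption $D \geq 1$ implicit in the uniform-deadline setting guarantees feasibility of this disjoint-interval construction. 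Everything else is elementary accounting.
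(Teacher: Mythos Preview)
Your proof is correct and takes a genuinely different route from the paper's. The paper upper bounds $\texttt{OPT}\bigl((\J\setminus\predJ)\cup(\predJ\setminus\J)\bigr)$ by the cost of the Average Rate heuristic, which spreads each batch $\Delta_j$ uniformly over $[j,j+D]$; because these $D$-length windows overlap, the paper then needs a chain of inequalities (bounding the number of simultaneously active batches by $D$, pulling out a $D^{\alpha}$ factor, etc.) to arrive at $D\cdot\text{err}$. Your construction instead packs each batch into its own unit interval $[i,i+1]$, so the intervals are disjoint and the energy is exactly $\sum_i \Delta_i^{\alpha}$ with no convexity loss. This is more elementary and in fact yields the sharper intermediate bound $\max\{\texttt{OPT}(\J\setminus\predJ),\texttt{OPT}(\predJ\setminus\J)\}\leq \text{err}$ before you invoke $D\geq 1$. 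Both arguments implicitly need $D\geq 1$ (in the paper it is hidden in the step $|\{j:s_j(t)\neq 0\}|\leq D$, which only makes sense when $D\geq 1$), and indeed the stated inequality fails for $D<1$, so your explicit flagging of this assumption is appropriate rather than a gap.
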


\begin{proof}
For convenience, we write $\Delta_i = ||w_i^{\text{real}} - w_i^{\text{pred}}||$. We can then write $(\J\setminus \predJ)\cup (\predJ \setminus \J)$ as the instance which, at each time step $i$, is composed of $\Delta_i$ unit size jobs with a common deadline $i+D$.

We now upper bound the optimal cost for $(\J\setminus \predJ)\cup (\predJ \setminus \J)$ by the cost obtained by the Average Rate heuristic (AVR) (first introduced in \cite{yao1995scheduling}). For each $j$, The AVR algorithm schedules uniformly the $\Delta_j$ units of work arriving at time $j$ over the next $D$ time steps. This is equivalent to setting the speed $s_j(t)$ for each workload $\Delta_j$ at time $t \in [ j,\ldots, j+D]$ to $\frac{\Delta_j}{D}$ and set $s_j(t) = 0$ everywhere else. For all $t\geq 0$, the machine then runs at total speed $\sum_{j} s_j(t)$.

Letting $E_{\text{AVR}}$ denote the total cost of the AVR heuristic, we get:

\allowdisplaybreaks{
\begin{align*}
     \max\{\texttt{OPT}(\J\setminus \predJ), \texttt{OPT}(\predJ \setminus \J)\}&
     \leq \OPT((\J\setminus \predJ)\cup (\predJ \setminus \J)) \\
     &\leq E_{\text{AVR}}((\J\setminus \predJ)\cup (\predJ \setminus \J))\\
     &= \sum_{t=1}^\infty \left(\sum_{j} \mathbf{1}_{s_j(t)\neq 0}s_j(t)\right)^{\alpha}\\
     & \leq \sum_{t=1}^\infty |\{j:s_j(t)\neq 0\}|^{\alpha}\left(\max_{j:s_j(t)\neq 0 }s_j(t)\right)^{\alpha}\\
      & \leq \sum_{t=1}^\infty D^{\alpha}\left(\max_{j:s_j(t)\neq 0 }s_j(t)\right)^{\alpha}\\
      & \leq \sum_{t=1}^\infty D^{\alpha}\sum_{j:s_j(t)\neq 0 }s_j(t)^{\alpha}\\
      & = \sum_{j} D^{\alpha}s_j(t)^{\alpha}\sum_t \mathbf{1}_{s_j(t)\neq 0}\\
      & \leq \sum_{j} D^{\alpha}s_j(t)^{\alpha}D\\
      &= \sum_{j}\Delta_j^{\alpha} D\\
      &= D \cdot\text{err}(w^{\text{real}}, w^{\text{pred}}),
\end{align*}
}

where the fourth and sixth inequalities are since by definition of the AVR algorithm, each workload $\Delta_j>0$ has only positive speed on time steps $[j,\ldots, j+D]$.
\end{proof}

\paragraph{Comparison of the error metrics for general objective functions.}  We illustrate here that   for a more general GESP problem, the error metric we define can be tighter than the one in \cite{BamasMRS20} (in the sense that there are instances $(\J, \predJ)$ and quality cost functions $F$ such that $\eta(\J, \predJ) << \text{err}(w^{\text{real}}, w^{\text{pred}})$) and that it may better adapt to the specific cost function under consideration. 

To illustrate this point, consider an instance where the prediction is the realization plus an additional workload of $k$ jobs that all arrive at time $0$, and consider the objective of minimizing total energy plus flow time. In this case, the error computed in (\ref{eq:bamas_metric}) is $k^{\alpha}$, whereas the error $\eta(\J, \predJ)$ we define is the optimal cost for the $k$ extra jobs. By using results from \cite{andrew2009optimal}, this is equal to $k^{\frac{2\alpha -1}{\alpha}}$ ($<< k^{\alpha}$ when $\alpha$ grows large). Hence our error metric is tighter in this case.

\subsection{Comparison with the theoretical guarantees in \cite{BamasMRS20}}
\label{app:CR_comparison}

 We compare below the theoretical guarantees in Theorem~\ref{thm:competitive_ratio}  and the ones shown in \cite{BamasMRS20} for the specific problem of energy minimization with \textit{uniform} deadlines. We note that we also generalize these results to the case of \textit{general} deadlines to obtain the first guarantee that smoothly degrades as a function of the prediction error in that setting. Note that for general deadlines, \cite{BamasMRS20} only obtain consistency and robustness, but not smoothness. 

\paragraph{Comparison in the case of uniform deadlines.} For convenience of the reader, we first recall below the guarantee proven in \cite{BamasMRS20}.

\begin{theorem}[Theorem 8 in \cite{BamasMRS20}]
For any given $\epsilon > 0$, algorithm LAS constructs, for the energy minimization with deadlines problem, a schedule of cost at most $\min\{(1 + \epsilon) \OPT + O((\frac{\alpha}{\epsilon})^{\alpha})\text{err}, O((\frac{\alpha}{\epsilon})^{\alpha})\OPT\}$, where 
\[
\text{err}(w^{\text{real}}, w^{\text{pred}}) := \sum_i ||w_i^{\text{real}} - w_i^{\text{pred}}||^{\alpha},
\]
\end{theorem}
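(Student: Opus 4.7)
The plan is to analyze the LAS algorithm of \cite{BamasMRS20} by viewing its speed schedule at each time $t$ as the sum of a prediction-following speed $s_{\text{pred}}(t)$, computed from the predicted workload $\hat{w}$ via an offline optimum (or a near-optimal $(1+\epsilon)$-scaled version of it), together with a robust correction speed $s_{\text{corr}}(t)$, which reacts to arrivals that deviate from the prediction. LAS combines these two components so that correctly predicted work is discharged at near-optimal speed while mispredicted work is processed at a robust but energy-costly speed.

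For the first bound in the minimum, I would bound each of the two components separately. The prediction-following component consumes energy at most $\OPT(\hat{w})$, which is close to $\OPT(w^{\text{real}})$ up to an additive term governed by $\text{err}$, since shifting the workload by $\Delta_i := |w_i^{\text{real}} - w_i^{\text{pred}}|$ can change the offline optimum only in proportion to the optimal cost of scheduling the $\Delta_i$'s. The correction component, which uses an AVR-style subroutine on the residual workload $\Delta_i$, incurs energy at most a $D$-dependent constant times $\sum_i \Delta_i^\alpha$ when each deficit is spread uniformly over its $D$-unit deadline window. Combining the two components via the Young-type inequality
\[
(a + b)^\alpha \le (1 + \epsilon)\, a^\alpha + O\!\left(\left(\tfrac{\alpha}{\epsilon}\right)^{\alpha-1}\right) b^\alpha,
\]
applied pointwise with $a = s_{\text{pred}}(t)$ and $b = s_{\text{corr}}(t)$ and then integrated over time, yields a total cost bounded by $(1+\epsilon) \OPT + O((\alpha/\epsilon)^\alpha)\cdot \text{err}$.

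For the second bound (pure robustness), I would show that regardless of how bad the prediction is, the total speed $s_{\text{pred}}(t) + s_{\text{corr}}(t)$ is dominated by a $O(\alpha/\epsilon)$ multiple of the AVR speed run on the true workload, by checking that any speed prescribed by LAS for a time $t$ can be charged to actual work that must be completed within the $D$-unit window. Because AVR is $2^\alpha$-competitive for uniform deadlines, raising speeds by an $O(\alpha/\epsilon)$ factor costs at most $O((\alpha/\epsilon)^\alpha)\cdot\OPT(w^{\text{real}})$ in energy, giving the second term inside the $\min$.

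The main obstacle would be calibrating the mixing parameter between $s_{\text{pred}}$ and $s_{\text{corr}}$ so that the Young-style inequality delivers the leading constant \emph{exactly} as $(1+\epsilon)$ rather than some looser factor, while keeping the multiplicative blow-up on $\text{err}$ at the advertised $O((\alpha/\epsilon)^\alpha)$ rate. This tuning couples the scaling factor used on $s_{\text{pred}}$, the uniform spreading rate of $s_{\text{corr}}$ over each deadline window, and the $L^\alpha$-geometry of the workload error, and must be done simultaneously so that both bounds in the $\min$ are achieved by the same algorithm and the same parameter choice.
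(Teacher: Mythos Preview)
This theorem is not proved in the present paper: it is quoted verbatim from \cite{BamasMRS20} purely for comparison with the authors' own results (see Appendix~\ref{app:CR_comparison}, where it is introduced with ``For convenience of the reader, we first recall below the guarantee proven in \cite{BamasMRS20}''). There is therefore no ``paper's own proof'' to compare your proposal against.

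That said, your sketch is a reasonable high-level reconstruction of the LAS analysis in \cite{BamasMRS20}: the algorithm there does run at a speed that is a sum of a (scaled) prediction-following component and an AVR-style correction on the residual workload, and the smoothness bound is obtained via a convexity/Young-type splitting of $(a+b)^\alpha$ with the mixing parameter tuned to $\epsilon$. One point to be careful about is your robustness argument: you claim the total LAS speed is dominated by an $O(\alpha/\epsilon)$ multiple of the AVR speed on the \emph{true} workload, but the prediction-following component $s_{\text{pred}}(t)$ is computed from $\hat{w}$, not from $w^{\text{real}}$, and can be arbitrarily large at times when no real work is present. The actual LAS robustification in \cite{BamasMRS20} handles this by capping or convolving the predicted speed against what has actually arrived (their ``convolution technique'' mentioned in Appendix~\ref{app:techniques_discussion}), rather than by a direct speed-domination argument; your sketch elides this mechanism, which is precisely the nontrivial part of making both bounds hold for a single parameter choice.
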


which is a similar dependency in $\epsilon$ as the one proved in Theorem~\ref{thm:competitive_ratio}. In particular, for all $\epsilon>0$, Algorithm LAS achieves a consistency of $(1+\epsilon)$ for a robustness factor of $O((\frac{\alpha}{\epsilon})^{\alpha})$. On the other hand, when running Algorithm~\ref{alg-no-ass} with parameter $\lambda = (\frac{\epsilon}{C2^{\alpha}})^{\alpha}$ and the \textsc{Average Rate} heuristic \cite{yao1995scheduling} as \textsc{OnlineAlG} (which was proven to have a $2^{\alpha}$ competitive ratio in \cite{BamasMRS20}), we obtain, by plugging $\lambda = (\frac{\epsilon}{C2^{\alpha}})^{\alpha}$ in the bounds provided in Corollary~\ref{cor:consistency}, a consistency of $(1+\epsilon)$ for a robustness factor of $O(\frac{4^{\alpha^2}}{\epsilon^{\alpha-1}})$.

\subsection{Comparison with the algorithm (LAS) in \cite{BamasMRS20}}
\label{app:techniques_discussion}

In this section, we discuss the technical differences with the algorithm (LAS) proposed in \cite{BamasMRS20} for the energy with deadlines problem. We first note that \cite{BamasMRS20} only shows smoothness, consistency and robustness in the uniform deadline case, where all jobs must be completed within $D$ time steps from their release time. For the general deadline case, \cite{BamasMRS20} presents a more complicated algorithm and only show consistency and robustness. The authors note that "one can also define smooth algorithms for general deadlines as [they] did in the
uniform case. However, the prediction model and the measure of error quickly get complex and
notation heavy". On the contrary, Algorithm~\ref{alg-no-ass} remains simple, captures the general deadline case, and is also endowed with smoothness guarantees. We now discuss more specifically the technical differences. 

\paragraph{Robustification technique.} \cite{BamasMRS20} uses a convolution technique for the uniform deadline case, and a more complicated procedure that separates each interval into a base part and an auxiliary part for the general deadline case. On the other hand, our robustification technique is based on a simpler two-phase algorithm.

We now give some intuition about why a direct generalisation of the techniques in \cite{BamasMRS20}  to general objective functions does not seem straightforward. The main technical difficulty is that in \cite{BamasMRS20}, each job $j$ must be completed before its deadline $d_j$, which is revealed to  the decision maker at the time the job arrives and is used by the algorithm. For a general objective function, we do not have a deadline ; however, one could think about using the total completion time $c_j$ of each job instead. The issue is that $c_j$ may depend on all future job arrivals and is not known at the time the job arrives, hence it cannot be used directly by the algorithm. 

To illustrate this point, consider the objective of minimizing total energy plus flow time with $\alpha =2$. Consider two instances, where the first one has $1$ job arriving at time $0$ and the second one has $1$ job arriving at time $0$ and $n-1$ jobs arriving at time $\frac{1}{\sqrt{n}}$. In the first case, the optimal is to complete the first job in $1$ unit of time, whereas it is completed in $\frac{1}{\sqrt{n}}$ unit of time in the second case. Furthermore, this can be only deduced  after the $n-1$ other jobs have arrived. Hence, the completion times for the first job significantly differ in the two cases. Since it is not immediate how to generalize the technique in \cite{BamasMRS20} without knowing $c_j$ at the time each job $j$ arrives, this motivated our choice of a different robustification technique. 

\paragraph{Smoothness and consistency technique.} To obtain smoothness and consistency guarantees, we use a similar technique as in \cite{BamasMRS20} (summing the speeds obtained by computing an offline schedule for the predicted jobs and an online schedule for the extra jobs), with two main differences: \\
(1) In \cite{BamasMRS20}, the extra jobs arriving at each time $i$ are scheduled uniformly over the next $D$ time units. 
On the other hand, our algorithm computes the speeds for all extra jobs by following an auxiliary online algorithm given as an input to the decision maker. 
In fact, the technique from \cite{BamasMRS20} can be interpreted as a special case of our algorithm, where the auxiliary algorithm is the \textsc{AVERAGE RATE} heuristic \cite{yao1995scheduling}. \\
(2) The offline schedule we compute is conceptually identical to the one used  in \cite{BamasMRS20}, however, our online schedule differs, as it needs to integrate two different types of extra jobs: (1) the extra jobs that arrive during the second phase of the algorithm ($t\geq \tl$), and (2) the jobs that were not finished during the first phase of the algorithm. \cite{BamasMRS20} only needs to handle the first type of extra jobs. This results in a different analysis.

\end{document}